\def\@abssec#1{\vspace{.05in}\footnotesize \parindent .2in 
{\bf #1. }\ignorespaces} 
\newtheorem{theorem}{Theorem}[section]
\newtheorem{lemma}[theorem]{Lemma}
\newtheorem{proposition}[theorem]{Proposition}
\newtheorem{remark}[theorem]{Remark}
\newtheorem{assumptions}[theorem]{Assumption}
\def\ds{\displaystyle}
\def \Rm {\mathbb R}
\def \Nm {\mathbb N}
\def\un{{\mathbbmss{1}}}
\newcommand{\be}{\begin{equation}}
\newcommand{\ee}{\end{equation}}
\newcommand{\bea}{\begin{eqnarray}}
\newcommand{\eea}{\end{eqnarray}}
\newcommand{\bee}{\begin{eqnarray*}}
\newcommand{\eee}{\end{eqnarray*}}
\def\fref#1{{\rm (\ref{#1})}}
\newcommand{\calQ}{\mathcal Q}
\newcommand{\calH}{\mathcal H}
\newcommand{\calL}{\mathcal L}
\newcommand{\calM}{\mathcal M}
\newcommand{\calE}{\mathcal E}
\newcommand{\calN}{\mathcal N}
\newcommand{\calS}{\mathcal S}
\newcommand{\calJ}{\mathcal J}
\newcommand{\calA}{\mathcal A}
\newcommand{\cout}[1]{}
\DeclareMathOperator{\Tr}{Tr}
\DeclarePairedDelimiter\bra{\langle}{\rvert}
\DeclarePairedDelimiter\ket{\lvert}{\rangle}
\begin{document}
\title{On local quantum Gibbs states}
\author{Romain  Duboscq \footnote{Romain.Duboscq@math.univ-tlse.fr}}
 \affil{Institut de Math\'ematiques de Toulouse ; UMR5219\\Universit\'e de Toulouse ; CNRS\\INSA, F-31077 Toulouse, France}
 \author{Olivier Pinaud \footnote{pinaud@math.colostate.edu}}
 \affil{Department of Mathematics, Colorado State University\\ Fort Collins CO, 80523}

\maketitle

\begin{abstract}
We address in this work the problem of minimizing quantum entropies under local constraints. We suppose macroscopic quantities such as the particle density, current, and kinetic energy are fixed at each point of $\Rm^d$, and look for a density operator over $L^2(\Rm^d)$ minimizing an entropy functional. Such minimizers are referred to as a local Gibbs states. This setting is in constrast with the classical problem of prescribing global constraints, where the total number of particles, total current, and total energy in the system are fixed. The question arises for instance in the derivation of fluid models from quantum dynamics. We prove, under fairly general conditions, that the entropy admits a unique constrained minimizer. Due to a lack of compactness, the main difficulty in the proof is to show that limits of minimizing sequences satisfy the local energy constraint. We tackle this issue by introducing a simpler auxiliary minimization problem and by using a monotonicity argument involving the entropy.
\end{abstract}
\maketitle

\section{Introduction} %\label{sec:intro}

This work is concerned with the study of quantum entropies of the form
$$
S(\varrho)=\Tr\big( \beta(\varrho)\big),
$$
where $\varrho$ is a density operator, namely a self-adjoint, trace class, positive operator on an infinite dimensional Hilbert space, $\Tr(\cdot)$ denotes operator trace, and $\beta$ is a convex function (note that we choose here the opposite of the convention traditionally used in the physics literature, for instance our definition of the von Neumann entropy is $\Tr\big( \varrho \log(\varrho)\big)$ and not its opposite). Quantum entropies have applications primarily in quantum information theory and quantum statistical physics, see e.g. \cite{ohya} for a review.

Our motivation here comes from quantum statistical physics and the construction of quantum statistical equilibria. These are important for instance in the description of reservoirs in the study of decohence in bipartite quantum systems, and in the derivation of fluids models, both classical and quantum. More precisely, we are interested in the following problem: given $S(\varrho)$ for an appropriate function $\beta$, we look for minimizers of $S(\varrho)$ under constraints of \textit{local} density, current, and kinetic energy. When the underlying Hilbert space is $L^2(\Rm^d)$, $d \geq 1$, these local quantities are informally defined as follows: if $\{\rho_p\}_{p \in \Nm}$ and $\{\phi_p\}_{p \in \Nm}$ denote the eigenvalues and eigenfunctions of a density operator $\varrho$, then
\be \label{consint}
\left\{
\begin{array}{ll}
\ds n[\varrho]=\sum_{p \in \Nm} \rho_p |\phi_p|^2, \qquad &\textrm{local density}\\
\ds u[\varrho] = \sum_{p\in\mathbb{N}} \rho_p \Im\left( \phi_p^* \nabla \phi_p\right),\qquad &\textrm{local current}\\ 
\ds k[\varrho]=\sum_{p \in \Nm} \rho_p |\nabla \phi_p|^2, \qquad &\textrm{local kinetic energy}.
\end{array}
\right.
\ee
For given functions $\{n_0,u_0,k_0\}$ in appropriate functional spaces, the problem consists in looking for minimizers of $S(\varrho)$ under the constraints that $\{n[\varrho],u[\varrho],k[\varrho]\}=\{n_0,u_0,k_0\}$.

Such quantum statistical equilibria arise in the work of Nachtergaele and Yau in \cite{nachter} in their derivation of the Euler equations from quantum dynamics. Therein, $S$ is the von Neumann entropy, and the minimizers are referred to as \textit{local Gibbs states}, which is the terminology we adopt here. Nachtergaele and Yau prove that a density operator $\varrho_t$ solution to the quantum Liouville equation
\be \label{liou}
i \partial_t \varrho_t=[H,\varrho_t], \qquad [H,\varrho_t]=H\varrho_t-\varrho_t H, \qquad H=-\Delta+V,
\ee
with as initial condition a local Gibbs state with constraints $\{n_0,u_0,k_0\}$, converges in an appropriate limit to a local Gibbs state with constraints $\{n_0(t),u_0(t),k_0(t)\}$, where $\{n_0(t),u_0(t),k_0(t)\}$ are the solutions to the Euler equations with initial condition $\{n_0,u_0,k_0\}$. In \cite{nachter}, the authors suggest that such local Gibbs states can be obtained by choosing appropriate Lagrange parameters, but there is no proof that this is indeed possible. The problem is actually not trivial, and our goal here is to establish that indeed the local Gibbs are well-defined and unique for a large class of entropy functions $\beta$ and adequate assumptions on the constraints. 

Local Gibbs states are also central in the work of Degond and Ringhofer on the derivation of quantum fluid models. Their theory consists in transposing to the quantum picture the moment closure method used in the derivation of classical fluid models. They consider the quantum Liouville equation \fref{liou} augmented with a collision operator $\calQ(\varrho)$ that drives the system to an equilibrium:
\be \label{liou2}
i \partial_t \varrho_t=[H,\varrho_t]+i \calQ(\varrho_t).
\ee
Note that in \cite{nachter}, the system is assumed to be initially in a statistical equilibrium, while in \fref{liou2} above the system converges in the long-time limit to such equilibrium. After deriving an infinite moment hierarchy from \fref{liou2}, Degond and coauthors close the system by introducing local Gibbs states and obtain various quantum fluids models such as the quantum Euler equations, quantum Navier-Stokes, or the quantum drift-diffusion model. See \cite{DR, QHD-review} for more details on this topic.

Our main contribution in this work is to establish that the entropy $S(\varrho)$ admits a unique minimizer under fairly general conditions on $\beta$ and on the constraints $\{n_0,u_0,k_0\}$. We work in the one-particle picture in the space $L^2(\Rm^d)$, for any $d \geq 1$, and plan in future work to extend our results to the many-body problem in the fermionic (or bosonic) Fock space. The latter is the actual problem that arises in \cite{nachter}, and is also of interest in the density functional theory at non-zero temperature, see e.g. \cite{thermalDFT}.

The main difficult in the proof consists in recovering the local energy constraint from limits of minimizing sequences of the entropy. It can be shown that such sequences converge to a density operator $\varrho_\star$ with finite energy, and as a consequence that $k[\varrho_\star]$ is well-defined. The crucial part is then to show that $\varrho_\star$ satisfies the energy constraint, i.e. $k[\varrho_\star]=k_0$. There is no sufficient compactness to directly pass to the limit in the local energy, and the sole straightforward information is that $\|k[\varrho_\star]\|_{L^1} \leq \|k_0\|_{L^1}$. We proved in \cite{DP-JMPA} that the equality $k[\varrho_\star]=k_0$ actually holds but the technique is limited to one-dimensional bounded domains. We introduce here a new argument allowing us to extend the result of \cite{DP-JMPA} to $\Rm^d$ for arbitrary $d \geq 1$. The key novel ingredient is to define an auxiliairy, simpler minimization problem with \textit{global} constraints instead of \textit{local} constraints. Namely, $\|n[\varrho]\|_{L^1}$, $\|u[\varrho]\|_{L^1}$ and  $\|k[\varrho]\|_{L^1}$ are prescribed instead of $n[\varrho]$, $u[\varrho]$, $k[\varrho]$. Such a minimization problem is shown to have a unique solution by introducing what we call \textit{generalized Gibbs states}, i.e. minimizers of the quantum free energy
$$
F_T(\varrho)=E(\varrho)+T S(\varrho)
$$
under a global density constraint, i.e. $\Tr(\varrho)$ is fixed. Above, $T>0$ is temperature, and $E(\varrho)=\Tr \big(\sqrt{H} \varrho \sqrt{H}\big)$ is the total energy with $H=-\Delta +V$ for an appropriate potential $V$. We prove the key result that such generalized Gibbs states have strictly monotone total energy and entropy with respect to the temperature. This eventually allows us to construct minimizers of the auxiliary problem and show the crucial property that their entropy is strictly monotone in the global kinetic energy constraint on $\|k[\varrho]\|_{L^1}$. These results are standard and intuitive in the context of classical thermodynamics, and are of independent interest for the generalized Gibbs states defined here. In some cases, we are also able to quantify precisely the dependency of the kinetic energy on the temperature. Using the strict monotonicity of the entropy, we are then able to go back to the original problem with local constraints and prove our result.

% Once the monotonicity of the entropy is established, the proof of compactness, or equivalently that $k[\varrho_\star]=k_0$ goes as follows: We start by showing that the equality $\|k[\varrho_\star]\|_{L^1} =\|k_0\|_{L^1}$ is actually enough to obtain that $k[\varrho_\star]=k_0$; the argument is based on fine compactness properties of trace class operators. The main step is then to prove that $\|k[\varrho_\star]\|_{L^1} < \|k_0\|_{L^1}$ is not possible. This follows by contradiction by using the monotonicity property and by comparing the entropy of the minimizing sequences and their limit.

The article is structured as follows: in Section \ref{sec:resu}, we introduce the functional setting, the problem,  and state our results. We prove two theorems: in the first one, we obtain existence and uniqueness of minimizers of the entropy under appropriate assumptions on the entropy and the constraints; in the second theorem, we show these assumptions are verified under mild conditions. The section is concluded by an overview of the proof. Section \ref{proofth1} is devoted to the proof of the first theorem, and Section \ref{proofth2} to that of the second theorem.

\paragraph{Acknowledgement.} OP is supported by NSF CAREER grant DMS-1452349.

\section{Main results} \label{sec:resu}

We start with some preliminaries. 
\subsection{Functional setting}
%\paragraph{Functional setting.}
 We denote by $L^r(\Rm^d)$, $d\geq 1$, $r\in [1,\infty]$, the usual Lebesgue spaces of complex-valued functions on $\Rm^d$, and by $H^{k}(\Rm^d)$ for $k \geq 1$ the standard Sobolev spaces. The symbol $\langle\cdot,\cdot\rangle$ is the Hermitian product on $L^2(\Rm^d)$ with the convention $\langle f,g\rangle=\int_{\Rm^d} f^* g dx$. The free Hamiltonian $-\Delta$, equipped with the domain $H^2(\Rm^d)$, is denoted by $H_0$. Moreover, $\calL(L^2(\Rm^d))$ is the space of bounded operators on $L^2(\Rm^d)$, and  $\calJ_1 \equiv \calJ_1(L^2(\Rm^d))$ is the space of trace class operators on $L^2(\Rm^d)$. In the sequel, we will refer to a density operator as a self-adjoint, trace class, positive operator on $L^2(\Rm^d)$. For $\varrho^*$ the adjoint of $\varrho$ and $|\varrho|=\sqrt{\varrho^* \varrho}$, we introduce the following space:
$$\calE=\left\{\varrho\in \calJ_1:\, \overline{\sqrt{H_0}|\varrho|\sqrt{H_0}}\in \calJ_1\right\},$$
where $\overline{\sqrt{H_0}|\varrho|\sqrt{H_0}}$ denotes the extension of the operator $\sqrt{H_0}|\varrho|\sqrt{H_0}$ to $L^2(\Rm^d)$. The domain of $\sqrt{H_0}$ is naturally $H^1(\Rm^d)$. We will  drop the extension sign in the sequel to ease notation.  The space $\calE$ is a Banach space when endowed with the norm
$$\|\varrho\|_{\calE}=\Tr \big(|\varrho| \big)+\Tr\big(\sqrt{H_0}|\varrho|\sqrt{H_0}\big),$$
where $\Tr(\cdot)$ denotes operator trace. 
The energy space is the following closed convex subspace of $\calE$:
$$\calE^+=\left\{\varrho\in \calE:\, \varrho\geq 0\right\}.$$
The eigenvalues $\{\rho_p\}_{p \in \Nm}$ of a density operator are counted with multiplicity, and form a nonincreasing sequence of positive numbers, converging to zero if the sequence is infinite. We will keep the notation $\{\rho_p\}_{p \in \Nm}$ for simplicity for the eigenvalues of a finite-rank density operator, with the convention that $\rho_p=0$ for $p>N$ for some $N$. 

%\paragraph{Constraints.} 
\subsection{The constraints} We  first note that the quantities introduced in \fref{consint} are well-defined according to the following remark. 
\begin{remark} \label{rem1} Let $\varrho \in \calE^+$ with eigenvalues $\{\rho_p\}_{p \in \Nm}$ and eigenvectors $\{\phi_p\}_{p \in \Nm}$. Then all series defined in \fref{consint} converge in $L^1(\Rm^d)$ and almost everywhere. Moreover,
$$
\Tr(\varrho)= \| n[\varrho]\|_{L^1}, \qquad \Tr\big(\sqrt{H_0}\,\varrho\sqrt{H_0}\big)=\| k[\varrho]\|_{L^1}.
$$
Note also that $\varrho \in \calE^+$ implies that $\nabla \sqrt{n[\varrho]} \in (L^2(\Rm^d))^d$ according to the inequality
$$
\|\nabla \sqrt{n[\varrho]}\|^2_{L^2} \leq \|\varrho\|_{\calE}.
$$ 
\end{remark}

\bigskip

The set of admissible local constraints is defined by
\begin{align*}
\calM = \left\{(n,u,k)\in L_+^1(\mathbb{R}^d)\times (L^1(\mathbb{R}^d))^d\times L_+^1(\mathbb{R}^d):\; \sqrt{n}\in H^1(\mathbb{R}^d)\right.
\\ \left. \qquad  \qquad\textrm{and}\quad (n,u,k) = (n[\varrho],u[\varrho],k[\varrho])\quad\textrm{for at least one }\varrho\in\calE^+ \right\}.
\end{align*}
Above, $L^1_+(\mathbb{R}^d) = \{ \varphi\in L^1(\mathbb{R}^d):\; \varphi\geq 0\; a.e.\}$. In other terms, $\calM$ consists of the set of functions $(n,u,k)$ that are the local density, current and kinetic energy of at least one density operator with finite energy. In the context of Degond-Ringhofer theory, the constraints are always in $\calM$ as originating from the solution $\varrho_t$ to the quantum Liouville equation \fref{liou2}. To the best of our knowledge, the characterization of $\calM$ remains to be done. We have though the following straightforward and helpful remark.

\begin{remark} \label{rem2} Constraints in the admissible set $\calM$ satisfy some compatibility conditions. 
Let indeed $\varrho\in\calE^+$. Then, 
\begin{equation} \label{ineqk}
k[\varrho]\geq  \frac{|u[\varrho]|^2}{n[\varrho]}+\left|\nabla \sqrt{n[\varrho]}\right|^2.
\end{equation}
To see this, we remark that 
\begin{align*} %\label{nabn}
\frac{1}{2}\nabla n[\varrho] = \sum_{p\in\mathbb{N}} \rho_p \Re\left( \phi_p^* \nabla \phi_p\right)\quad\textrm{and}\quad u[\varrho] = \sum_{p\in\mathbb{N}} \rho_p \Im\left( \phi_p^* \nabla \phi_p\right),
\end{align*}
where both series converge in $L^1(\Rm^d)$ and a.e. according to Remark \ref{rem1},
and, we deduce that
\begin{equation*}
\frac{1}{2}\nabla n[\varrho] + i u[\varrho] = \sum_{p \in \Nm} \rho_p \phi_p ^* \nabla \phi_p.
\end{equation*}
It follows, by the Cauchy-Schwarz inequality, that
\begin{align*}
\frac{1}{4}|\nabla n[\varrho]|^2+ |u[\varrho]|^2  &= \left| \frac{1}{2}\nabla n[\varrho] + i u[\varrho]\right|^2 = \left| \sum_{p \in \Nm} \rho_p \phi_p ^* \nabla \phi_p\right|^2
\\ &\leq \left(\sum_{p \in \Nm}\rho_p |\phi_p|^2\right)\left(\sum_{p \in \Nm}\rho_p |\nabla \phi_p|^2\right) = n[\varrho]k[\varrho],
\end{align*}
which yields \fref{ineqk}.
\end{remark}

% \begin{equation*}
%  k[\varrho] : = \sum_{j\in\mathbb{N}} \rho_j[\varrho] |\nabla \phi_j[\varrho]|^2 = - \sum_{j = 1}^d n[\partial_{x_j} \varrho \partial_{x_j}] = - n[\nabla\cdot\varrho\nabla].
% \end{equation*}
\bigskip

For $(n_0,u_0,k_0) \in \calM$, the feasible set is then given by
\begin{equation*}
\calA(n_0,u_0,k_0) = \left\{\varrho\in\calE^+:\; n[\varrho] = n_0,\; u[\varrho] = u_0\;\textrm{and}\; k[\varrho] = k_0 \right\}.
\end{equation*}
The set $\calA(n_0,u_0,k_0)$ is not empty by construction since  $(n_0,u_0,k_0)$ is admissible. We will use the notation $\bar{n}=\|n_0\|_{L^1}$ in the rest of the paper.

%\paragraph{Results.}
\subsection{Results}  We are now in position to state our main results. 
We recall that the entropy of a density operator in $\calE$ is denoted by
$$
S(\varrho)=\Tr \big( \beta(\varrho)\big),
$$
where $\beta$ satisfies the following assumption.
\begin{assumptions} \label{A0} (Convexity) The function $\beta \in C^0([0,\bar{n}])$ is strictly convex and verifies $\beta(0)=0$.
\end{assumptions} 

It is a standard result, see e.g. \cite{wehrl} section II, that the strict convexity of $\beta$ implies that of $S(\varrho)$. We present our results in two steps. First of all, we prove in Theorem \ref{mainth} that $S$ admits a unique minimizer in $\calA(n_0,u_0,k_0)$ under a set of assumptions on $\beta$ and $\calA(n_0,u_0,k_0)$. Assumptions \ref{A1}-\ref{A2} below are natural in that we expect the entropy to be bounded below and lower semi-continuous. Assumption \ref{A4} is the crucial ingredient allowing us to overcome the lack of compactness of the minimizing sequences. In Theorem \ref{mainth2}, we provide a set of conditions under which Assumptions \ref{A1}-\ref{A2}-\ref{A4} hold.

We state first the next two assumptions:

\begin{assumptions} \label{A1} (Boundedness from below) There exists a subset $\calM_0$ of $\calM$ such that, for $(n_0,u_0,k_0) \in \calM_0$, $S(\varrho)$ is well-defined on $\calA(n_0,u_0,k_0)$ (possibly infinite for some $\varrho$), and such that
  $$
    -\infty< \inf_{\calA(n_0,u_0,k_0)} S.
  $$
\end{assumptions}

\begin{assumptions} \label{A2} (Lower semi-continuity) Let $\{\varrho_k\}_{k \in \Nm}$ be a sequence in $\calA(n_0,u_0,k_0)$ such that $\varrho_k \to \varrho$ strongly in $\calJ_1$ as $k \to +\infty$. Then, 
  $$
  S(\varrho) \leq \liminf_{k \to +\infty} S(\varrho_k).
  $$
\end{assumptions}

In the next assumption, we suppose that the prescribed density $n_0$ decays sufficiently fast at the infinity. This will be needed in order to state Assumption \ref{A4}.

\begin{assumptions} \label{A3}(Confinement) There exists a nonnegative potential $V \in L^2_{\rm{loc}}(\Rm^d)$, with $V \to +\infty$ as $|x| \to +\infty$,  such that $n_0 V \in L^1(\Rm^d)$.
\end{assumptions}

For a potential $V$ as in Assumption \ref{A3}, we consider the self-adjoint operator $H=H_0+V$ defined on an appropriate domain $D(H)$, and introduce the total energy  
$$E(\varrho)=\Tr \big(\sqrt{H} \varrho \sqrt{H}\big).$$ 
Note that $E(\varrho)$ is finite when $\varrho \in \calA(n_0,u_0,k_0)$ and $n_0 V \in L^1(\Rm^d)$. The spectrum of $H$ is purely discrete and its ground state $\{\lambda_0,\phi_0\}$ is non-degenerate, see e.g. Theorems XIII.47 and XIII.67 in  \cite{RS-80-4}.
It can moreover easily be checked that for any $\varrho \in \calE^+$  with $n[\varrho] V \in L^1(\Rm^d)$, we have
$$
E(\varrho)=\Tr \big(\sqrt{H_0} \varrho \sqrt{H_0}\big) + \|n[\varrho] V\|_{L^1}.
$$
The introduction of $V$ is not necessary when the problem is posed on bounded domains of $\Rm^d$ since the free Hamiltonian $H_0$ has then a purely discrete spectrum (under appropriate boundary conditions).  We will need the following set of density operators with finite energy and fixed total trace: for $a>0$,
\begin{equation*}
\mathcal{S}(a) : = \left\{\varrho \in \calE^+; \; \Tr\left(\varrho\right)=a\right\}.
\end{equation*}
We state then the
\begin{assumptions} \label{A4}(Monotonicity w.r.t. temperature) There exists $T_c\geq 0$, such that for each $T>T_c$, the free energy $F_T(\varrho)=E(\varrho)+ T S(\varrho)$ admits a unique minimizer $\varrho_T$ on $\calS(\bar n)$ such that $u[\varrho_T]=0$,  $T \mapsto E(\varrho_T)$ (resp. $T \mapsto S(\varrho_T)$) is continuous strictly increasing (resp. decreasing), with $\lim_{T\to +\infty }E(\varrho_T)=+\infty$, and $\lim_{T\to T_c }E(\varrho_T)= \lambda_0 \, \bar n $, for $\lambda_0$ the smallest eigenvalue of $H$.
\end{assumptions}

A few comments are in order. Assumption \ref{A4} is used in the construction of minimizers of the entropy under \textit{global} constraints of density, current and energy, i.e., $\|n[\varrho]\|_{L^1}$, $\|u[\varrho]\|_{L^1}$ and  $\|k[\varrho]\|_{L^1}$ are prescribed instead of $n[\varrho]$, $u[\varrho]$, $k[\varrho]$. Such minimizers are the novel and key ingredient in obtaining sufficient  compactness to recover the local energy constraint. The monotonicity of the energy trivially holds in the context of classical thermodynamics for the Boltzmann entropy for instance, where the energy is linear in $T$. In the quantum case, it is not difficult to show that the energy of the usual Gibbs states
$$
\varrho_T= e^{-\frac{H}{T}} / Z_T, \qquad Z_T= \bar{n}^{-1} \Tr (e^{-\frac{H}{T}}),
$$
obtained for the Boltzmann entropy, is strictly increasing. For more general entropies, we will refer to the minimizers of $F_T$ as \textit{generalized Gibbs states}. They take the form $\varrho_T=\xi((H+\mu_T)/T)$, where $\xi(x)=(\beta')^{-1}(-x)$ and $\mu_T$ is the chemical potential ensuring the constraint $\Tr(\varrho_T)=\bar n$ is satisfied. It is only implicitly defined, while we have explicitly $\mu_T= T \log Z_T$ for Gibbs states. The implicit nature of $\mu_T $ makes the analysis more difficult, in particular that of the limit $T \to +\infty$ which requires some care. When the generalized Gibbs state has finite rank, which occurs if $\beta'(0)$ is finite, we will in particular need to resort to the semi-classical form of the Lieb-Thirring inequality (which is then an equality) to obtain that the energy tends to the infinity as $T \to +\infty$. The limit as $T \to T_c$ yields a minimizer of the energy alone. The introduction of a critical temperature $T_c$ is necessary since the energy is actually constant for $T \in (0,T_c]$ under some conditions on $\beta'$ when the generalized Gibbs state has finite rank.

\medskip

Our first result is then the following.

\begin{theorem} \label{mainth} Under Assumptions \ref{A0}-\ref{A1}-\ref{A2}-\ref{A3}-\ref{A4}, the entropy $S$ admits a unique minimizer in $\calA(n_0,u_0,k_0)$.
\end{theorem}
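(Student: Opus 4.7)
The plan is to follow the direct method of calculus of variations: uniqueness will come from strict convexity, existence from extracting a convergent minimizing sequence, and the main obstacle will be passing to the limit in the local kinetic-energy constraint, which is precisely what Assumption \ref{A4} is engineered to overcome.

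Uniqueness is immediate. Assumption \ref{A0} makes $\beta$ strictly convex, so Klein's inequality gives strict convexity of $\varrho\mapsto S(\varrho)$ on the convex set $\calA(n_0,u_0,k_0)$; two distinct minimizers would produce a midpoint in $\calA$ of strictly smaller entropy. For existence I would take a minimizing sequence $\{\varrho_k\}\subset\calA(n_0,u_0,k_0)$. By Remark \ref{rem1} the constraints themselves enforce $\Tr(\varrho_k)=\bar n$ and $\Tr(\sqrt{H_0}\varrho_k\sqrt{H_0})=\|k_0\|_{L^1}$ uniformly, so $\{\varrho_k\}$ is bounded in $\calE$; by Assumption \ref{A3}, $\int V\, n[\varrho_k]\,dx=\int V n_0\,dx$ is also bounded uniformly in $k$, and the blow-up of $V$ at infinity yields tightness of the density. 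Weak-$*$ compactness in $\calJ_1$ combined with trace continuity for positive operators (a Gr\"umm-type argument) then gives, along a subsequence, strong $\calJ_1$-convergence $\varrho_k\to\varrho_\star$ with $\varrho_\star\in\calE^+$. Assumption \ref{A2} yields $S(\varrho_\star)\le\inf_\calA S$. The $\calJ_1$-convergence transfers the density constraint $n[\varrho_\star]=n_0$, and the additional $\calE$-bound on the series defining the current transfers the current constraint $u[\varrho_\star]=u_0$. A Fatou argument applied termwise to the series for $k[\varrho]$ only delivers the pointwise inequality $k[\varrho_\star]\le k_0$ almost everywhere.

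The main obstacle is upgrading this to equality; since $k_0-k[\varrho_\star]\ge 0$, this is equivalent to $\|k[\varrho_\star]\|_{L^1}=\|k_0\|_{L^1}$. Suppose for contradiction that $K_\star:=\|k[\varrho_\star]\|_{L^1}<K_0:=\|k_0\|_{L^1}$. Here Assumption \ref{A4} enters via an auxiliary problem: minimize $S$ over $\calS(\bar n)$ subject only to the global bound $\|k[\varrho]\|_{L^1}\le K$, with value $\sigma(K)$. The generalized Gibbs states $\varrho_T$ from Assumption \ref{A4}, together with the strict monotonicities of $T\mapsto E(\varrho_T)$ and $T\mapsto S(\varrho_T)$ and the decomposition $E(\varrho_T)=\|k[\varrho_T]\|_{L^1}+\int V n[\varrho_T]\,dx$, should identify $\sigma(K)$ as being attained by a unique $\varrho_T$ and, after some quantitative analysis of the $T$-dependence of the kinetic-energy integral, show that $K\mapsto\sigma(K)$ is strictly decreasing. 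In particular $\sigma(K_\star)>\sigma(K_0)$. Since the local feasible set sits inside the global one at $K_0$ we get $\inf_\calA S\ge\sigma(K_0)$, and since $\varrho_\star\in\calS(\bar n)$ with $\|k[\varrho_\star]\|_{L^1}=K_\star$ we get $S(\varrho_\star)\ge\sigma(K_\star)$.

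The hard part is closing the contradiction, i.e.\ producing a competitor in $\calA(n_0,u_0,k_0)$ whose entropy lies strictly below $\sigma(K_\star)$; combined with $S(\varrho_\star)\ge\sigma(K_\star)$ and $S(\varrho_\star)\le\inf_\calA S$, the inequality $\inf_\calA S<\sigma(K_\star)\le S(\varrho_\star)\le\inf_\calA S$ would be the desired contradiction. Such a competitor should be constructed by mixing $\varrho_\star$ with a small generalized Gibbs correction, tuned so as to restore the missing local kinetic energy $k_0-k[\varrho_\star]$ without destroying the density and current constraints --- this is where the quantitative, rather than merely qualitative, dependence of the generalized Gibbs states on the temperature in Assumption \ref{A4} becomes indispensable. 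Making the correction respect all three local constraints while genuinely lowering the entropy below $\sigma(K_\star)$ is the single most technical step of the proof and is what I expect to consume most of Section \ref{proofth1}.
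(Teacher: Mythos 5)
Your overall architecture matches the paper's: uniqueness from strict convexity, existence by the direct method, and an auxiliary global problem solved by generalized Gibbs states whose monotonicity in $T$ (Assumption \ref{A4}) supplies the missing compactness for the kinetic-energy constraint. However, there are two genuine gaps. First, your reduction of ``$\|k[\varrho_\star]\|_{L^1}=\|k_0\|_{L^1}$ implies $k[\varrho_\star]=k_0$'' rests on a claimed pointwise inequality $k[\varrho_\star]\le k_0$ a.e.\ obtained by ``termwise Fatou''; this is not justified, since the eigenfunctions of $\varrho_m$ and their gradients do not converge in a way that permits a termwise argument, and the local kinetic energy is not pointwise lower semicontinuous along such sequences. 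The paper only extracts $\|k[\varrho_\star]\|_{L^1}\le\|k_0\|_{L^1}$ from Lemma \ref{lem:convcalE}, and then upgrades norm equality to $k[\varrho_m]\to k[\varrho_\star]$ in $L^1$ via Theorem \ref{thm:AddH} (weak convergence plus convergence of the $\calJ_1$-norm of $\sqrt{H_0}\varrho_m\sqrt{H_0}$ implies strong $\calJ_1$ convergence). Your step should be replaced by that operator-theoretic argument.

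Second, and more seriously, your contradiction does not close, and you acknowledge this: you correctly observe that $\sigma(K_\star)\le S(\varrho_\star)\le\inf_{\calA}S$ and $\inf_{\calA}S\ge\sigma(K_0)$ are mutually consistent, and you defer the resolution to an unconstructed ``competitor'' in $\calA(n_0,u_0,k_0)$ obtained by locally correcting $\varrho_\star$. Producing a state with prescribed \emph{local} density, current and kinetic energy and controlled entropy is essentially the original problem, so this step cannot be taken for granted. Moreover, your auxiliary problem is parametrized by $\|k[\varrho]\|_{L^1}$ alone, whereas Assumption \ref{A4} gives strict monotonicity only of the \emph{total} energy $E(\varrho_T)=\Tr(\sqrt H\varrho_T\sqrt H)$; separating out the kinetic part would require additional control of $T\mapsto\|n[\varrho_T]V\|_{L^1}$ that is not available. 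The paper avoids both issues: it constrains $e[\varrho]=k[\varrho]+Vn[\varrho]$ (noting $\calA(n_0,u_0,k_0)=\calA_e(n_0,u_0,e_0)$, so the hypotheses $\|k[\varrho_\star]\|_{L^1}<\|k_0\|_{L^1}$ and $c_1<c_0$ are equivalent), handles the current constraint by the gauge transformation of Lemma \ref{lem:GaugeB}, and replaces the competitor construction by the double-infimum identity of Lemma \ref{infinf}, which identifies $\inf_{\calS(a_0,b_0,c)}S$ with an infimum of local problems and yields \fref{eq:IneqContraS}; the contradiction then follows purely from the strict monotonicity of $T\mapsto\mathrm{E}(T)$ and $T\mapsto\mathrm{S}(T)$ applied at the two temperatures $T_1<T_0$ associated with $c_1<c_0$. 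No new element of $\calA(n_0,u_0,k_0)$ is ever constructed.
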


Only the existence and uniqueness of a minimizer $\varrho_\star$ is addressed in Theorem \ref{mainth}. Its characterization will be considered in future work, and appears to be quite difficult. The formal solution reads $\varrho_\star=\xi(\calH(A,B,C))$, where $\calH(A,B,C)$ is an Hamiltonian depending on the Lagrange parameters $(A,B,C)$ associated with the local constraints. The essential difficulty is to obtain sufficient regularity on the implicitly defined $(A,B,C)$ to give a sense to the expression $\varrho_\star=\xi(\calH(A,B,C))$. This was achieved in a one-dimensional setting for the von Neumann entropy in \cite{DP-CVPDE}, and in $\Rm^d$ for the density constraint only in \cite{DP-JFA}.

We show in our next result, and under minimal assumptions on $\beta$ and $V$, that Assumptions \ref{A1}-\ref{A2}-\ref{A4} are verified. We require slightly more regularity on $\beta$ than the one in Assumption \ref{A2}, in that strict convexity only implies that $\beta''(x)$ is nonnegative and defined a.e., while we suppose in addition that $\beta'' $ is continuous and strictly positive. The bound \fref{polcont} below allows $\beta'$ to blow up around the origin but at a sufficiently slow polynomial rate. This condition is sufficient to show that density operators with finite total energy have a finite entropy. Depending on $\beta'(0)$, there are two possibilities: when $\beta'(0)=-\infty$, then the generalized Gibbs states have infinite rank; when $\beta'(0)$ is finite, they have finite rank. In the latter case, we need the H\"older regularity type conditions \fref{assbet} around the origin to prove that the energy of the generalized Gibbs state goes to the infinity as $T \to +\infty$.

\begin{theorem} \label{mainth2} Suppose Assumptions \ref{A0} and \ref{A3} hold, with in addition $\beta \in C^2((0,\bar{n}))$ and $\beta''>0$ on $(0,\bar{n})$, and that there exist $\bar x \in (0,\bar{n})$ and $\gamma\in (\frac{d}{d+2},1)$ such that
  \be\label{polcont}
  \sup_{x \in [0, \bar x]} |x|^{-\gamma+1} |\beta'(x)|=C_{\bar x,\gamma}<\infty.
  \ee
Furthermore, we assume
\begin{itemize}
 \item when $\beta'(0)$ is infinite, that the $V$ of Assumption \ref{A3} verifies $V^{\frac{d}{2}-\frac{\gamma}{1-\gamma}} \in L^{1}(\Rm^d)$ (with $\frac{\gamma}{1-\gamma}-\frac{d}{2}>0$ since $\gamma\in (\frac{d}{d+2},1)$),
 \item when $\beta'(0)$ is finite, that we can take \begin{equation*}
 V(x)=1+|x|^\theta,
\end{equation*} 
with $(\frac{\gamma}{1-\gamma}-\frac{d}{2})\theta>d$ (so that $V^{\frac{d}{2}-\frac{\gamma}{1-\gamma}} \in L^{1}(\Rm^d))$, and that there exist $\underline x>0$ and $r>0$ such that
  \be \label{assbet}
  c_- x^r \leq \beta'(x)-\beta'(0) \leq c_+ x^r, \qquad \forall x \in [0, \underbar x],
  \ee
where $c_-$ and $c_+$ are positive constants.
\end{itemize}
Then, Assumptions \ref{A1}-\ref{A2}-\ref{A4} are satisfied.
\end{theorem}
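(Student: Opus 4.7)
The plan is to verify Assumptions \ref{A1}, \ref{A2}, and \ref{A4} in turn, with Assumption \ref{A4} providing the reference state used to derive Assumption \ref{A1} via a variational comparison. Throughout, I will use the consequence of \fref{polcont} obtained by integration and $\beta(0)=0$, namely $|\beta(x)|\le (C_{\bar x,\gamma}/\gamma)\,x^\gamma$ on $[0,\bar x]$, combined with the bound $\sup_{[\bar x,\bar n]}|\beta|<\infty$ from continuity.

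For Assumption \ref{A2}, if $\varrho_k\to\varrho$ in $\calJ_1$, the eigenvalues converge with multiplicity and $\beta(\rho_p^{(k)})\to\beta(\rho_p^\star)$ by continuity on the compact interval $[0,\bar n]$. The polynomial control above, combined with the Lieb--Thirring bound $\sum_p \rho_p^{1+2/d}\le C\bar n^{2/d}\|k_0\|_{L^1}$ available on $\calA(n_0,u_0,k_0)$, furnishes a uniform integrable majorant for the tail of the series, and Fatou's lemma yields the lower semi-continuity.

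For Assumption \ref{A4}, I would first establish existence and uniqueness of the minimizer $\varrho_T$ of $F_T$ on $\calS(\bar n)$ by the direct method. A lower bound on $F_T$ is obtained by combining the semi-classical Lieb--Thirring inequality, the polynomial control \fref{polcont}, and the integrability $V^{d/2-\gamma/(1-\gamma)}\in L^1(\Rm^d)$; this yields an estimate of the type $T|S(\varrho)|\le \tfrac{1}{2}E(\varrho)+C_T$ on $\calS(\bar n)$, so that minimizing sequences are bounded in $\calE$. The confining property $V\to+\infty$ makes sub-level sets of $E$ compactly embedded in $\calJ_1$; weak lsc of $E$ together with Assumption \ref{A2} for $S$ passes to the limit, and strict convexity of $\beta$ transfers to $F_T$, yielding uniqueness. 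The formal Euler--Lagrange equation produces $\varrho_T=\xi((H-\mu_T)/T)$, with $\mu_T$ enforcing $\Tr(\varrho_T)=\bar n$; since $V$ is real, $H$ commutes with complex conjugation, so the eigenfunctions of $\varrho_T$ can be chosen real and $u[\varrho_T]=0$. Strict monotonicity of $S(\varrho_T)$ and $E(\varrho_T)$ in $T$ then follows from the Gibbs comparison: for $T<T'$, summing $F_T(\varrho_{T'})\ge F_T(\varrho_T)$ and $F_{T'}(\varrho_T)\ge F_{T'}(\varrho_{T'})$ gives $(T'-T)(S(\varrho_{T'})-S(\varrho_T))\le 0$, with strictness inherited from uniqueness. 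Finally, Assumption \ref{A1} is an immediate consequence: for any fixed $T_0>T_c$, the variational inequality $F_{T_0}(\varrho_{T_0})\le F_{T_0}(\varrho)$ rearranges into $S(\varrho)\ge S(\varrho_{T_0})+T_0^{-1}[E(\varrho_{T_0})-E(\varrho)]$, and on $\calA(n_0,u_0,k_0)$ the energy $E(\varrho)=\|k_0\|_{L^1}+\|n_0 V\|_{L^1}$ is finite thanks to Assumption \ref{A3}.

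The main obstacle lies in the analysis of the extreme temperature limits. As $T\to T_c$, $\varrho_T$ concentrates on the non-degenerate ground state of $H$, yielding $\varrho_T\to \bar n\,\ket{\phi_0}\bra{\phi_0}$ in $\calJ_1$ and $E(\varrho_T)\to\lambda_0\bar n$. The delicate case is $T\to+\infty$. When $\beta'(0)=-\infty$, $\xi$ is strictly positive on $\Rm$, $\varrho_T$ has infinite rank, and $\mu_T$ is only implicit; a careful expansion based on the spectral counting function $N(E)=\#\{\lambda_k\le E\}$ and the Weyl-type bound $N(E)\le C\int(E-V(x))_+^{d/2}dx$ (finite thanks to $V^{d/2-\gamma/(1-\gamma)}\in L^1$) is needed to track $\mu_T$ and show $E(\varrho_T)\to+\infty$. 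When $\beta'(0)$ is finite, $\xi$ has compact support and $\varrho_T$ has finite rank equal to $\#\{\lambda_k:(\lambda_k-\mu_T)/T<y^*\}$, where $y^*$ is the right endpoint of the support of $\xi$; here I would exploit the H\"older bounds \fref{assbet} near $0$ to extract the precise behavior of $\xi$ close to $y^*$, and combine them with the explicit choice $V(x)=1+|x|^\theta$ to derive sharp semi-classical asymptotics (where the Lieb--Thirring inequality becomes an equality in the limit) on the number of occupied states and on $\mu_T$, ultimately forcing $E(\varrho_T)\to+\infty$.
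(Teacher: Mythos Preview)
Your architecture differs from the paper's in several respects that are legitimate alternatives or even cleaner: you obtain $\varrho_T$ by the direct method rather than by explicitly constructing $\xi_T(H+\mu)$ and tuning $\mu$ via the partition function; you prove monotonicity of $E(\varrho_T)$ and $S(\varrho_T)$ by the two-temperature Gibbs comparison rather than the paper's regularize--differentiate--pass-to-the-limit procedure; and you derive Assumption~\ref{A1} a posteriori from Assumption~\ref{A4} instead of proving an a priori entropy bound. For the $T\to+\infty$ limit you correctly distinguish the two regimes and point to the semi-classical Lieb--Thirring \emph{equality} in the finite-rank case, which is indeed the paper's tool there.

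The genuine gap is in your verification of Assumption~\ref{A2}. The inequality $\sum_p \rho_p^{1+2/d}\le C\,\bar n^{2/d}\|k_0\|_{L^1}$ is not a valid Lieb--Thirring inequality on $\Rm^d$: already for a rank-one $\varrho=\ket{\phi}\bra{\phi}$, spreading $\phi$ drives $\|\nabla\phi\|_{L^2}\to 0$ while the left-hand side stays equal to $1$. More importantly, even if it held it would not furnish a summable majorant for $\sum_p|\beta(\rho_p)|\lesssim\sum_p\rho_p^\gamma$: since $\gamma<1<1+\tfrac{2}{d}$, a bound on $\sum_p\rho_p^{1+2/d}$ yields only $\rho_p\lesssim p^{-d/(d+2)}$, hence $\rho_p^\gamma\lesssim p^{-\gamma d/(d+2)}$, which diverges. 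What makes the tail uniformly small is the \emph{confining} potential you already use in your lower bound for $F_T$: with $H=H_0+V$ one has $\sum_j\rho_j\lambda_j\le E(\varrho)$, and H\"older gives
\[
\sum_{j\ge N}\rho_j^\gamma\ \le\ E(\varrho)^\gamma\Big(\sum_{j\ge N}\lambda_j^{-\gamma/(1-\gamma)}\Big)^{1-\gamma},
\]
where the last series is finite and tends to $0$ as $N\to\infty$ precisely because $V^{d/2-\gamma/(1-\gamma)}\in L^1(\Rm^d)$, via the estimate $\sum_j\lambda_j^{-\gamma/(1-\gamma)}\le C_{\gamma,d}\int V^{d/2-\gamma/(1-\gamma)}$. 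This is the paper's mechanism (its Lemma on control and continuity of the entropy), and it actually gives \emph{continuity} of $S$ along $\calJ_1$-convergent sequences with bounded total energy, not merely lower semi-continuity.

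A secondary omission: Assumption~\ref{A4} also requires \emph{continuity} of $T\mapsto E(\varrho_T)$ and $T\mapsto S(\varrho_T)$, which the Gibbs comparison alone does not produce. The paper handles this by proving $T\mapsto\mu_T$ continuous (monotone implicit function theorem applied to $Z_T(\mu)=\bar n$) and then invoking dominated convergence; an alternative compatible with your approach is to note that $T\mapsto F_T(\varrho_T)=\min_\sigma F_T(\sigma)$ is concave, hence continuous, and combine this with your sandwich $T'[S(\varrho_T)-S(\varrho_{T'})]\le E(\varrho_{T'})-E(\varrho_T)\le T[S(\varrho_T)-S(\varrho_{T'})]$.
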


We verify below that some entropies frequently encountered in the literature satisfy the assumptions of Theorem \ref{mainth2}.

\begin{itemize}
\item[-] \textit{The Boltzmann entropy} $\beta(x)=x \log (x) -x$. It is also referred to as the von Neumann entropy in the quantum case. It is strictly convex, with $\beta'(x)=\log x$ so that \fref{polcont} is satisfied for any $\gamma<1$. Then $\xi(x)=e^{-x}$, and the generalized Gibbs state has infinite rank. 

\item[-] \textit{The Fermi-Dirac entropy} $\beta(x)=x \log (x) +(1-x) \log (1-x)$. It is also referred to as the binary entropy in information theory. Besides classical thermodynamics, it arises when considering the von Neumann entropy of quasi-free states on CAR algebras, see e.g. Chapter 9 in \cite{Alicki}.   It is strictly convex, with $\beta'(x)=\log (\frac{x}{1-x})$ so that \fref{polcont} is also satisfied for any $\gamma<1$ and $\bar n \leq 1$. Then $\xi(x)=1/(e^x+1)$, and the generalized Gibbs state has infinite rank.

\item[-] \textit{The Bose-Einstein entropy} $\beta(x)=x \log (x) -(1+x) \log (1+x)$. As the Fermi-Dirac entropy, it arises in classical thermodynamics and e.g. when considering the von Neumann entropy of quasi-free states on CCR algebras.   It is strictly convex, with $\beta'(x)=\log (\frac{x}{1+x})$ and \fref{polcont} satisfied for any $\gamma<1$. Then $\xi(x)=1/(e^x-1)$, and the generalized Gibbs state has infinite rank. Note that $\lim_{x \to a} \beta'(x)<+\infty$ when $a \in (1,+\infty]$ (while the limit was infinite in the previous cases when $a=+\infty$ and $a=1$), which has some incidence on the proofs. 

\item[-] \textit{The Tsallis entropy} $\beta(x)= (q-1)^{-1} x^q$ for $q \in (0,1) \cup (1,\infty)$ (note that the Tsallis entropy is actually $S(\varrho)= (q-1)^{-1}(\Tr(\varrho^q)-1)$ but the extra constant term plays no role). It is used in information theory. It is  strictly convex, with $\beta'(x)=q (q-1)^{-1} x^{q-1}$ and \fref{polcont} satisfied for any $\gamma<1$. When $q>1$, then $\beta'(0)=0$, and the generalized Gibbs state has finite rank.  Estimate \fref{assbet} is trivially verified in that case. When $q<1$, the rank is infinite.
\end{itemize} 

The regularized Boltzmann entropy of the form
$
\beta_\eta(x)=\beta(x+\eta)-\beta(\eta)$, $\eta>0$,
is also sometimes used in practice to justify some calculations since its derivative $\beta'_\eta(x)=\log(x+\eta)$ is now bounded, see e.g. \cite{DP-CVPDE}. The corresponding minimizer has then finite rank (which grows to the infinity as $\eta \to 0$).

\subsection{Strategy of the proof} \label{strategy} We underline here the main steps of the proof. It starts with a minimizing sequence $\{\varrho_m\}_{m \in \Nm}$ in $\calA(n_0,u_0,k_0)$. Compactness results then allow us to obtain a $\varrho_\star$ with finite energy such that $\varrho_m \to \varrho_\star$ in $\calJ_1$ (along a subsequence still denoted $\{\varrho_m\}_{m \in \Nm}$) and such that $n[\varrho_m] \to n_0$, $u[\varrho_m] \to u_0$ in $L^1(\Rm^d)$ as $m \to +\infty$. Regarding the energy constraint, we can only deduce that $\|k[\varrho_\star] \|_{L^1} \leq \|k_0\|_{L^1}$, which is not enough to conclude that $k[\varrho_\star]=k_0$ and therefore that $\varrho_\star \in \calA(n_0,u_0,k_0)$.

To overcome this difficulty, we realize first that it is actually enough to show that $\|k[\varrho_\star] \|_{L^1} =\|k_0\|_{L^1}$ to obtain $k[\varrho_\star]=k_0$. This is a consequence of the positivity of $k[\varrho_\star]$ and of an argument of the type weak convergence plus convergence of the norm implies strong convergence in $\calJ_1$. The core to the proof is then to show that $\|k[\varrho_\star] \|_{L^1} < \|k_0\|_{L^1}$ is not possible.

The first step for this is the inequality
\begin{equation} \label{low}
S(\varrho_{\star}) \leq \inf_{\sigma\in\calA(n_0,u_0,k_0)} S(\sigma),
\end{equation}
which follows from the lower semi-continuity of the entropy. Intuitively, the above inequality should only be possible if $\|k[\varrho_\star] \|_{L^1} =\|k_0\|_{L^1}$. Indeed, if $\|k[\varrho_\star] \|_{L^1} < \|k_0\|_{L^1}$, then $\varrho_\star$ is a state with an equal or lower entropy than that of a minimizer of $S$ that has a strictly larger total kinetic energy (assuming the infimum is a minimum). With our definition of the entropy, i.e. the opposite of the physical entropy, and the common heuristics that the physical entropy is a measure of disorder, this yields a contradiction as we expect the state $\varrho_\star$ with strictly smaller energy to have a strictly larger (mathematical) entropy.  The main difficulty is then to make this argument rigorous.

The second step is to introduce two minimization problems with global constraints, which, as a consequence of \fref{low}, verify
\be \label{contS3}
\inf_{\sigma\in\calS(a_0,b_0,c_1)}S(\sigma) \leq  \inf_{\sigma\in\calS(a_0,b_0,c_0)} S(\sigma),
\ee
where
\begin{equation*}
\calS(a,b,c)=\left\{ \varrho \in \calE^+:\; \|n[\varrho]\|_{L^1}=a, \quad \|u[\varrho]\|_{L^1}=b, \quad \|e[\varrho]\|_{L^1}= c\right\},
\end{equation*}
with $e[\varrho]=k[\varrho]+n[\varrho] V$ ($V$ as in Assumption \ref{A3}), and 
$$
\|e_0\|_{L^1}=c_0, \qquad \|e[\varrho_\star]\|_{L^1}=c_1,\qquad  \|n_0\|_{L^1}=a_0, \qquad  \|u_0\|_{L^1}=b_0 \in \Rm^d.
$$
The goal is to prove a contradiction now from \fref{contS3} since we suppose that $c_1<c_0$. The main benefit in working with global constraints is that they are much easier to handle than the local ones. Though, it is unclear at this point that the infima in \fref{contS3} are actual minima since there is still a compactness issue to recover the global energy constraints. We solve this difficulty by introducing the generalized Gibbs states (GGS) defined in the introduction. We ignore the global current constraints at this point since they are taken care of by a simple change of gauge. We expect intuitively that the total energy of a GGS is monotone with the temperature, and therefore that there is a unique temperature associated with a given total energy. A minimizer of the entropy with global constraints is then a GGS with a well-chosen temperature.

We make this argument precise by proving the strict monotonicity of the total energy w.r.t the temperature, and show that the range of the total energy is $[\lambda_0 \bar n, +\infty)$. The lower bound is simply the minimal energy at zero temperature. The proof essentially relies on calculus of variations type arguments, with as main difficulty accounting for the chemical potential $\mu_T$. The proof that the energy tends to the infinity as $T \to +\infty$ in the finite rank case where $\beta'(0)$ is finite requires somewhat surprisingly a more involved strategy based on Riesz means for the operator $H=H_0+V$.

Once global minimizers are obtained, we show, using again the GGS, that the entropy is strictly monotone in the temperature and therefore in the total energy, which then proves the contradiction in \fref{contS3}.

\section{Proof of Theorem \ref{mainth}} \label{proofth1}
We start with some technical results.
\subsection{Preliminary technical lemmas}
The next Lemma is proved in \cite{MP-JSP}.
\begin{lemma}\label{lem:convcalE} (Compactness of bounded sequences in $\calE^+$)
Let $\{\varrho_k\}_{k\in\mathbb{N}}$ be a bounded sequence in $\mathcal{E}^+$. Then, there exists $\varrho\in\mathcal{E}^+$ and a subsequence such that
\begin{enumerate}
\item $\varrho_{k_m}\underset{m\to+\infty}\to\varrho$ in $\calJ_1$.
\item $\sqrt{H_0}\varrho_{k_m}\sqrt{H_0} \to \sqrt{H_0}\varrho\sqrt{H_0}$ weak-$*$ in $\calJ_1$, that is, for any compact operator $K$, 
\begin{equation*}
\Tr\left(K\sqrt{H}\varrho_{k_m}\sqrt{H_0} \right)\underset{m\to+\infty}\to \Tr\left(K\sqrt{H_0}\varrho\sqrt{H_0} \right).
\end{equation*}
\item $\Tr\left(\sqrt{H_0}\varrho\sqrt{H_0} \right) \leq \liminf_{m\to+\infty} \Tr\left(\sqrt{H_0}\varrho_{k_m}\sqrt{H_0} \right)$.
\end{enumerate}
%Furthermore, if $\Tr\left(\sqrt{H_0}\varrho\sqrt{H_0} \right) = \lim_{k\to+\infty} \Tr\left(\sqrt{H_0}\varrho_k\sqrt{H_0} \right)$, then we can conclude that
%\begin{equation*}
%\sqrt{H_0}\sqrt{\varrho_k}\underset{k\to+\infty}\to\sqrt{H_0}\sqrt{\varrho}\qua%d\textrm{in}\quad\calJ^2.
%\end{equation*}
\end{lemma}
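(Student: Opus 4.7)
The plan is to combine Banach-Alaoglu in $\calJ_1$ with the sandwich identity $\varrho_k = T\eta_k T$, where $T=(H_0+1)^{-1/2}$ and $\eta_k = (H_0+1)^{1/2}\varrho_k(H_0+1)^{1/2}$, and then to upgrade weak-$*$ convergence to trace-norm convergence by a spectral cut-off argument. Both $\{\varrho_k\}$ and $\{\eta_k\}$ are bounded in $\calJ_1$ thanks to the $\calE$-bound, since $\Tr(\eta_k)=\Tr(\varrho_k)+\Tr(\sqrt{H_0}\varrho_k\sqrt{H_0})\leq \|\varrho_k\|_{\calE}$. A diagonal extraction via Banach-Alaoglu (using $\calJ_1 = \calK(L^2(\Rm^d))^\ast$) yields a subsequence, still denoted $\varrho_k$, such that $\varrho_k \rightharpoonup \varrho$ and $\eta_k \rightharpoonup \eta$ weak-$*$ in $\calJ_1$, with $\varrho, \eta$ nonnegative and self-adjoint. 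Testing $\varrho_k = T\eta_k T$ against any compact operator $A$ and using that $TAT$ is compact (product of a compact and two bounded operators), one identifies $\varrho = T\eta T$, so $\eta=(H_0+1)^{1/2}\varrho(H_0+1)^{1/2}$, and in particular $\varrho \in \calE^+$.

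Claim (2) then follows by the same sandwich trick: for any compact $K$, the operator $T\sqrt{H_0}\,K\sqrt{H_0}\,T$ is compact (since $\sqrt{H_0}\,T$ is bounded of norm at most $1$), and weak-$*$ convergence of $\eta_k$ gives
\begin{equation*}
\Tr\!\left(K\sqrt{H_0}\varrho_k\sqrt{H_0}\right) = \Tr\!\left(T\sqrt{H_0}\,K\sqrt{H_0}\,T\,\eta_k\right) \longrightarrow \Tr\!\left(K\sqrt{H_0}\varrho\sqrt{H_0}\right).
\end{equation*}
Claim (3) is then a Fatou lemma for the trace: pick an increasing sequence of finite-rank spectral projections $P_N$ converging strongly to the identity, apply (2) with $K = P_N$ to obtain $\Tr(P_N\sqrt{H_0}\varrho\sqrt{H_0}) \leq \liminf_{k}\Tr(\sqrt{H_0}\varrho_k\sqrt{H_0})$, and send $N\to\infty$ using monotone convergence on the left.

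The main obstacle is claim (1), namely the upgrade from weak-$*$ to trace-norm convergence, since neither positivity nor the bare energy bound gives this on $\Rm^d$. My approach would be to introduce, for a spectral projection $\Pi_N$ of $H_0$ (or of $H_0+V$ under Assumption \ref{A3}) onto its low modes, the decomposition
\begin{equation*}
\varrho_k - \varrho = \Pi_N(\varrho_k-\varrho)\Pi_N + (1-\Pi_N)(\varrho_k-\varrho) + \Pi_N(\varrho_k-\varrho)(1-\Pi_N),
\end{equation*}
and to bound the off-diagonal pieces in $\calJ_1$ by $\|(1-\Pi_N)T\|_{\mathrm{op}}\,\sup_{k}\|\eta_k\|_{\calJ_1}$, with $\|(1-\Pi_N)T\|_{\mathrm{op}} \leq (1+N)^{-1/2}$. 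For fixed $N$, the diagonal term is controlled by the weak-$*$ convergence restricted to the range of $\Pi_N$, where it coincides with norm convergence provided $\Pi_N$ is compact. The delicate point is precisely this compactness of $\Pi_N$: it is automatic on a bounded domain where $H_0$ has compact resolvent, but on $\Rm^d$ it must be supplied by the confining potential $V$ of Assumption \ref{A3}, which is exactly the setting in which the lemma is subsequently used.
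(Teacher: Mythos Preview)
The paper does not give its own proof of this lemma; it is quoted from \cite{MP-JSP}. Your arguments for items (2) and (3) are standard and correct: the sandwich $\varrho_k = T\eta_k T$ with $T=(H_0+1)^{-1/2}$ transfers weak-$*$ convergence of $\eta_k$ to that of $\sqrt{H_0}\varrho_k\sqrt{H_0}$, and the Fatou argument with finite-rank projections yields the trace lower bound.

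You have correctly put your finger on a real issue with item (1). As literally stated over $\Rm^d$, with no tightness hypothesis, strong $\calJ_1$ convergence can fail: take $\varrho_k = \ket{\phi(\cdot-k e_1)}\bra{\phi(\cdot-k e_1)}$ for a fixed $\phi\in H^1(\Rm^d)$ with $\|\phi\|_{L^2}=1$. Then $\|\varrho_k\|_{\calE}$ is independent of $k$ by translation invariance, yet $\varrho_k$ converges only weak-$*$ to $0$ while $\|\varrho_k\|_{\calJ_1}=1$ for all $k$, so no subsequence converges in $\calJ_1$. Your proposed remedy---working with $H=H_0+V$ for a confining $V$, so that the spectral projections $\Pi_N$ are finite rank and the tail bound $\|(1-\Pi_N)(H+1)^{-1/2}\|\to 0$ combines with a uniform bound on $\Tr\big((H+1)^{1/2}\varrho_k(H+1)^{1/2}\big)$---is exactly right and matches how the lemma is actually applied: in Step~1 of the proof of Theorem~\ref{mainth} the minimizing sequence sits in $\calA(n_0,u_0,k_0)$ with $n_0 V\in L^1(\Rm^d)$ by Assumption~\ref{A3}, which supplies the missing tightness. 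The reference \cite{MP-JSP} presumably states the result in a setting where $H_0$ already has compact resolvent (bounded domain); the transcription to $\Rm^d$ should carry the confinement hypothesis, after which your argument goes through.
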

The Lemma below is proved in \cite{Simon-trace}, Theorem 2.21 and addendum H.
\begin{theorem}\label{thm:AddH}
Suppose that $\varrho_m\to \varrho$ weakly in the sense of operators and that $\|\varrho_m\|_{\mathcal{J}_1}\underset{m\to\infty}{\to}\|\varrho\|_{\mathcal{J}_1}$. Then $\|\varrho_m-\varrho\|_{\mathcal{J}_1}\underset{m\to\infty}{\to}0$.
\end{theorem}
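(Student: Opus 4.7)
The plan is to prove the trace-norm convergence by a finite-rank truncation argument, restricting attention to the positive case (which is what we need in our application, since $\varrho_m,\varrho\in\calE^+$; extension to general trace class operators is the content of ``addendum H'' in \cite{Simon-trace} and requires passing through the polar decomposition). The core idea is to split $\varrho_m-\varrho$ into four blocks via an orthogonal projection $P_N$ adapted to the eigenbasis of the limit $\varrho$, handle the finite-rank $P_N$-block through weak operator convergence, and absorb the remaining three blocks by combining positivity with the trace-norm convergence hypothesis.

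Concretely, let $\varrho = \sum_{p} \rho_p \langle \phi_p,\cdot\rangle\phi_p$ be the spectral decomposition of $\varrho$. Fix $\epsilon>0$ and choose $N$ large enough that $\sum_{p>N}\rho_p<\epsilon$; let $P_N$ denote the orthogonal projection onto $\mathrm{span}\{\phi_0,\ldots,\phi_N\}$, so that $\Tr((I-P_N)\varrho(I-P_N))<\epsilon$. Since $P_N$ is finite rank, hence compact, weak operator convergence yields $\|P_N(\varrho_m-\varrho)P_N\|_{\calJ_1}\to 0$ as $m\to\infty$. For the diagonal tail, cyclicity of the trace gives the identity
\begin{equation*}
\Tr(\varrho_m) = \Tr(P_N \varrho_m P_N) + \Tr\bigl((I-P_N)\varrho_m(I-P_N)\bigr),
\end{equation*}
valid for any trace class operator, so combining with $\Tr(\varrho_m)\to\Tr(\varrho)$ and the convergence of the $P_N$-block yields $\Tr((I-P_N)\varrho_m(I-P_N))\to \Tr((I-P_N)\varrho(I-P_N))<\epsilon$ as $m\to\infty$.

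The crucial step is the off-diagonal block estimate. Using the factorization $\varrho_m = \sqrt{\varrho_m}\cdot\sqrt{\varrho_m}$ together with the H\"older-type inequality $\|AB\|_{\calJ_1}\leq \|A\|_{\calJ_2}\|B\|_{\calJ_2}$ in the Hilbert–Schmidt class $\calJ_2$, one obtains
\begin{equation*}
\bigl\|P_N \varrho_m (I-P_N)\bigr\|_{\calJ_1} \leq \bigl\|\sqrt{\varrho_m}\,P_N\bigr\|_{\calJ_2}\bigl\|\sqrt{\varrho_m}\,(I-P_N)\bigr\|_{\calJ_2} \leq \sqrt{\Tr(\varrho_m)}\,\sqrt{\Tr\bigl((I-P_N)\varrho_m(I-P_N)\bigr)},
\end{equation*}
which is bounded by $C\sqrt{\epsilon}$ for $m$ large; the same bound holds for $\varrho$ in place of $\varrho_m$ and for the adjoint block $(I-P_N)\varrho_m P_N$. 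The diagonal tail block is even simpler: by positivity, $\|(I-P_N)\varrho_m(I-P_N)\|_{\calJ_1}=\Tr((I-P_N)\varrho_m(I-P_N))\leq 2\epsilon$ for $m$ large. Summing the four block estimates in the decomposition
\begin{equation*}
\varrho_m-\varrho = P_N(\varrho_m-\varrho)P_N + P_N(\varrho_m-\varrho)(I-P_N) + (I-P_N)(\varrho_m-\varrho)P_N + (I-P_N)(\varrho_m-\varrho)(I-P_N)
\end{equation*}
gives $\limsup_{m\to\infty}\|\varrho_m-\varrho\|_{\calJ_1}\leq C'\sqrt{\epsilon}$, and letting $\epsilon\to 0$ concludes the argument.

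The main obstacle is precisely the off-diagonal block estimate: a naive bound of the form $\|P_N(\varrho_m-\varrho)(I-P_N)\|_{\calJ_1}\leq\|\varrho_m-\varrho\|_{\calJ_1}$ is circular, and weak operator convergence alone provides no control of $\varrho_m$ off the range of $P_N$. Exploiting positivity through the square-root factorization converts a trace-norm estimate on a product into a product of Hilbert–Schmidt norms, each controlled by a diagonal trace whose convergence we have already secured from the hypothesis $\|\varrho_m\|_{\calJ_1}\to\|\varrho\|_{\calJ_1}$. This is why trace-norm convergence of $\varrho_m$ is indispensable and why the non-positive case in \cite{Simon-trace} is considerably more delicate, requiring separately the weak convergence of the absolute values $|\varrho_m|$.
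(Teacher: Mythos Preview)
The paper does not give its own proof of this statement: it simply cites Theorem~2.21 and addendum~H of \cite{Simon-trace}. Your argument for the positive case is correct and is essentially the standard one---finite-rank truncation via a spectral projection of the limit, control of the diagonal tail from the trace-norm hypothesis, and control of the off-diagonal blocks through the square-root factorization and the H\"older inequality $\|AB\|_{\calJ_1}\le\|A\|_{\calJ_2}\|B\|_{\calJ_2}$. Two minor remarks: first, in the step ``since $P_N$ is finite rank, hence compact, weak operator convergence yields $\|P_N(\varrho_m-\varrho)P_N\|_{\calJ_1}\to 0$'', the real reason is that $P_N(\varrho_m-\varrho)P_N$ lives on the $(N{+}1)$-dimensional range of $P_N$, where entrywise convergence (given by weak operator convergence) implies convergence in any norm; compactness alone would not suffice. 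Second, since $P_N$ is a spectral projection of $\varrho$, the off-diagonal blocks $P_N\varrho(I-P_N)$ actually vanish, which slightly simplifies the bookkeeping you describe. As you rightly note, the positive case is all that is needed for the application in the paper, where the result is invoked for the nonnegative operators $\sqrt{H_0}\varrho_m\sqrt{H_0}$.
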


In the first step of the proof, we show that the problem is reduced to handling a global energy constraint.
\subsection{Step 1: reduction to the energy constraint}
According to Assumption \ref{A1}, the entropy $S$ is bounded from below on $\calA(n_0,u_0,k_0)$. There exists therefore a minimizing sequence $\{\varrho_m\}_{m\in\mathbb{N}}$ in $\calA(n_0,u_0,k_0)$ such that
\begin{equation*}
\lim_{m\to+\infty} S(\varrho_m) =\inf_{\sigma\in\calA(n_0,u_0,k_0)} S(\sigma) > -\infty.
\end{equation*}
Since by construction $\|n[\varrho_m]\|_{L^1}=\|n_0\|_{L^1}$ and $\|k[\varrho_m]\|_{L^1}=\|k_0\|_{L^1}$, it follows that $\{\varrho_m\}_{m\in\mathbb{\Nm}}$ is bounded in $\mathcal{E}^+$ and, hence, up to a subsequence (still abusively denoted by $\{\varrho_m\}_{m\in\mathbb{\Nm}}$), the sequence converges strongly in $\calJ_1$ to a $\varrho_{\star} \in \calE^+$ according to Lemma \ref{lem:convcalE}. Then, by Assumption \ref{A2}, we deduce the inequality \fref{low}.
%\begin{equation} \label{low}
%S(\varrho_{\star}) \leq \inf_{\sigma\in\calA(n_0,u_0,k_0)} S(\sigma).
%\end{equation}

It remains to prove that $\varrho_{\star}\in\calA(n_0,u_0,k_0)$. The fact that $n[\varrho_\star]=n_0$ is a direct consequence of the convergence of $\{\varrho_m\}_{m\in\mathbb{\Nm}}$ in $\calJ_1$,  and that $u[\varrho_\star]=u_0$ is established in \cite{MP-KRM}, Theorem 4.3. The latter follows from sufficient compactness as $u[\varrho_m]$ involves one less derivative than the (bounded) kinetic energy $\Tr (\sqrt{H_0}\varrho_m \sqrt{H_0})$. The remaining point is then to prove that $k[\varrho_\star]=k_0$, which is the essential difficulty in the proof. For this, we remark first from Item 3 of Lemma \ref{lem:convcalE} that
\begin{equation} \label{ineqinf}
\|k[\varrho_{\star}]\|_{L^1} = \Tr(\sqrt{H_0}\varrho_{\star}\sqrt{H_0}) \leq \liminf_{m\to+ \infty} \Tr(\sqrt{H_0}\varrho_m\sqrt{H_0}) = \|k_0\|_{L^1},
\end{equation}
and Item 2 that 
\begin{equation*}
\sqrt{H_0}\varrho_m\sqrt{H_0}\underset{m\to+\infty}{\to}\sqrt{H_0}\varrho\sqrt{H_0}\quad\textrm{weak-}*\;\textrm{in }\calJ_1.
\end{equation*}
The important observation is that it is enough to study $\|k[\varrho_{\star}]\|_{L^1} $ and $\|k_0\|_{L^1}$, and not the functions $k[\varrho_\star]$ and $k_0$ themselves. Indeed, if $\|k[\varrho_{\star}]\|_{L^1} = \|k_0\|_{L^1}$, then we claim that $k[\varrho_{\star}] = k_0$, and as a consequence $\varrho_{\star}$ is in $\calA(n_0,u_0,k_0)$ and the  minimizer is unique by strict convexity of the entropy. To prove the claim, if we have $\|k[\varrho_{\star}]\|_{L^1} = \|k_0\|_{L^1}$, we obtain from \fref{ineqinf} that
\bee
\|\sqrt{H_0}\varrho_{\star}\sqrt{H_0}\|_{\calJ_1}=\Tr(\sqrt{H_0}\varrho_{\star}\sqrt{H_0})&=&\lim_{m\to+ \infty} \Tr(\sqrt{H_0}\varrho_m\sqrt{H_0})\\ &=& \lim_{m\to+ \infty}\|\sqrt{H_0}\varrho_{m}\sqrt{H_0}\|_{\calJ_1},
\eee
and since the weak-$*$ convergence in $\calJ_1$ implies the weak convergence in the sense of operators, we deduce, by using Theorem \ref{thm:AddH}, that
\begin{equation*}
\sqrt{H_0}\varrho_m\sqrt{H_0}\underset{m\to+\infty}{\to}\sqrt{H_0}\varrho_{\star}\sqrt{H_0}\quad\textrm{in }\calJ_1.
\end{equation*}
Since $k[\varrho_m]=-n[\nabla \cdot \varrho_m \nabla]$, it can then directly be shown (by e.g. writing $k[\varrho_m]=n[\nabla \cdot  (\sqrt{H_0}+1)^{-1}(\sqrt{H_0}+1)\varrho_m (\sqrt{H_0}+1) (\sqrt{H_0}+1)^{-1} \nabla]$) that the above convergence implies that $k[\varrho_m] \to k[\varrho_\star]$ in $L^1(\Rm^d)$.

We therefore suppose that $\|k[\varrho_{\star}]\|_{L^1} < \|k_0\|_{L^1}$ and will prove a contradiction. For this, consider the $V$ of Assumptions \ref{A3}-\ref{A4}, and introduce the local total energy $e[\varrho]=k[\varrho]+V n[\varrho]$, which is in $L^1(\Rm^d)$ when $\varrho \in \calA(n_0,u_0,k_0)$ according to Assumption \ref{A3}. With $e_0 : = k_0 +V n_0 \in L^1(\Rm^d)$, let
\begin{equation*}
\calA_e(n_0,u_0,e_0) = \left\{\varrho\in\calE^+:\; n[\varrho] = n_0,\; u[\varrho] = u_0\;\textrm{and}\; e[\varrho] = e_0 \right\}.
\end{equation*}
Since both $k[\varrho]$ and $n[\varrho]$ are prescribed when $\varrho \in \calA(n_0,u_0,k_0)$, it follows that $\calA(n_0,u_0,k_0)=\calA_e(n_0,u_0,e_0)$, and will work from now on with $\calA_e(n_0,u_0,e_0)$ since it simplifies a bit the exposition. Note that we have $\|e[\varrho_{\star}]\|_{L^1} < \|e_0\|_{L^1}$ as a consequence of our hypothesis that $\|k[\varrho_{\star}]\|_{L^1} < \|k_0\|_{L^1}$.

\subsection{Step 2: reduction to global constraints}
With the notations $\calS(a,b,c)$, $a_0$, $b_0$, $c_0$ and $c_1$ of Section \ref{strategy}, % Let us first introduce the following notations:
% $$
% \|e_0\|_{L^1}=c_0, \qquad \|e[\varrho_\star]\|_{L^1}=c_1,\qquad  \|n_0\|_{L^1}=a_0, \qquad  \|u_0\|_{L^1}=b_0 \in \Rm^d.
% $$
% For any $(a,b,c)\in\mathbb{R}_+\times\mathbb{R}^d\times\mathbb{R}_+$, we define the following feasible set with global constraints
% \begin{equation*}
% \calS(a,b,c)=\left\{ \varrho \in \calE^+:\; \|n[\varrho]\|_{L^1}=a, \quad \|u[\varrho]\|_{L^1}=b, \quad \|e[\varrho]\|_{L^1}= c\right\},
% \end{equation*}
we define the set of admissible global constraints
\begin{equation*}
\calN(a,b,c) = \left\{(n,u,e)\in\calM_e:\; \|n\|_{L^1} = a,\;\|u\|_{L^1} = b,\;\textrm{and}\;\|e\|_{L^1} = c \right\},
\end{equation*}
where we write $(n,u,e)\in\calM_e$ for $(n,u,e-nu)\in\calM$ (we recall that $\calM$ is the set of admissible local constraints). Note that without more hypotheses on $(a,b,c)$, the sets $\calS(a,b,c)$ and $\calN(a,b,c)$ might be empty. Though, it is clear that $\varrho_\star$ belongs to $\calS(a_0,b_0,c_1)$ since $n[\varrho_\star]=n_0$ and $u[\varrho_\star]=u_0$. Hence, we have from \fref{low},
\be \label{contS2}
\inf_{\sigma\in\calS(a_0,b_0,c_1)}S(\sigma) \leq S(\varrho_{\star}) \leq \inf_{\sigma\in\calA_e(n_0,u_0,e_0)} S(\sigma).
\ee
We have then the following important result.
\begin{lemma} \label{infinf}
Let $a,b,c\in\mathbb{R}_+^{*}\times\left(\mathbb{R}^*\right)^d\times \mathbb{R}^{*}_+$ such that $\calS(a,b,c)$ is not empty, and assume that the minimization problem
\begin{equation*}
\inf_{\sigma\in\calS(a,b,c)} S(\sigma),
\end{equation*}
admits a unique solution.  Then, the following equality holds
\begin{equation} \label{lemresu}
\inf_{(n,u,e)\in\calN(a,b,c)}\left( \inf_{\sigma\in\calA_e(n,u,e)} S(\sigma)\right)= \inf_{\sigma\in\calS(a,b,c)} S(\sigma).
\end{equation}
\end{lemma}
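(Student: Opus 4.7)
My plan is to prove the equality in \eqref{lemresu} by establishing the set decomposition
\begin{equation*}
\calS(a,b,c) \;=\; \bigcup_{(n,u,e)\in\calN(a,b,c)} \calA_e(n,u,e),
\end{equation*}
and then invoking the elementary identity $\inf_{\sigma\in\bigcup_\alpha X_\alpha}f(\sigma)=\inf_\alpha \inf_{\sigma\in X_\alpha}f(\sigma)$. Neither the uniqueness hypothesis nor the non-emptiness of $\calS(a,b,c)$ will enter this particular set-theoretic argument; they are stated in the lemma for later use in the proof of Theorem \ref{mainth}.

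For the inclusion $\supseteq$, fix $(n,u,e)\in\calN(a,b,c)$ and $\sigma\in\calA_e(n,u,e)$. By definition of $\calA_e$ we have $n[\sigma]=n$, $u[\sigma]=u$, $e[\sigma]=e$, and by definition of $\calN$ the $L^1$-norms of $(n,u,e)$ equal $(a,b,c)$. Hence $\|n[\sigma]\|_{L^1}=a$, $\|u[\sigma]\|_{L^1}=b$, $\|e[\sigma]\|_{L^1}=c$, so $\sigma\in\calS(a,b,c)$.

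For the inclusion $\subseteq$, fix $\sigma\in\calS(a,b,c)$ and set $n:=n[\sigma]$, $u:=u[\sigma]$, $e:=e[\sigma]=k[\sigma]+Vn[\sigma]$. By Remark \ref{rem1}, $n\in L^1_+(\Rm^d)$ with $\sqrt{n}\in H^1(\Rm^d)$, $u\in (L^1(\Rm^d))^d$, and $k[\sigma]\in L^1_+(\Rm^d)$; since $Vn\in L^1(\Rm^d)$ by the hypothesis $\sigma\in\calE^+$ together with the definition of $\calS(a,b,c)$ (it forces $\|e\|_{L^1}<\infty$), we also have $e\in L^1_+(\Rm^d)$. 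The triple $(n,u,e-nV)=(n[\sigma],u[\sigma],k[\sigma])$ lies in $\calM$ because $\sigma$ itself is a witness, so $(n,u,e)\in\calM_e$. The $L^1$-norm constraints defining $\calN(a,b,c)$ are inherited from $\sigma\in\calS(a,b,c)$, and the membership $\sigma\in\calA_e(n[\sigma],u[\sigma],e[\sigma])$ is tautological.

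The whole argument is essentially book-keeping, so no step is really an obstacle; the only slightly delicate point to verify carefully is that the local quantities of $\sigma\in\calS(a,b,c)$ genuinely land in the functional spaces required by the definition of $\calM$ (in particular $\sqrt{n[\sigma]}\in H^1$), which follows from the last inequality in Remark \ref{rem1} applied to $\sigma\in\calE^+$. Once the set equality is in hand, taking infima on both sides yields \eqref{lemresu} immediately.
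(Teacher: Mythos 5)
Your proof is correct, and it is organized slightly differently from the paper's. Both arguments ultimately rest on the same two inclusions: $\calA_e(n,u,e)\subset\calS(a,b,c)$ for every $(n,u,e)\in\calN(a,b,c)$, and $\sigma\in\calA_e(n[\sigma],u[\sigma],e[\sigma])$ with $(n[\sigma],u[\sigma],e[\sigma])\in\calN(a,b,c)$ for every $\sigma\in\calS(a,b,c)$. The paper, however, only applies the second fact to the (assumed) unique minimizer $\varrho_0$ of $S$ on $\calS(a,b,c)$: it first shows $\inf_{\calS(a,b,c)}S=\inf_{\calA_e(n[\varrho_0],u[\varrho_0],e[\varrho_0])}S$ and then deduces the inequality $\inf_{\calN}L\leq\inf_{\calS(a,b,c)}S$ by evaluating $L$ at that particular triple. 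You instead apply the second fact to \emph{every} $\sigma\in\calS(a,b,c)$, which gives the full set decomposition $\calS(a,b,c)=\bigcup_{(n,u,e)\in\calN(a,b,c)}\calA_e(n,u,e)$ and hence the identity \fref{lemresu} as an instance of $\inf_{\bigcup_\alpha X_\alpha}f=\inf_\alpha\inf_{X_\alpha}f$. What your route buys is that the existence and uniqueness of the minimizer on $\calS(a,b,c)$ is genuinely not needed for \fref{lemresu} (your remark on this point is accurate); the hypothesis is of course still required downstream in Step 4 of the proof of Theorem \ref{mainth}, where the infima must be attained and identified with generalized Gibbs states. Your verification that $(n[\sigma],u[\sigma],e[\sigma])$ lands in $\calN(a,b,c)$ — integrability of $n[\sigma]V$ from $\|e[\sigma]\|_{L^1}<\infty$, $\sqrt{n[\sigma]}\in H^1(\Rm^d)$ from Remark \ref{rem1}, and $\sigma$ itself serving as the witness for membership in $\calM$ — is exactly the book-keeping that makes the $\subseteq$ inclusion legitimate, so no gap there.
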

\begin{proof}
We denote by $\varrho_0$ the solution of the minimization problem
\begin{equation*}
\inf_{\sigma\in\calS(a,b,c)} S(\sigma) = S(\varrho_0).
\end{equation*}
The first step consists in establishing equality \fref{eq:eqSlocalGlobal} below. For this, since 
\be \label{abc}(\|n[\varrho_{0}]\|_{L^1} ,\|u[\varrho_{0}]\|_{L^1},\|e[\varrho_{0}]\|_{L^1}) = (a,b,c),\ee
 we immediately have that $\mathcal{A}_e(n[\varrho_{0}] ,u[\varrho_{0}],e[\varrho_{0}])\subset\calS(a,b,c)$. Hence, we deduce that
\begin{equation} \label{ineq0}
\inf_{\sigma\in\calS(a,b,c)} S(\sigma)  \leq  \inf_{\sigma\in\calA_e(n[\varrho_{0}] ,u[\varrho_{0}],e[\varrho_{0}])} S(\sigma).
\end{equation}
Furthermore, we clearly have $\varrho_0\in \mathcal{A}_e(n[\varrho_{0}] ,u[\varrho_{0}],e[\varrho_{0}])$ and, thus, we also obtain
\begin{equation*}
 \inf_{\sigma\in\calA_e(n[\varrho_{0}] ,u[\varrho_{0}],e[\varrho_{0}])} S(\sigma) \leq S(\varrho_0).
\end{equation*}
Together with \fref{ineq0}, this yields
\begin{equation}\label{eq:eqSlocalGlobal}
\inf_{\sigma\in\calS(a,b,c)} S(\sigma) =  \inf_{\sigma\in\calA_e(n[\varrho_{0}] ,u[\varrho_{0}],e[\varrho_{0}])} S(\sigma).
\end{equation}

We now proceed to the proof of \fref{lemresu} by proving two opposite inequalities. Consider the mapping $L:\mathcal{M}_e\to\mathbb{R}$ defined by
\begin{equation*}
L(n,u,e) : = \inf_{\sigma\in\calA_e(n,u,e)} S(\sigma).
\end{equation*}
For the first inequality in \fref{lemresu}, we obtain directly, using \fref{abc} together with \eqref{eq:eqSlocalGlobal}, 
\begin{equation} \label{fineq}
\inf_{(n,u,e)\in\calN(a,b,c)} L(n,u,e) \leq L(n[\varrho_0],u[\varrho_0],e[\varrho_0]) = \inf_{\sigma\in\calS(a,b,c)} S(\sigma).
\end{equation}
For the second inequality, it follows, since for any $(n,u,e)\in\calN(a,b,c)$, we have $\calA_e(n,u,e)\subset \calS(a,b,c)$, that
$$
\inf_{\sigma\in\calS(a,b,c)} S(\sigma)  \leq   \inf_{\sigma\in\calA_e(n,u,e)} S(\sigma).
$$
Taking the infimum over $(n,u,e)\in\calN(a,b,c)$ yields 
\begin{equation*}
 \inf_{\sigma\in\calS(a,b,c)} S(\sigma)\leq\inf_{(n,u,e)\in\calN(a,b,c)}\left( \inf_{\sigma\in\calA_e(n,u,e)} S(\sigma)\right),
\end{equation*}
which gives the desired result together with \fref{fineq}.
\end{proof}

\bigskip

Note that both $\calS(a_0,b_0,c_1)$ and $\calS(a_0,b_0,c_0)$ are not empty by construction. Assuming $S$ admits unique minimizers on $\calS(a_0,b_0,c_1)$ and $\calS(a_0,b_0,c_1)$ (these facts will be proved in the next section), it follows from the previous Lemma and \eqref{contS2}, by taking the infimum over functions $(n_0,u_0,e_0)\in \calN(a_0,b_0,c_0)$, that
\begin{equation}\label{eq:IneqContraS}
\inf_{\sigma\in\calS(a_0,b_0,c_1)} S(\sigma)\leq \inf_{\sigma\in\calS(a_0,b_0,c_0)} S(\sigma).
\end{equation}
This inequality is the reason for the introduction of the minimization problem with global constraints, and we will prove that it cannot hold for $c_1<c_0$. This is based on Assumption \ref{A4} and the monotonicity of entropy in the temperature. We show first in the next step that $S$ admits unique minimizers on $\calS(a_0,b_0,c_0)$ and $\calS(a_0,b_0,c_1)$, as required in Lemma \ref{infinf}.
\subsection{Step 3: the global minimization problem}
We prove here the following result.
\begin{proposition} \label{exglob}Let $c \in [a_0 \lambda_0+|b_0|^2/a_0, +\infty)$ ($\lambda_0$ defined in Assumption \ref{A4}). Then $S$ admits a unique minimizer on $\calS(a_0,b_0,c)$.
\end{proposition}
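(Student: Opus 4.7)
The plan is to reduce the problem with prescribed global density, global current, and global energy to the simpler problem of prescribing only density and energy (no current) via a change of gauge, and then to invoke Assumption \ref{A4} on the generalized Gibbs states (GGS). Specifically, I would first introduce the unitary multiplication operator $U=e^{i b_0\cdot x/a_0}$ on $L^2(\Rm^d)$ and verify that the conjugation $\varrho\mapsto U\varrho U^*$ preserves the spectrum of $\varrho$ (hence the entropy $S$), leaves $n[\varrho]$ invariant, shifts the current according to $u[U\varrho U^*]=u[\varrho]+(b_0/a_0)n[\varrho]$, and, whenever $\int u[\varrho]\,dx=0$, increases the total energy by exactly $|b_0|^2/a_0$. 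Consequently the map $\sigma\mapsto \tilde\sigma=U^*\sigma U$ is an entropy-preserving bijection between $\calS(a_0,b_0,c)$ and
\[
\widetilde{\calS}=\Bigl\{\tilde\sigma\in\calE^+:\;\Tr(\tilde\sigma)=a_0,\ \textstyle\int u[\tilde\sigma]\,dx=0,\ E(\tilde\sigma)=c'\Bigr\},\qquad c':=c-|b_0|^2/a_0\geq a_0\lambda_0.
\]

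Second, I would use Assumption \ref{A4}: the continuous and strictly increasing map $T\mapsto E(\varrho_T)$ has range $(a_0\lambda_0,+\infty)$, so for any $c>a_0\lambda_0+|b_0|^2/a_0$ there exists a unique $T_\star>T_c$ with $E(\varrho_{T_\star})=c'$. Since $u[\varrho_{T_\star}]=0$, we have $\varrho_{T_\star}\in\widetilde{\calS}$, and therefore $U\varrho_{T_\star}U^*\in\calS(a_0,b_0,c)$, proving non-emptiness.

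Third, I would establish the minimality and uniqueness by comparing against the GGS on the \emph{larger} set $\calS(a_0)=\{\Tr(\varrho)=a_0\}$. For any $\sigma\in\calS(a_0,b_0,c)$, set $\tilde\sigma=U^*\sigma U\in\widetilde{\calS}$. By Assumption \ref{A4}, $\varrho_{T_\star}$ is the \emph{unique} minimizer of $F_{T_\star}=E+T_\star S$ on $\calS(a_0)$, and since $\tilde\sigma\in\calS(a_0)$ satisfies $E(\tilde\sigma)=E(\varrho_{T_\star})=c'$, the inequality $F_{T_\star}(\tilde\sigma)\geq F_{T_\star}(\varrho_{T_\star})$ reduces immediately to $S(\tilde\sigma)\geq S(\varrho_{T_\star})$. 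Hence $S(\sigma)=S(\tilde\sigma)\geq S(U\varrho_{T_\star}U^*)$, so $U\varrho_{T_\star}U^*$ is a minimizer; equality forces $F_{T_\star}(\tilde\sigma)=F_{T_\star}(\varrho_{T_\star})$, and the uniqueness clause of Assumption \ref{A4} then yields $\tilde\sigma=\varrho_{T_\star}$, i.e.\ $\sigma=U\varrho_{T_\star}U^*$.

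Finally, I would treat the boundary value $c=a_0\lambda_0+|b_0|^2/a_0$ separately (not covered by the strict monotonicity above). In that case $c'=a_0\lambda_0$, and the Ritz-type bound $E(\tilde\sigma)\geq \lambda_0\Tr(\tilde\sigma)=a_0\lambda_0$ together with the non-degeneracy of the ground state $\{\lambda_0,\phi_0\}$ of $H$ forces any $\tilde\sigma\in\widetilde{\calS}$ to coincide with $a_0|\phi_0\rangle\langle\phi_0|$; thus $\calS(a_0,b_0,c)$ is the singleton $\{U\,a_0|\phi_0\rangle\langle\phi_0|\,U^*\}$ and existence and uniqueness are automatic. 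The main obstacle is essentially bookkeeping: all of the analytic difficulty (construction of the GGS, continuity and strict monotonicity of $T\mapsto E(\varrho_T)$, limits as $T\to T_c^+$ and $T\to+\infty$) is encapsulated in Assumption \ref{A4}, so the proof here is really a short structural argument built on the gauge transformation and the free-energy variational characterization of the GGS.
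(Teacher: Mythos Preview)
Your approach is essentially the same as the paper's: both reduce the constrained entropy minimization to the free-energy minimization of Assumption~\ref{A4} via the gauge conjugation by $e^{i(b_0/a_0)\cdot x}$, then read off the minimizer as the (conjugated) generalized Gibbs state at the unique temperature matching the energy constraint. The paper packages the gauge step as an explicit Lemma (your $n,u,e$ transformation rules) and first identifies the minimizer of $F_T$ on $\calS(a_0,b_0)$ before specializing to the energy shell; you go directly to the bijection $\calS(a_0,b_0,c)\leftrightarrow\widetilde{\calS}$ and compare on $\calS(a_0)$, which is slightly leaner but equivalent. One small point worth tightening: your stated energy-shift formula (``increases by exactly $|b_0|^2/a_0$ when $\int u=0$'') is only the inverse direction; to see that the forward map lands in $\widetilde{\calS}$ you should also record that for $\sigma\in\calS(a_0,b_0,c)$ one has $E(U^*\sigma U)=c-|b_0|^2/a_0$, which follows from the same identity using $\int u[\sigma]=b_0$. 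Your separate treatment of the endpoint $c=a_0\lambda_0+|b_0|^2/a_0$ via non-degeneracy of the ground state is in fact more careful than the paper, which tacitly assumes that $T\mapsto E(\varrho_T)$ attains the value $a_0\lambda_0$ at $T=T_c$ although Assumption~\ref{A4} only guarantees it as a limit.
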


The proof relies on the monotonicity of the energy stated in Assumption \ref{A4}.

\paragraph{Proof of Proposition \ref{exglob},  step 1: an auxiliary problem.} The proof begins by showing that the solution to the minimization problem
\begin{equation}\label{eq:minGdensitycurrent}
\min_{\varrho\in\calS(a_0,b_0)} F_T(\varrho)=\min_{\varrho\in\calS(a_0,b_0)} E(\varrho)+T S(\varrho ),
\end{equation}
where
\begin{equation*}
\calS(a_0)=\{ \varrho \in \calE^+:\; \|n[\varrho]\|_{L^1}=a_0\}, \qquad \calS(a_0,b_0)=\{ \varrho \in \calS(a_0): \; \| u[\varrho]\|_{L^1}=b_0\},
\end{equation*}
can be expressed in terms of the minimizers of $F_T$ on $\calS(a_0)$ thanks to a gauge transformation. We will need the following simple result.
\begin{lemma}\label{lem:GaugeB}
Let $a_0>0$ and $b,b_0\in\mathbb{R}^d$. Consider the transformation $G_b:\calE^+ \mapsto \calE^+$ given by
\begin{equation*}
G_b(\varrho) = e^{ix\cdot b} \varrho  e^{-ix\cdot b}.
\end{equation*} Then, for $\varrho\in\calS(a_0,b_0)$, we have
\begin{equation*}
n[G_b(\varrho)] = n[\varrho],\quad u[G_b(\varrho)] = u[\varrho] + n[\varrho] b\quad\textrm{and}\quad e[G_b(\varrho)] = e[\varrho] - 2 b \cdot u[\varrho] + |b|^2n[\varrho].
\end{equation*}
 \end{lemma}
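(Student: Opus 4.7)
The argument proceeds by a term-by-term computation based on the spectral decomposition of $\varrho$. The plan is to write $\varrho = \sum_{p \in \Nm} \rho_p \ket{\phi_p}\bra{\phi_p}$ with $\{\phi_p\}$ an orthonormal family in $H^1(\Rm^d)$ (possible since $\varrho \in \calE^+$), and to exploit that multiplication by the unimodular function $e^{ix\cdot b}$ is unitary on $L^2(\Rm^d)$ and preserves $H^1$-regularity, as can be seen from the pointwise identity $\nabla (e^{ix\cdot b}\phi) = e^{ix\cdot b}(ib \phi + \nabla \phi)$. Setting $\psi_p := e^{ix\cdot b}\phi_p$, the family $\{\psi_p\}$ is again orthonormal in $H^1(\Rm^d)$ and $G_b(\varrho) = \sum_{p \in \Nm} \rho_p \ket{\psi_p}\bra{\psi_p}$ is a spectral decomposition of $G_b(\varrho)$. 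In particular, $G_b(\varrho) \in \calE^+$, so $n[G_b(\varrho)]$, $u[G_b(\varrho)]$ and $k[G_b(\varrho)]$ are all well-defined in $L^1(\Rm^d)$ by Remark \ref{rem1}, which legitimizes manipulating the defining series termwise.

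The density identity $n[G_b(\varrho)] = n[\varrho]$ follows immediately from $|\psi_p|^2 = |\phi_p|^2$. For the current, a direct computation using the formula for $\nabla \psi_p$ above yields
\begin{equation*}
\psi_p^*\nabla\psi_p = ib|\phi_p|^2 + \phi_p^*\nabla\phi_p,
\end{equation*}
whose imaginary part is $b|\phi_p|^2 + \Im(\phi_p^*\nabla \phi_p)$. Summing against $\rho_p$ and using the $L^1$-convergence from Remark \ref{rem1} produces $u[G_b(\varrho)] = n[\varrho] b + u[\varrho]$.

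For the kinetic energy, I expand $|\nabla \psi_p|^2 = |ib\phi_p + \nabla\phi_p|^2$, and after simplifying the cross term with the identity $\Im(\phi_p \nabla \phi_p^*) = -\Im(\phi_p^* \nabla \phi_p)$, I obtain
\begin{equation*}
|\nabla\psi_p|^2 = |\nabla \phi_p|^2 + 2\,b\cdot \Im(\phi_p^* \nabla \phi_p) + |b|^2 |\phi_p|^2.
\end{equation*}
Summing against $\rho_p$ gives the formula for $k[G_b(\varrho)]$, and combining with the density identity (noting $V n[G_b(\varrho)] = V n[\varrho]$ since $V$ acts by multiplication) yields the stated formula for $e[G_b(\varrho)]$.

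No substantial obstacle arises: the proof is purely algebraic, and the only point needing care is the justification of the termwise operations on the eigenexpansions, which is precisely what is guaranteed by Remark \ref{rem1} once $G_b(\varrho) \in \calE^+$ has been established at the outset.
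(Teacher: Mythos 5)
Your proposal follows the paper's own proof essentially verbatim: spectral decomposition of $\varrho$, the pointwise identity $\nabla(e^{ix\cdot b}\phi_p)=e^{ix\cdot b}(ib\phi_p+\nabla\phi_p)$, and termwise summation justified by Remark \ref{rem1}. The one issue is a sign inconsistency in your final step. Your displayed identity
\begin{equation*}
|\nabla\psi_p|^2=|\nabla\phi_p|^2+2\,b\cdot\Im(\phi_p^*\nabla\phi_p)+|b|^2|\phi_p|^2
\end{equation*}
is correct: the cross term is $2\Re\big(\overline{ib\phi_p}\cdot\nabla\phi_p\big)=2\Re\big(-i\,b\cdot\phi_p^*\nabla\phi_p\big)=+2\,b\cdot\Im(\phi_p^*\nabla\phi_p)$. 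Summing against $\rho_p$ therefore gives $k[G_b(\varrho)]=k[\varrho]+2b\cdot u[\varrho]+|b|^2 n[\varrho]$, hence $e[G_b(\varrho)]=e[\varrho]+2b\cdot u[\varrho]+|b|^2 n[\varrho]$, with a plus sign on the cross term --- the opposite of what the lemma states. So your (correct) computation does not ``yield the stated formula''; it yields the formula with the sign corrected. The sign error is in fact in the paper: its proof passes from $-2\Re(i\phi_j^*\,b\cdot\nabla\phi_j)$ to $-2b\cdot u[\varrho]$ using $\Re(i\zeta)=\Im(\zeta)$ where the correct identity is $\Re(i\zeta)=-\Im(\zeta)$. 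A sanity check with $\phi=e^{ik\cdot x}$ confirms the plus sign ($k[\psi]=|k+b|^2$, not $|k-b|^2$), as does consistency with the compatibility inequality $k[\varrho]\geq |u[\varrho]|^2/n[\varrho]$ of Remark \ref{rem2}. You should either correct the sign in your conclusion or explicitly flag the discrepancy with the statement; note that it is harmless for the way the lemma is used later, since it is only applied to states in $\calS(a_0,0)$, for which the cross term integrates to zero.
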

 \begin{proof}
Denoting by $\{\rho_j,\phi_j\}_{j\in\mathbb{N}}$ the eigen-elements of $\varrho$, $G_b(\varrho)$ has the following spectral decomposition (all series below converge since $\varrho \in \calE^+$)
  \begin{equation*}
e^{ix\cdot b} \varrho  e^{-ix\cdot b}= \sum_{j\in\mathbb{N}} \rho_j |e^{ix\cdot b}\phi_j\rangle\langle e^{ix\cdot b}\phi_j|.
 \end{equation*}
 We directly deduce that $n[G_b(\varrho)] = n[\varrho]$ and we have
 \begin{equation*}
 u[G_b(\varrho)] = \sum_{j\in\mathbb{N}} \rho_j\Im\left(\phi_j^* (  i b\,\phi_j + \nabla \phi_j)\right) = b\, n[\varrho]+ u[\varrho].
 \end{equation*}
Moreover, we compute
 \begin{align*}
 k[G_b(\varrho)] &= \sum_{j\in\mathbb{N}} \rho_j \left|\nabla \phi_j + i b \phi_j \right|^2
 \\ &= k[\varrho] + \sum_{j\in\mathbb{N}} \rho_j \left(-2\Re(i  \phi_j^* \,b\cdot \nabla \phi_j)  + |b|^2\left|\phi_j \right|^2 \right) = k[\varrho] - 2 b \cdot u[\varrho] + |b|^2n[\varrho],
 \end{align*}
 which yields
 \begin{equation*}
 e[G_b(\varrho)] = e[\varrho] - 2 b \cdot u[\varrho] + |b|^2n[\varrho].
 \end{equation*}
 \end{proof}

\bigskip

Using the previous Lemma, and the fact that the eigenvalues of a density operator are not changed by the action of $G_{b_0/a_0}$, we have, for any $\varrho\in\calS(a_0,0)$,
\begin{equation*}
S(G_{b_0/a_0}(\varrho)) = S(\varrho)\quad\textrm{and}\quad E(G_{b_0/a_0}(\varrho)) = E(\varrho) + |b_0|^2/a_0.
\end{equation*}
Moreover, it is clear that $G_{b_0/a_0}$ is a bijective mapping from $\calS(a_0,0)$ to $\calS(a_0,b_0)$ with inverse $G_{-b_0/a_0}$. For any $T>0$, denote now by $\varrho_{T,a_0}$  the minimizer of the free energy $F_T$ in $\calS(a_0)$, which exists and is unique according to Assumption \ref{A4}. It verifies $u[\varrho_{T,a_0}]=0$. Hence, $\varrho_{T,a_0}\in\calS(a_0,0)\subset\calS(a_0)$. Thus, we have, using Lemma \ref{lem:GaugeB},
\begin{align*}
\min_{\varrho\in\calS(a_0,b_0)}F_T(\varrho) &= \min_{\sigma\in\calS(a_0,0)}F_T(G_{b_0/a_0}(\sigma)) = \min_{\sigma\in\calS(a_0,0)}F_T(\sigma) + |b_0|^2/a_0 
\\ &\geq \min_{\sigma\in\calS(a_0)}F_T(\sigma) + |b_0|^2/a_0 =   F_T(\varrho_{T,a_0}) + |b_0|^2/a_0  = F_T(G_{b_0/a_0}(\varrho_{T,a_0})),
\end{align*}
which proves that $G_{b_0/a_0}(\varrho_{T,a_0})$ is a minimizer of $F_T$ since $G_{b_0/a_0}(\varrho_{T,a_0}) \in \calS(a_0,b_0)$. Since $S$ is strictly convex, we deduce that $G_{b_0/a_0}(\varrho_{T,a_0})$ is the unique minimizer. We have therefore just characterized the solution to \fref{eq:minGdensitycurrent}.

\paragraph{Proof of Proposition \ref{exglob},  step 2: monotonicity argument.} We will use now Assumption \ref{A4}, and remark first that, by Lemma \ref{lem:GaugeB},
\begin{align*}
\mathrm{E}_{a_0,b_0}(T) : &=\Tr\left(\sqrt{H}G_{b_0/a_0}(\varrho_{T,a_0})\sqrt{H} \right) = \Tr\left(\sqrt{H}\varrho_{T,a_0}\sqrt{H} \right) + |b_0|^2/a_0 
\\ &=: \mathrm{E}_{a_0}(T)+ |b_0|^2/a_0.
\end{align*}
According to Assumption \ref{A4},  $\mathrm{E}_{a_0}(T)$ is a continuous strictly increasing function from $[T_c,+\infty)$ to $[a_0 \lambda_0,+\infty)$. Hence, for any $c\in[a_0\lambda_0+|b_0|^2/a_0 ,+\infty)$, there exists a unique $T(a_0,b_0,c)\geq T_c$ such that $G_{b_0/a_0}(\varrho_{T(a_0,b_0,c),a_0})$ has total energy $\mathrm{E}_{a_0,b_0}(T)=c$, and as a consequence
\begin{equation*}
G_{b_0/a_0}(\varrho_{T(a_0,b_0,c),a_0})\in \calS(a_0,b_0,c).
\end{equation*}
We now show that $G_{b_0/a_0}(\varrho_{T(a_0,b_0,c),a_0})$ is the unique minimizer of $S$ on $\calS(a_0,b_0,c)$. Since $G_{b_0/a_0}$ is a bijective mapping from $\calS(a_0,0,c-|b_0|^2/a_0)$ to $\calS(a_0,b_0,c)$, it follows that $\varrho_{T(a_0,b_0,c),a_0}\in\calS(a_0,0,c-|b_0|^2/a_0)$. Introducing the notation $T_0 = T(a_0,b_0,c)$, we have
\begin{align*}
T_0 S(\varrho_{T_0,a_0}) + c-|b_0|^2/a_0&=F_{T_0}(\varrho_{T_0,a_0}) =\min_{\varrho\in\calS(a_0)} F_{T_0}(\varrho)\leq \min_{\varrho\in\calS(a_0,b_0,c)} F_{T_0}(\varrho).
\end{align*}
Then
\begin{align*}
 \min_{\varrho\in\calS(a_0,b_0,c)} F_{T_0}(\varrho)&=  T_0 \min_{\varrho\in\calS(a_0,0,c-|b_0|^2/a_0)}S(G_{b_0/a_0}(\varrho)) + c \\& \leq T_0 S(G_{b_0/a_0}(\varrho_{T_0,a_0})) + c =T_0 S(\varrho_{T_0,a_0}) + c.
\end{align*}
Above, we used the facts that $S(G_{b_0/a_0}(\varrho_{T_0,a_0}))=S(\varrho_{T_0,a_0})$ since $G_{b_0/a_0}$ is unitary, and that $\varrho_{T_0,a_0} \in \calS(a_0,0,c-|b_0|^2/a_0)$ by construction. Hence, we obtain that \be \label{carS}
\displaystyle\min_{\varrho\in\calS(a_0,b_0,c)} S(\varrho) =  S(\varrho_{T_0,a_0}) = S(G_{b_0/a_0}(\varrho_{T_0,a_0}))\ee
and, by the strict convexity of $S$, that $G_{b_0/a_0}(\varrho_{T_0,a_0})$ is the unique minimizer of $S$ in $\calS(a_0,b_0,c)$. This concludes the proof of Proposition \ref{exglob}.

\subsection{Step 4: Conclusion}
We now use Proposition \ref{exglob} and need to show that $c_1 \geq a_0 \lambda_0 + |b_0|^2/a_0$ (which implies $c_0>a_0 \lambda_0 + |b_0|^2/a_0$ since by assumption $c_0>c_1$), in order to obtain that $S$ admits unique minimizers on $\calS(a_0,b_0,c_1)$ and $\calS(a_0,b_0,c_0)$. For this, we recall that  $\varrho_\star \in \calS(a_0,b_0,c_1)$, and that any density operator in $\calE^+$ verifies a.e., see Remark \ref{rem2},
$$
|\nabla \sqrt{n[\varrho]}|^2 + \frac{|u[\varrho]|^2}{n[\varrho]} \leq k[\varrho].
$$
Since we know by construction that $n[\varrho_\star]=n_0$ and $u[\varrho_\star]=u_0$, it follows that
$$
\|\nabla \sqrt{n_0}\|^2_{L^2}+\| u_0/\sqrt{n_0}\|^2_{L^2}+\|n_0 V\|_{L^1} \leq \|k[\varrho_\star]\|_{L^1}+\|n_0 V\|_{L^1}=\Tr (\sqrt{H}\varrho_\star \sqrt{H})=c_1.
$$
It is clear by the min-max principle that
$$
\lambda_0 \|n_0\|_{L^1} \leq \|\nabla \sqrt{n_0}\|^2_{L^2}+\|n_0 V\|_{L^1}=E(\ket{\sqrt{n_0}}\bra{\sqrt{n_0}}),
$$
and by the Cauchy-schwarz inequality that
$$
\frac{|b_0|^2}{a_0}=\frac{|\int_{\Rm^d} u_0(x) dx|^2}{\|n_0\|_{L^1}} \leq \| u_0/\sqrt{n_0}\|^2_{L^2}.
$$
Collecting inequalities, we find the desired result $c_1 \geq a_0 \lambda_0 + |b_0|^2/a_0$. As a consequence of Proposition \ref{exglob}, minimizers of $S$ on $\calS(a_0,b_0,c_1)$ and $\calS(a_0,b_0,c_0)$ are uniquely defined, and inequality \fref{eq:IneqContraS} holds true. It reads
$$
\min_{\sigma\in\calS(a_0,b_0,c_1)} S(\sigma)\leq \min_{\sigma\in\calS(a_0,b_0,c_0)} S(\sigma),
$$
which is also, using \fref{carS},
\begin{equation} \label{cont}
  S(\varrho_{T_1,a_0}) \leq S(\varrho_{T_0,a_0}),
\end{equation}
where $T_1 : = T(a_0,b_0,c_1)$, $T_0:=T(a_0,b_0,c_0)$ are obtained as in Step 2 of Proposition \ref{exglob}. Since $T(a_0,b_0,c)$ is strictly increasing with respect to $c$ as $T \mapsto \mathrm{E}_{a_0,b_0}(T)$ is strictly increasing, we have that $T_1<T_0$ since $c_1<c_0$. According to Assumption \ref{A4}, $T \mapsto S(\varrho_{a_0,T})$ is strictly decreasing, resulting in $S(\varrho_{T_0,a_0}) <S(\varrho_{T_1,a_0})$, which contradicts \eqref{cont}. The only possibility is therefore that $\|k[\varrho_\star]\|_{L^1}=\|k_0\|_{L^1}$ which concludes the proof of Theorem \ref{mainth}.

\section{Proof of Theorem \ref{mainth2}} \label{proofth2}

We start with Assumptions \ref{A1} and \ref{A2}, whose proofs are relatively direct, and turn next to Assumption \ref{A4} which requires more work.

  \subsection{Verification of Assumptions \ref{A1} and \ref{A2}}

The fact that Assumptions \ref{A1} and \ref{A2} are satisfied is a direct consequence of the next lemma.

\begin{lemma}\label{lem:EntropyProp}
  Under the conditions and notations of Theorem \ref{mainth2}, the following results hold:
  \begin{enumerate}
\item (Control of the entropy) Let $\varrho \in \calE^+$ with $n[\varrho] V \in L^1(\Rm^d)$. Then, we have the estimate
\begin{equation}\label{eq:estEntropy} 
\Tr(|\beta(\varrho)|) \leq C \max_{x \in [0,\bar{n}]} |\beta(x)|+C E(\varrho)^\gamma,
\end{equation}
where we recall $E(\varrho) =  \Tr( \sqrt{H} \varrho \sqrt{H })$.
\item (Continuity of the entropy) Let $\{\varrho_m\}_{m\in\mathbb{N}}$ be a bounded sequence in $\mathcal{E}^+$ with $\|n[\varrho_m] V\|_{L^1}$ bounded, that converges strongly in $\calJ_1$ to $\varrho \in \calE^+$ with $n[\varrho] V \in L^1(\Rm^d)$. Then,
  $$
  \lim_{m \to \infty} \Tr \left(\beta(\varrho_m)\right)=\Tr \left(\beta(\varrho)\right).
  $$
\end{enumerate}
\end{lemma}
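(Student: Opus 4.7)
Both items are controlled by a single ingredient: an estimate of the form $\sum_p \rho_p^\gamma \leq C\, E(\varrho)^\gamma$ on the eigenvalues of $\varrho \in \calE^+$, supplemented by the spectral-asymptotic bound $\sum_p \lambda_p^{-s} \leq C\|V^{d/2-s}\|_{L^1}$ on the eigenvalues of $H$ for $s>d/2$.

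For item (1), I would start by integrating \fref{polcont} to convert the polynomial singularity of $\beta'$ at the origin into a pointwise bound $|\beta(x)| \leq \tilde C\, x^\gamma$ on $[0,\bar x]$. Ordering the eigenvalues of $\varrho$ as $\{\rho_p\}_p$ in nonincreasing order, I split the trace $\Tr(|\beta(\varrho)|) = \sum_p |\beta(\rho_p)|$ according to the threshold $\bar x$. Since $\sum_p \rho_p = \|n[\varrho]\|_{L^1} \leq \bar n$, at most $\bar n/\bar x$ indices belong to the large group, and each contributes at most $\max_{[0,\bar n]}|\beta|$; the remaining terms contribute at most $\tilde C \sum_p \rho_p^\gamma$. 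To bound this last sum, let $\{\lambda_p\}_p$ denote the nondecreasing eigenvalues of $H$, which are discrete by Assumption \ref{A3}. A rearrangement (Lidskii-type) inequality gives
\begin{equation*}
E(\varrho) = \Tr(\sqrt{H}\varrho\sqrt{H}) \geq \sum_p \rho_p \lambda_p,
\end{equation*}
and writing $\rho_p^\gamma = (\rho_p\lambda_p)^\gamma \lambda_p^{-\gamma}$, Hölder's inequality with conjugate exponents $1/\gamma$ and $1/(1-\gamma)$ yields
\begin{equation*}
\sum_p \rho_p^\gamma \leq \Big(\sum_p \rho_p\lambda_p\Big)^\gamma \Big(\sum_p \lambda_p^{-\gamma/(1-\gamma)}\Big)^{1-\gamma} \leq C\, E(\varrho)^\gamma,
\end{equation*}
the last factor being finite because $s := \gamma/(1-\gamma) > d/2$ (by $\gamma > d/(d+2)$): combining the Golden--Thompson/Feynman--Kac bound $\Tr(e^{-tH}) \leq (4\pi t)^{-d/2}\int e^{-tV}dx$ with the Mellin identity $\lambda^{-s} = \Gamma(s)^{-1}\int_0^\infty t^{s-1}e^{-t\lambda}dt$ gives $\sum_p \lambda_p^{-s} \leq C_{d,s} \|V^{d/2-s}\|_{L^1}$, finite under the integrability hypothesis on $V$ in Theorem \ref{mainth2}.

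Item (2) would then follow via a Vitali-type argument. Strong $\calJ_1$-convergence of $\varrho_m$ to $\varrho$ implies operator-norm convergence, and Weyl's inequality yields pointwise convergence of the ordered eigenvalues $\rho_p^{(m)} \to \rho_p$ for every $p$. Continuity of $\beta$ on $[0,\bar n]$ then gives $\beta(\rho_p^{(m)}) \to \beta(\rho_p)$ termwise. Since $\sup_m E(\varrho_m) < \infty$ by hypothesis, the Hölder estimate above propagates uniformly, yielding a tail bound
\begin{equation*}
\sum_{p>N} |\beta(\rho_p^{(m)})| \leq C \sup_m E(\varrho_m)^\gamma \Big(\sum_{p>N} \lambda_p^{-\gamma/(1-\gamma)}\Big)^{1-\gamma},
\end{equation*}
which tends to zero as $N\to\infty$ independently of $m$. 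An $\eps/3$ split between the finite partial sums up to $N$ (handled by termwise convergence and continuity of $\beta$) and the uniformly small tails concludes.

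The main obstacle is the Lidskii-type rearrangement inequality $E(\varrho) \geq \sum_p \rho_p\lambda_p$, since one must reconcile the two generally non-commuting eigenbases of $\varrho$ and $H$. A clean justification expands $E(\varrho)$ in the spectral decompositions of both operators, observes that the matrix $|\langle\phi_p,\psi_q\rangle|^2$ is doubly stochastic after extending to orthonormal bases of $L^2(\Rm^d)$, and invokes a Hardy--Littlewood--P\'olya-type pairing argument; some care is required to accommodate infinite-dimensional systems and a possibly non-trivial kernel of $\varrho$. Everything else reduces to routine trace and Hölder estimates.
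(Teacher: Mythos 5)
Your proposal is correct and follows essentially the same route as the paper: the same split of the spectrum of $\varrho$ at a threshold, the same H\"older pairing $\rho_p^\gamma=(\rho_p\lambda_p)^\gamma\lambda_p^{-\gamma}$ combined with the rearrangement inequality $\sum_p\rho_p\lambda_p\le E(\varrho)$ (the paper's \fref{contsum}, cited from [MP-JSP]) and the spectral bound $\sum_p\lambda_p^{-\gamma/(1-\gamma)}\le C\|V^{d/2-\gamma/(1-\gamma)}\|_{L^1}$ (the paper's \fref{sumeig}, cited from Dolbeault--Loss, which is indeed proved by your Golden--Thompson/Mellin argument), and for Item 2 the same termwise-convergence-plus-uniform-tail scheme. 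The only deviations are cosmetic: you sketch proofs of the two auxiliary inequalities where the paper cites them, and you truncate the tail at a fixed index rather than at a fixed eigenvalue threshold $\eta$ as the paper does.
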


The proof of the Lemma is postponed to the end of the section. We now verify Assumptions \ref{A1} and \ref{A2}. Consider for $\calM_0$ the set of $(n_0,u_0,k_0) \in \calM$ such that  $n_0 V \in L^1(\Rm^d)$ for $V$ as in Assumption \ref{A3} and satisfying $V^{-1} \in L^{\frac{\gamma}{1-\gamma}-\frac{d}{2}}(\Rm^d)$. Then, the first Item of Lemma \ref{lem:EntropyProp} holds under the conditions of Theorem \ref{mainth2}, and $|\Tr(\beta(\varrho))|$ is finite for $\varrho \in \calA(n_0,u_0,k_0)$. This  shows that Assumption \ref{A1} is verified. Regarding Assumption \ref{A2}, we first notice that if  $\{\varrho_m\}_{m\in\mathbb{N}}$ is a sequence in $\calA(n_0,u_0,k_0)$ converging in $\calJ_1$ to some $\varrho$, we have necessarily $n[\varrho] V \in L^1(\Rm^d)$ as a consequence of $n_0 V \in L^1(\Rm^d)$. Indeed, first of all $\sqrt{n[\varrho_m] V}$ is uniformly bounded in $L^2(\Rm^d)$ since $n[\varrho_m]=n_0$ and $n_0 V \in L^1(\Rm^d)$. Hence, there exists $v \in L^2(\Rm^d)$ and a subsequence such that $\sqrt{n[\varrho_{m_k}] V}$ converges weakly to $v$ with
$$
\|v\|_{L^2} \leq \liminf_{k \to \infty} \|n[\varrho_{m_k}] V\|^{1/2}_{L^1}=\|n_0V\|^{1/2}_{L^1}.
$$
To identify $v$, we remark first that $n[\varrho_m]$ converges to $n[\varrho]$ in $L^1(\Rm^d)$ since $\varrho_m$ converges to $\varrho$ in $\calJ_1$, and as a consequence $\sqrt{n[\varrho_m]} \to \sqrt{n[\varrho]}$ in $L^2(\Rm^d)$. Hence, for any $\varphi \in L^2(\Rm^d)$,
$$
\lim_{k \to \infty} \int_{\Rm^d} \sqrt{n[\varrho_{m_k}] V}(1+V)^{-1} \varphi dx = \int_{\Rm^d} v(1+V)^{-1} \varphi dx=\int_{\Rm^d} \sqrt{n[\varrho]V} (1+V)^{-1} \varphi dx,
$$
which yields $v=\sqrt{n[\varrho] V}$. Hence, $n[\varrho ] V\in L^1(\Rm^d)$. 

To conclude, since every sequence in $\calA(n_0,u_0,k_0)$ is bounded in $\calE^+$, we can use Item 2 of Lemma \ref{lem:EntropyProp} to obtain the continuity of the entropy. This yields Assumption \ref{A2}.

\paragraph{Proof of Lemma \ref{lem:EntropyProp}: Item 1. }  We will use the following classical inequality, see e.g. \cite[Lemma A.1]{MP-JSP}: for any $\varrho \in \calE^+$ such that $n[\varrho] V \in L^1(\Rm^d)$, we have
\begin{equation} \label{contsum}
\sum_{j\in\mathbb{N}} \lambda_j\rho_j \leq  \Tr\left( \sqrt{H}\varrho \sqrt{H}\right),
\end{equation}
where $\{\lambda_j\}_{j \in \Nm}$ is the nondecreasing sequence of eigenvalues of $H$ and $\{\rho_j\}_{j \in \Nm}$ the nonincreasing sequence of eigenvalues of $\varrho$. Let $a \in (0, \bar x]$. Then
\begin{equation*}
  \Tr \left(|\beta (\varrho)|\right)=\sum_{j\in\mathbb{N}} |\beta(\rho_j)| = \sum_{j <N_a} |\beta(\rho_j)|+\sum_{j \geq N_a} |\beta(\rho_j)|,
\end{equation*}
where $N_a$ is such that $\rho_{N_a} \leq a$ and $\rho_{N_a-1} > a$. Since $\beta(0)=0$, we have according to \fref{polcont}, for $x \in [0, \bar x]$,
\be \label{contbet}
|\beta(x)| \leq \int_0^x |\beta'(s)| ds \leq C |x|^{\gamma}. 
\ee
Using H\"older's inequality and \fref{contbet}, it follows that 
\bee
\sum_{j \geq N_a} |\beta(\rho_j)| &\leq& C \sum_{j \geq N_a} |\rho_j|^\gamma  = C \sum_{j\geq N_a} |\rho_j\lambda_j|^{\gamma} \lambda_j^{-\gamma} \\
&\leq& \left( \sum_{j\geq N_a} \rho_j\lambda_j\right)^{\gamma}\left( \sum_{j\geq N_a} \lambda_j^{-\frac{\gamma}{1-\gamma}}\right)^{1-\gamma}.
\eee
According to \cite{Dolbeault-Loss}, Theorem 1, we have the estimate
\be \label{sumeig}
\sum_{j\in \Nm} \lambda_j^{-\frac{\gamma}{1-\gamma}} \leq C_{\gamma,d} \int_{\Rm^d} (V(x))^{\frac{d}{2}-\frac{\gamma}{1-\gamma}}dx,
\ee
which is finite by the assumption on $V$ since $\frac{d}{2}-\frac{\gamma}{1-\gamma}<0$. Since we have directly
$$
\sum_{j <N_a} |\beta(\rho_j)| \leq N_a \max_{x \in [0, \bar x]} |\beta(x)|,
$$
we find, using \fref{contsum} and collecting estimates,
$$
 \sum_{j\in\mathbb{N}} |\beta(\rho_j)| \leq N_a \max_{x \in [0,\bar{n}]} |\beta(x)|+C E(\varrho)^\gamma,
$$
which is \fref{eq:estEntropy}.
\paragraph{Proof of Lemma \ref{lem:EntropyProp}: Item 2.} We now turn to the convergence of the entropy. Let $\eta \in (0,\bar n)$ and decompose
\begin{equation*}
\beta(t) = \beta(t)\un_{t\leq\eta} + \beta(t)\un_{t> \eta} = : \beta^{(1)}_{\eta}(t) + \beta^{(2)}_{\eta}(t).
\end{equation*}
Since $\{\varrho_m\}_{m\in\mathbb{N}}$ converges to $\varrho$ in $\mathcal{J}_1$, we have, for any $j\in\mathbb{N}$,
\begin{equation} \label{conveig}
\rho_{m,j}\underset{m\to+\infty}{\to} \rho_j,
\end{equation}
for $\{\rho_{m,j}\}_{j \in \Nm}$ the eigenvalues of $\varrho_m$. By continuity of $\beta_\eta^{(2)}$, this yields
\be \label{combet2}
\lim_{m \to \infty }\Tr\left(\beta^{(2)}_{\eta}(\varrho_m)\right)=\Tr\left(\beta^{(2)}_{\eta}(\varrho)\right).
\ee
Regarding $\beta_\eta^{(1)}$, let $N_\eta$ and $N_\eta^m$  such that $\rho_{N_\eta}\leq \eta$, $\rho_{N_\eta-1} > \eta$, and $\rho_{m,N_\eta^m} \leq \eta$, $\rho_{m,N_\eta^m-1} > \eta$. Thanks to \fref{conveig}, we have $N_\eta^m \to N_\eta$ as $m \to \infty$ for each $\eta$, and we choose $m_0(\eta)$ sufficiently large that $|N_\eta-N_\eta^{m_0}| \leq 1$ for $m \geq m_0$. Then, for $m \geq m_0$ and proceeding as in the proof of Item 1, we find
\bee
\sum_{j \geq N_\eta^m} |\beta(\varrho_{m,j})| &\leq& E(\varrho_m)^\gamma \left( \sum_{j\geq N_\eta-1} \lambda_j^{-\frac{\gamma}{1-\gamma}}\right)^{1-\gamma}.
\eee
Since $N_\eta \to \infty$ as $\eta \to 0$, and $\sum_{j\in \Nm} \lambda_j^{-\frac{\gamma}{1-\gamma}}$ is finite as seen in the proof of Item 1, it follows that
$$
\lim_{\eta \to 0} \sup_{m \in \Nm} |\Tr\left(\beta^{(1)}_{\eta}(\varrho_m)\right|=0.
  $$
  Since the same estimate applies to $\beta^{(1)}_{\eta}(\varrho)$, we obtain the desired result by combining with \fref{combet2}. This ends the proof of Lemma \ref{lem:EntropyProp}.

  \subsection{Verification of Assumption \ref{A4}}

The proof is divided into various steps. We construct first the minimizer of the free energy $F_T(\varrho)$ under the global density constraint $\Tr(\varrho)=\bar n$. Then, we prove the strict monotonicity of the energy and the entropy as well as the upper and lower limits for the energy. 

 \paragraph{Preliminaries.}   Let
  $$
  \beta_-=\lim_{x \to 0} \beta'(x), \qquad \beta_+=\lim_{x \to \bar{n}} \beta'(x),
  $$
so that $\beta': (0, \bar{n}) \mapsto (\beta_-,\beta_+)$. We need to be careful with the range of $\beta'$ since the minimizer takes slightly different forms whether $\beta_-$ is finite or not. We have $\beta_-<\beta_+$ since $\beta'$ is strictly increasing, and possibly $\beta_-=-\infty$ and $\beta_+=+\infty$. We introduce \begin{equation*}
\xi:\; t\in(-\beta_+,-\beta_-) \to (\beta')^{-1}(-t)\in (0,\bar{n}),
\end{equation*}
with $\xi(-\beta_+):=\lim_{x \to -\beta_+} \xi(x)=\bar{n}$ and $\xi(-\beta_-):=\lim_{x \to -\beta_-} \xi(x)=0$ since $\xi$ is strictly decreasing.  For $T>0$, let $\xi_T(x)=\xi(x/T)$ and $\mu_0 \equiv \mu_0(T)=-T \beta_+-\lambda_0$ where $\lambda_0$ is the smallest eigenvalue of $H$ (and $\mu_0=-\infty$ by definition when $\beta_+=+\infty$). We will treat separately the cases $\beta_-$ finite and $\beta_-$ infinite since the proofs may differ, and the cases $\beta_+$ finite and $\beta_+$ infinite simultaneously since the proofs are identical. For uniformity of notation when $\mu_0=-\infty$ and $\beta_+=+\infty$, we will write $f(\mu_0)$ or $f(-\beta_+)$ for an arbitrary function $f$ as a shorthand for $\lim_{x \to -\infty} f(x)$.

We then introduce the following operator,  defined for $\mu \in [\mu_0,\infty)$, which will serve as candidate minimizer of $F_T$,
  $$
  \varrho_{T,\mu}:=\xi_T(H+\mu) \un_{\{H+\mu \leq -T \beta_-\}} = \sum_{j\in\mathbb{N}} \xi_T(\lambda_j+\mu) \un_{\{\lambda_j+\mu \leq -T \beta_-\}} \ket{\phi_j} \bra{\phi_j},
  $$
  where $\{\lambda_j\}_{j \in \Nm}$ and $\{\phi_j\}_{j \in \Nm}$ are the (nondecreasing sequence of) eigenvalues and eigenfunctions of $H$.
\begin{remark} Note that the choice $\mu \in [\mu_0,\infty)$ ensures that $H+\mu \geq -T \beta_+$. Furthermore, the cutoff $\un_{\{H+\mu \leq -T \beta_-\}}$ is necessary when $\beta_-$ is finite in order to make sure that the eigenvalues of $H+\mu$ remain in the range of $\beta'$. The cutoff becomes the identity when $\beta_-=-\infty$.
\end{remark} 
We introduce
\begin{equation}\label{eq:defN_T}
N_T(\mu) : = \max\left\{ j\in\mathbb{N}:\quad \lambda_{j}+\mu \leq -T \beta_-\right\},
\end{equation}
for  $\mu\in [\mu_0,\mu_M]$ with $\mu_M \equiv\mu_M(T):= -T \beta_- - \lambda_0 $. We extend its definition by setting $N_T(\mu) := -1$ when $\mu>\mu_M$ and, in the case where $\beta_-=-\infty$,  $N_T(\mu):=+\infty$ (and $\mu_M:=+\infty$). In particular, this yields, for $\mu\in [\mu_0,+\infty)$,
\begin{equation*}
\varrho_{T,\mu} = \sum_{j = 0}^{N_T(\mu)}  \xi_T(\lambda_j+\mu)\ket{\phi_j} \bra{\phi_j},
\end{equation*}
with the convention that a sum on an empty set is zero.
 \paragraph{Step 0: the operator $\varrho_{T,\mu}$ belongs to $\calE^+$.} It is clear that $\varrho_T$ is nonnegative. 
 
 \noindent\textit{The case $\beta_-$ infinite:} we then need to show that
  $$
  \Tr(\sqrt{H} \varrho_{T,\mu} \sqrt{H})+\Tr(\varrho_{T,\mu})=\sum_{j\in \Nm} (1+\lambda_j)\xi_T(\lambda_j+\mu)=:I<\infty.
  $$
  %where $\{\lambda_j\}_{j \in \Nm}$ is the nondecreasing sequence of eigenvalues of $H$.
  For this, we have $|\beta'(x)| \leq C_{\bar x,\gamma} |x|^{\gamma-1}$ for $x \in (0, \bar x]$ and $\gamma \in (\frac{d}{d+2},1)$ according to hypothesis \fref{polcont} in Theorem \ref{mainth2}. This implies that
  \be \label{contxi}
  |\xi(x)| \leq C_{\bar x,\gamma} |x|^{-\frac{1}{1-\gamma}}, \qquad \forall x \in [-\beta'(\bar x), -\beta_-).
  \ee
  For $\mu$ given in $[\mu_0,\infty)$, let $N_0$ such that $\lambda_j + \mu \geq -\beta'(\bar x)$ for all $j \geq N_0$, and let $N_1$ such that $\lambda_j \geq \mu/2$ for $j \geq N_1$. Define $N_2=\max(N_0,N_1)$. Since $\xi_T(\lambda_j+\mu)$ is a decreasing function of $\lambda_j+\mu$, we have, using \fref{contxi},
  \bea \nonumber
  I&\leq& \sum_{j=0}^{N_2-1} (1+\lambda_j)\xi_T(\lambda_0+\mu_0)
  +C_{\bar x,\gamma} T^{\frac{1}{1-\gamma}} \sum_{j \geq N_2} (1+\lambda_j) |\lambda_j+\mu|^{-\frac{1}{1-\gamma}}\\
  &\leq&\bar{n} \sum_{j=0}^{N_2-1} (1+\lambda_j)+ C'_{\bar x,\gamma}(2T)^{\frac{1}{1-\gamma}} \sum_{j \geq N_2} |\lambda_j|^{-\frac{\gamma}{1-\gamma}}. \label{contI}
    \eea
    Above, we used that $\xi_T(\lambda_0+\mu_0)=\bar n$. We have already seen in \fref{sumeig} that the last term on the right above is finite, and as a consequence $\varrho_{T,\mu} \in \calE^+$ for all $T>0$ and $\mu \in [\mu_0, \infty)$.

\noindent\textit{The case $\beta_-$ finite:} in that situation, it is clear that $\varrho_{T,\mu} \in \calE^+$ since $\varrho_{T,\mu}$ is of finite rank $N_T(\mu)$, and we obtain the same estimate for $I$ as above with  $C_{\bar x,\gamma}'=0$ and $N_2-1=N_T(\mu)$.
\paragraph{Step 1: Candidate for the minimizer of $F_T$.} In this paragraph, we construct a chemical potential $\mu_T$ such that $\Tr(\varrho_{T,\mu_T})=\bar n$, and as a consequence $\varrho_{T}:=\varrho_{T,\mu_T}$ belongs to the feasible set $\calS(\bar n)$ since we already know that $\varrho_{T} \in \calE^+$. Using the convexity of the entropy, we then show that $\varrho_T$ is the unique minimizer of $F_T$ in $\calS(\bar n)$.

Introduce the partition function
    \begin{equation*}
Z_T: \mu \in[\mu_0,+\infty) \to \Tr(\varrho_{T,\mu}) =  \sum_{j=0}^{N_T(\mu)}\xi_T(\lambda_j+\mu).
\end{equation*}
The function $Z_T(\mu)$ is strictly decreasing in $\mu$ since $\xi$ is strictly decreasing and $N_T(\mu)$ is nonincreasing, and we now prove the following lemma.
\begin{lemma}
For any $T>0$, the function $\mu\in[\mu_0,+\infty)\to Z_T(\mu)$ is continuous.
 \end{lemma}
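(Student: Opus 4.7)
The plan is to split according to whether $\beta_-$ is finite (so $\varrho_{T,\mu}$ has finite rank) or infinite (so $\varrho_{T,\mu}$ has infinite rank). In either situation, I would reduce the continuity of $Z_T$ to the continuity of the scalar map $x \mapsto \xi_T(x)$ on its domain, extended by the convention $\xi(-\beta_-) = 0$ already adopted in the preliminaries.

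\emph{Finite-rank case} ($\beta_-$ finite). Here $Z_T(\mu) = \sum_{j=0}^{N_T(\mu)} \xi_T(\lambda_j + \mu)$ is a finite sum, and $\mu \mapsto N_T(\mu)$ is piecewise-constant and nonincreasing, jumping only at the discrete thresholds $\mu_j^{\star} := -T\beta_- - \lambda_j$ (with $\mu_M = \mu_0^{\star}$ the rightmost one). Off these thresholds, $Z_T$ is manifestly continuous as a finite sum of continuous functions. At a threshold $\mu = \mu_j^{\star}$, the indices being dropped from the sum contribute terms of the form $\xi_T(\lambda_i + \mu_j^{\star}) = \xi(-\beta_-) = 0$ by the extension convention, so both one-sided limits match $Z_T(\mu_j^{\star})$. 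The regime $\mu > \mu_M$, where $N_T(\mu) = -1$ and $Z_T(\mu) = 0$, is handled in the same way at $\mu_M$.

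\emph{Infinite-rank case} ($\beta_- = -\infty$). Now the cutoff is trivial, $Z_T(\mu) = \sum_{j \in \mathbb{N}} \xi_T(\lambda_j + \mu)$, and each individual term is continuous in $\mu$. I would pass continuity through the infinite sum by dominated convergence on an arbitrary compact interval $[\mu_0, M] \subset [\mu_0, +\infty)$. The bound \eqref{contxi} gives, uniformly for $\mu \in [\mu_0, M]$ and for $j \geq N_0$ large enough that $\lambda_j + \mu_0 \geq -T\beta'(\bar x)$,
\[
0 \leq \xi_T(\lambda_j + \mu) \leq C_{\bar x,\gamma}\, T^{1/(1-\gamma)} (\lambda_j + \mu_0)^{-1/(1-\gamma)},
\]
which is summable in $j$ by exactly the computation already carried out for \eqref{contI}: one combines the Dolbeault--Loss estimate \eqref{sumeig} with the assumption $V^{d/2 - \gamma/(1-\gamma)} \in L^1(\mathbb{R}^d)$ and the fact that $1/(1-\gamma) > \gamma/(1-\gamma)$. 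The remaining finitely many terms $j < N_0$ are each bounded by $\bar n$, so one obtains a uniformly summable majorant on $[\mu_0, M]$ and dominated convergence for series delivers continuity.

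The only mildly nontrivial obstacle is securing this uniform summable majorant in the infinite-rank case; once one observes that the estimate already used in \eqref{contI} is uniform in $\mu$ on bounded subintervals, the conclusion is immediate. The finite-rank case is essentially forced by the continuous extension $\xi(-\beta_-) = 0$.
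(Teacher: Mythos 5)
Your proposal is correct and follows essentially the same route as the paper: the finite-rank case is handled by checking that the terms dropped at the jump points of $N_T(\mu)$ vanish because $\xi(-\beta_-)=0$, and the infinite-rank case by dominated convergence for series with a $\mu$-uniform summable majorant (the paper simply dominates by $\xi_T(\lambda_j+\mu_-)$, already known to be summable from Step 0, whereas you re-derive the majorant via \fref{contxi} and \fref{sumeig} — the same underlying computation).
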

 \begin{proof}
\noindent\textit{The case $\beta_-$ infinite:} this is a direct consequence of dominated convergence for series since $\mu \mapsto \xi_T(\lambda_j+\mu)$ is continuous and we have the uniform bound, for $\mu \in [\mu_-,\mu_+]\subset[\mu_0,+\infty)$,
\be \label{unixi}\xi_T(\lambda_j+\mu)\leq \xi_T(\lambda_j+\mu_-).\ee
\noindent\textit{The case $\beta_-$ finite:} we need to track how $N_T(\mu)$ changes when $\mu$ varies. Suppose first that $\mu$ is such that $\lambda_{N_T(\mu)}+\mu < -T \beta_-$. We can then take a sequence $\{\mu^k\}_{k \in \Nm}$ in a sufficiently small neighborhood of $\mu$ such that $N_T(\mu^k)=N_T(\mu)$ for all $k$, and continuity is direct. When  $\mu$ is such that $\lambda_{N_T(\mu)}+\mu = -T \beta_-$, consider the sequence $\mu^k=\mu+ \delta^k$ with $\lim_{k \to \infty} \delta^k=0$ and $\delta^k$ sufficiently small that, depending on the sign of $\delta^k$, either $N_T(\mu^k)=N_T(\mu)$ or $N_T(\mu^k)=N_T(\mu)-1$. When $\delta^k>0$, we have 
$$Z_T(\mu)-Z_T(\mu^k)=\sum_{j=0}^{N_T(\mu)-1}\left(\xi_T(\lambda_j+\mu)-\xi_T(\lambda_j+\mu^k)\right),$$ 
since the additional term in $Z_T(\mu)$ for $j=N_T(\mu)$ vanishes as $\xi_T(\lambda_{N_T(\mu)}+\mu)=\xi(-\beta_-)=0$. When $\delta^k<0$, we have the same expression as above with $N_T(\mu)-1$ replaced by $N_T(\mu)$. Since $\xi$ is continuous, this implies the continuity of $Z_T$ when $\beta_-$ is finite.
\end{proof}

We have all needeed now to construct a $\mu_T$ such that $Z_T(\mu_T)=\bar{n}$ for each $T\geq 0$.

\begin{lemma}
There exists a unique continuous function $T\in (0,+\infty)\to \mu_T$ such that, for any $T>0$,
\begin{equation*}
Z_T(\mu_T)=\bar{n}.
\end{equation*}
\end{lemma}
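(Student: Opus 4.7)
The plan is to apply the intermediate value theorem to $Z_T$ on its domain $[\mu_0(T),+\infty)$, using the strict monotonicity and the continuity established in the preceding lemma. The core existence/uniqueness claim reduces to analyzing the limits of $Z_T(\mu)$ at the two endpoints of the domain, and then to transferring continuity from the joint map $(T,\mu)\mapsto Z_T(\mu)$ to the implicit solution $\mu_T$.

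For the behavior as $\mu\to+\infty$: when $\beta_-$ is finite, $N_T(\mu)=-1$ for $\mu>\mu_M(T)$ so $Z_T(\mu)\equiv 0$ there. When $\beta_-=-\infty$, I would combine the bound $|\xi(x)|\leq C_{\bar x,\gamma}|x|^{-1/(1-\gamma)}$ from \fref{contxi} with the summability of $\sum_j \lambda_j^{-\gamma/(1-\gamma)}$ from \fref{sumeig} to apply dominated convergence and obtain $Z_T(\mu)\to 0$. For the behavior at the lower endpoint: when $\beta_+$ is finite, the definition of $\mu_0(T)$ gives $\xi_T(\lambda_0+\mu_0)=\xi(-\beta_+)=\bar n$, hence $Z_T(\mu_0)\geq \bar n$; when $\beta_+=+\infty$, each term $\xi_T(\lambda_j+\mu)$ increases monotonically to $\bar n$ as $\mu\to -\infty$, so monotone convergence yields $Z_T(\mu)\to +\infty$. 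In all cases the continuous, strictly decreasing $Z_T$ crosses the level $\bar n$ exactly once, producing the unique $\mu_T$.

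For the continuity of $T\mapsto \mu_T$, let $T_n\to T_0>0$. I would first show $(\mu_{T_n})$ is bounded: an upper bound follows from making the above decay estimate uniform in $T$ over a compact neighborhood of $T_0$, so that $\sup_{T \text{ near } T_0} Z_T(M)<\bar n$ for $M$ large enough; a lower bound comes either from $\mu_{T_n}\geq \mu_0(T_n)$ (when $\beta_+$ is finite, giving a bounded lower bound by continuity of $T\mapsto \mu_0(T)$) or from an analogous uniform estimate showing $Z_T(m)>\bar n$ for $m$ sufficiently negative uniformly near $T_0$ (when $\beta_+=+\infty$). Extracting a convergent subsequence $\mu_{T_{n_k}}\to \mu^\star$ and invoking the joint continuity of $(T,\mu)\mapsto Z_T(\mu)$, one obtains $\bar n = Z_{T_{n_k}}(\mu_{T_{n_k}})\to Z_{T_0}(\mu^\star)$, so $\mu^\star=\mu_{T_0}$ by uniqueness, and convergence of the whole sequence follows.

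The main obstacle is the uniform control in $T$ needed for the boundedness of $(\mu_{T_n})$ when $\beta_-$ or $\beta_+$ is infinite; the bounds from \fref{contI} and \fref{sumeig} are essentially uniform for $T$ in a compact subset of $(0,+\infty)$, which suffices. A secondary point is the joint continuity of $(T,\mu)\mapsto Z_T(\mu)$ when $\beta_-$ is finite: the integer $N_T(\mu)$ may jump as either argument crosses a value where $\lambda_j+\mu=-T\beta_-$, but at such transitions the affected term satisfies $\xi_T(\lambda_j+\mu)=\xi(-\beta_-)=0$, so these jumps do not create discontinuities in the sum. Away from the transitions, joint continuity is immediate from the continuity of $\xi$, and on compact sets the cutoff stabilizes, allowing a neighborhood-by-neighborhood argument.
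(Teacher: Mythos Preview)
Your proposal is correct and follows essentially the same approach as the paper: an intermediate value argument on $[\mu_0(T),+\infty)$ using the strict monotonicity and continuity of $Z_T$, followed by an implicit-function-type argument for the continuity of $T\mapsto\mu_T$. The paper's proof is slightly more streamlined in two places: for $Z_T(\mu)\to 0$ when $\beta_-=-\infty$ it uses the simpler dominating function $\xi_T(\lambda_j+\mu_-)$ from \fref{unixi} rather than the polynomial bound \fref{contxi}; and for continuity in $T$ it just invokes ``the implicit function theorem for monotone functions'' rather than your explicit sequential/compactness argument. The paper also singles out the case $\beta_-$ finite, $\beta_+$ finite, $T\le T_c$, where $\mu_T=\mu_0(T)$ explicitly, but your IVT argument already covers this since $Z_T(\mu_0)\ge\bar n$ with equality precisely there.
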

\begin{proof}
\noindent\textit{The case $\beta_-$ infinite:} on the one hand, we have, since $Z_T$ is strictly decreasing,
$$
Z_T(\mu) \leq Z_T(\mu_0)=\xi(-\beta_+)+\sum_{j=1}^{\infty}\xi_T(\lambda_j+\mu_0)=\bar{n}+C_T, \qquad C_T>0,
$$
with $C_T=+\infty$ (resp. finite) when $\beta_+=+\infty$ (resp. finite) by monotone convergence. Above, we used that $\xi(-\beta_+)=\bar n$ and that $\xi(t)>0$ when $t \in (-\beta_+,-\beta_-)$.
On the other hand, it follows from dominated convergence and \fref{unixi} that $\lim_{\mu \to \infty} Z_T(\mu)=0$ since $\lim_{\mu \to \infty} \xi_T(\lambda_j+\mu)=0$ for all $j \in \Nm$ and $T>0$. Hence, since $Z_T$ is continuously strictly decreasing on $[\mu_0,\infty)$ with values in $(0,\bar{n}+C_T]$, there exists a unique $\mu_T$ such that $Z_T(\mu_T)=\bar{n}$. Note that the version of the implicit function theorem for monotone functions shows that $T\mapsto \mu_T$ is continuous.

\noindent\textit{The case $\beta_-$ finite:} we carefully check the range of $Z_T(\mu)$ for $\mu \in [\mu_0, \mu_M]$, with now $\mu_M=-T \beta_- -\lambda_0$ finite. For this, define 
\be \label{Tc}
T_c : =\frac{\lambda_1-\lambda_0}{\beta_+-\beta_-}.
\ee
Suppose that $\beta_+$ is finite. We have $T_c>0$ since the ground state of $H$ is nondegenerate.  When $T>T_c$, we have $N_T(\mu_0) \geq 1$ since $\lambda_1+\mu_0=\lambda_1-\lambda_0 - T \beta_+ < - T \beta_-$. It follows that $\xi_T(\lambda_1+\mu_0)>0$ and we have $Z_T(\mu) \leq Z_T(\mu_0)=\bar{n}+C'_T$ for some $C_T'>0$. This provides us with an upper bound for $Z_T$. For the lower bound, we remark that $N_T(\mu_M) = -1$ when $\mu > \mu_M$, and that $\xi_T(\lambda_0+\mu_M)=\xi(-\beta_-)=0$, so that $Z_T(\mu_M)=0$. A similar argument based on monotonicity as in the case $N_T(\mu)=+\infty$ then goes through when $T>T_c$, and we obtain a unique $\mu_T$ continuous such that $Z_T(\mu_T)=\bar n$. When $T \leq T_c$, then only the $j=0$ mode contributes to $Z_T$. In that case, $Z_T(\mu_0)=\xi(-\beta_+)=\bar{n}$, and therefore, for all $T \leq T_c$, $\mu_T=\mu_0(T) = -T \beta_+-\lambda_0$. Now suppose that $\beta_+$ is infinite. Then $T_c=0$ and we proceed as in the case $T_c>0$.
\end{proof}

\bigskip

To summarize our results, we have constructed a unique continuous function $\mu_T$ such that $\varrho_T:=\varrho_{T,\mu_T} \in \calS(\bar n)$ for each $T>0$. Since $V$ is real, the eigenfunctions of $H+V$ can be chosen to be real-valued, and as a consequence $u[\varrho_T]=0$. We now prove that such $\varrho_T$ is the minimizer of $F_T$ in $\calS(\bar n)$.
\paragraph{Step 2: the operator $\varrho_T$ is the minimizer.} We show that
$$
F_T(\varrho) \geq F_T(\varrho_T), \qquad \forall \varrho \in \calS(\bar{n}).
$$
The proof is based on convexity and on the particular form of $\varrho_T$. We denote by $\{\rho_j\}_{j \in \Nm}$ the eigenvalues of $\varrho$ (with possibily $\rho_j=0$ for some $j \geq J$), and set 
\begin{equation*}
\nu_j: = \left\{\begin{array}{ll}
\xi_T(\lambda_j+\mu_T),\quad \textrm{for }0\leq j \leq N_T(\mu_T),
\\ 0,\quad \textrm{for }N_T(\mu_T)+1\leq j.
\end{array}\right.
\end{equation*} The convexity of $\beta$ yields
$$
\beta(\rho_j)\geq  \beta(\nu_j) + \beta'(\nu_j) (\rho_j-\nu_j), \qquad \forall j\in\Nm,
$$
which, together with estimate \fref{contsum}, the facts that $\beta'(0)=\beta_-$ and $T \beta'(\nu_j)=-\lambda_j-\mu_T$ for $j \leq N_T(\mu_T)$, gives
\begin{align*}
F_T(\varrho) &=  T \sum_{j \in \Nm} \beta(\rho_j)+ \sum_{j \in \Nm} \lambda_j \rho_j 
\\ &\geq  T\sum_{j \in \Nm} \left(\beta(\nu_j) + \beta'(\nu_j) (\rho_j-\nu_j) \right)+ \sum_{j \in \Nm} \lambda_j \rho_j\\
&\geq\sum_{j = 0}^{N_T(\mu_T)} (T\beta(\nu_j)+\lambda_j \nu_j)+\sum_{j = N_T(\mu_T)+1}^{+\infty} (T \beta_-+\lambda_j) \rho_j- \mu_T \sum_{j= 0}^{N_T(\mu_T)}(\rho_j-\nu_j)\\
&\geq F_T(\varrho_T)+\sum_{j = N_T(\mu_T)+1}^{+\infty} (T \beta_-+\lambda_j) \rho_j- \mu_T \sum_{j=0}^{N_T(\mu_T)}\rho_j + \mu_T \bar{n}.
\end{align*}
When $N_T(\mu_T)=+\infty$, the summation for $j \geq N_T(\mu_T)+1$ is conventionally equal to zero, and the last term becomes zero since both $\varrho$ and $\varrho_T$ are normalized with traces equal to $\bar{n}$. When $N_T(\mu_T)$ is finite, we have by construction that $\lambda_j+\mu_T \geq -T \beta_-$ for $j \geq N_T(\mu_T)+1$, and as a consequence
$$
\sum_{j = N_T(\mu_T)+1}^{+\infty} (T \beta_-+\lambda_j) \rho_j- \mu_T \sum_{j=0}^{N_T(\mu_T)}\rho_j \geq - \mu_T \sum_{j \in \Nm} \rho_j= - \mu_T \bar{n}.
$$
This results in $F_T(\varrho) \geq F_T(\varrho_T)$ for all $\varrho \in \calS(\bar{n})$, and since $F_T$ is strictly convex, it follows that $\varrho_T$ is the unique minimizer of $F_T$ in $\calS(\bar{n})$.

\bigskip

Now that we have obtained the minimizer $\varrho_T$ of $F_T$, our goal is to prove the strict monotonicity w.r.t the temperature of the energy and the entropy of $\varrho_T$, denoted by
$$
\mathrm{E}(T) = \Tr\left(\sqrt{H}\varrho_T\sqrt{H} \right), \qquad \mathrm{S}(T) = \Tr\left(\beta(\varrho_T)\right).
$$
For this, we will need to differentiate $\mathrm{E}(T)$ and $\mathrm{S}(T)$, which requires some regularity on $\xi$ and $\mu_T$. While we know that $\xi'$ exists and is continuous on $(-\beta_+,-\beta_-)$ since $\beta'' \in C((0,\bar n))$, we have no control at the end points and it is also unclear, without additional assumptions, how to make sure infinite sums involving $\xi'_T(\lambda_j+\mu_T)$ are finite. We therefore need to regularize $\varrho_T$ to justify the derivation. 

\paragraph{Step 3: Regularization.}

We treat the cases $\beta_-$ finite and infinite separately.

\noindent\textit{The case $\beta_-$ infinite:} for $m$ an integer and $\{\phi_j\}_{j \in \Nm}$ the eigenfunctions of $H$, consider the finite rank operator
$$
\varrho_m\equiv \varrho_{m,T,\mu}:=\sum_{j=0}^m \xi_T(\lambda_j+\mu) \ket{\phi_j} \bra{\phi_j}.
$$
Our first step is to construct a $\mu_{T,m}$ differentiable such that $\Tr(\varrho_{m,T,\mu_{T,m}})= \bar n$. Since the sum in $\varrho_m$ is finite, it is clear that $\varrho_m$ has a finite energy and entropy, and we need to make sure that its derivative w.r.t $T$ is well-defined, and in particular that it has finite energy. Since
\be \label{bxi}
\xi'(x)=-\frac{1}{\beta''(\xi(x))}, \qquad 0<\beta''\in C((0,\bar n)),
\ee
our goal is then simply to confine the range of $\xi$ to a compact set so that $\beta''(x) \geq C >0$ on this set. This is done by choosing the range of $\mu$ appropriately. For this, recall the notation $\mu_0=-T \beta_+-\lambda_0$ and that $\xi_T(\lambda_0+\mu_0)=\xi(-\beta_+)=\bar n$. Suppose first that $\beta_+$ is finite. For $T>0$, denote
$$
\delta n:=\sum_{j=1}^\infty \xi_T(\lambda_j+\mu_0)>0.
$$
Consider now $\delta_0>0$ such that $\delta_0 \leq \delta n/4$. Since $\xi$ is continuously strictly decreasing and so is $Z_T(\mu)$ w.r.t $\mu$, there exists $\eta>0$ such that 
$$
Z_T(\mu_0+\eta)=Z_T(\mu_0)-\delta_0=\bar{n}+\delta n-\delta_0.
$$
Set then $M_0$ such that 
$$
\sum_{j=0}^m \xi_T(\lambda_j+\mu_0+\eta) \geq \sum_{j=1}^\infty \xi_T(\lambda_j+\mu_0+\eta)-\frac{\delta n}{2} \qquad \textrm{for} \qquad m \geq M_0.
$$
% As a consequence, for $m \geq M_0$,
% $$
% \Tr(\varrho_{m,T, \mu_0})=\sum_{j=0}^m \xi_T(\lambda_j+\mu_0) \geq \bar{n}+\frac{\delta n}{2}.
% $$
% Consider now $\delta_0>0$ such that $\delta_0 \leq \delta n/4$. Since $\xi$ is continuously strictly decreasing, there exists $\eta>0$ such that 
% $$
% \xi_T(\lambda_0+\mu_0+\eta)=\xi(-\beta_++\eta /T)=\bar{n}-\delta_0.
% $$
Hence, for $\mu \leq \mu_0+\eta$ and $m \geq M_0$,
$$
\Tr(\varrho_{m,T,\mu})=\sum_{j=0}^m \xi_T(\lambda_j+\mu) \geq \sum_{j=0}^\infty \xi_T(\lambda_j+\mu_0+\eta)-\frac{\delta n}{2} \geq \bar{n}+\frac{\delta n}{4}.
$$
Next, it is clear that $\Tr(\varrho_{m,T,\mu}) \leq \Tr(\varrho_{T,\mu})$ since $\xi$ is nonnegative, and since both traces are decreasing functions of $\mu$, it is enough to look for $\mu_{T,m}$ in $[\mu_0+\eta, \mu_T]$ since such $\mu_{T,m}$ will necessarily be less than $\mu_T$ (we choose $\eta$ sufficiently small that $\mu_0+\eta<\mu_T$). Since
%$$
%\xi'(x)=-\frac{1}{\beta''(\xi(x))}, \qquad 0<\beta''\in C((0,\bar n)),
%$$
$\xi(x)$ lies in a compact set of $(0,\bar n)$ when $x \in [\mu_0+\eta, \mu_T]$, it follows from \fref{bxi} that $|\xi'|$ is bounded on $[\mu_0+\eta, \mu_T]$, and that
$$
Z_{T,m}(\mu):=\Tr\left(\varrho_{m,T,\mu}\right), \quad \mathrm{E}_m(T,\mu) := \Tr\big(\sqrt{H}\varrho_{m,T,\mu}\sqrt{H} \big), \quad \mathrm{S}_m(T,\mu) := \Tr\left(\beta(\varrho_{m,T,\mu})\right),
$$
are all continuously differentiable functions of $T$ and $\mu$ when $\mu \in [\mu_0+\eta, \mu_T]$.

Since $Z_{T,m}(\mu)$ is a continuously strictly decreasing function of $\mu$, and $Z_{T,m}(\mu_0+\eta)> \bar n$ as well as $Z_{T,m}(\mu_T)< \bar n$ by construction, it follows that there exists a unique $\mu_{T,m}$ such that $Z_{T,m}(\mu_{T,m})= \bar n$. A direct application of the implicit function theorem then shows that $T \mapsto \mu_{T,m}$ is continuously differentiable for $T>0$, and that
\begin{align} \nonumber
\partial_T \mu_{T,m} &= - \frac{\partial_T Z_{T,m}(\mu_{T,m})}{\partial_{\mu} Z_{T,m}(\mu_{T,m})} = \frac1T\frac{\Tr\left(\varrho_{T,m}'(H+\mu_{T,m})\right)}{\Tr(\varrho_{T,m}')} 
\\ &= \frac{\mu_{T,m}}T +  \frac1T\frac{\Tr\left(H\varrho_{T,m}'\right)}{\Tr(\varrho_{T,m}')}. \label{dmu}
\end{align}
Above, we used the notation $\varrho_{T,m}' = \xi_T'(H+\mu_{T,m})$ for $\xi_T'(x)=\xi'(x/T)$. Note that
$$
\Tr\left(H\varrho_{T,m}'\right)=\sum_{j=0}^m \lambda_j \xi'_T(\lambda_j+\mu_{T,m}), \qquad \Tr\left(\varrho_{T,m}'\right)=\sum_{j=0}^m \xi'_T(\lambda_j+\mu_{T,m}),
$$
are both well-defined since the sums are finite and $\xi'_T(\lambda_j+\mu_{T,m})$ is bounded for all $j=0, \ldots, m$. Writing $\mu_{T,m}=-\lambda_0  +T \gamma_{T,m}$, we find the equation
\begin{align} \label{eqgam}
\partial_T \gamma_{T,m} &= -\frac{\lambda_0}{T^2} +  \frac{1}{T^2}\frac{\Tr\left(H\varrho_{T,m}'\right)}{\Tr(\varrho_{T,m}')}.
\end{align}
Since $\lambda_0$ is the smallest eigenvalue of $H$, this shows that $\gamma_{m,T}$ is nondecreasing. The case $\beta_+=+\infty$ follows similarly, with the difference that $\mu \in [\mu_0+\eta,\mu_M]$ has to be replaced by $\mu \in [-\eta^{-1},\mu_M]$ for $\eta$ sufficiently small, we omit the details.

\noindent\textit{The case $\beta_-$ finite:} the minimizer is already finite rank, so we only need to verify that $\mu_T$ is not one of the  end points $\mu=\mu_0$ and $\mu=\mu_M$ since we have no control of $\xi'$ at these points. Suppose that $\beta_+$ is finite. When $T \leq T_c$, we have seen in Step 1 that only the $j=0$ mode contributes to the sum in $Z_{T}$. As a consequence, the minimizer is simply, for all $0<T\leq T_c$,
\begin{equation}\label{eq:rankonemin}
\varrho_{T}= \bar{n} \ket{\phi_0} \bra{\phi_0},
\end{equation}
and it is clear that the associated energy is constant for $0<T\leq T_c$. We therefore only consider the case $T > T_c$, with the case $T=T_c$ already solved. When $T>T_c$, we have $N_T(\mu_0) \geq 1$, and then
$$
Z_{T}(\mu_0)=\bar{n}+\sum_{j=1}^{N_T(\mu_0)} \xi_T(\lambda_j+\mu_0) > \bar{n}.
$$
Also, $Z_T(\mu_M)=0$, and since $Z_T(\mu)$ is strictly decreasing and continuous, it follows that for $T>T_c$, $\mu_T$ lies in a compact set $K$ of $(\mu_0,\mu_M)$. When $\beta_+$ is infinite, $T_c=0$ and $Z_T(\mu_0)=\infty$, and we proceed as in the case $T>T_c$. Since $\xi'$ is bounded on $K$, the function $Z_T(\mu)$ is then continuously differentiable for $T>T_c$ and $\mu \in K$, and a direct application of the implicit function theorem shows that $T \mapsto \mu_T$ is continuously differentiable. As a result, the quantities $Z_T(\mu_T)$, $\mathrm{E}(T)$, and $\mathrm{S}(T)$ are all continuously differentiable and the calculations of the next section are all justified.

%$ $\mu_M=-\lambda_0-\beta_- T$, and $\eta_1$ such that $\xi_T(\lambda_0+\mu_M-\eta_1)=\xi(-\beta_--\eta_1/T)=\bar{n}/2$.

\paragraph{Step 4: Monotonicity.}

We will prove the following Proposition:

\begin{proposition} \label{monot}
  For $T>0$, the energy $\mathrm{E}(T)$ (resp. the entropy $\mathrm{S}(T)$) is continuous and nondecreasing (resp. nonincreasing), and we have the relation for $T_2,T_1>0$,
  \be \label{eqF}
  F_{T_2}(\varrho_{T_2})-F_{T_1}(\varrho_{T_1})=\int_{T_1}^{T_2} S(\varrho_\tau) d\tau.
  \ee
\end{proposition}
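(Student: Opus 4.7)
} The strategy is to combine pairwise variational comparisons of the free energy at two different temperatures with the minimizer characterization established in Step~2. For $T_1<T_2$, using $\varrho_{T_2}$ as a competitor for the minimization at temperature $T_1$ and vice versa, the variational characterization produces
\begin{align*}
  E(\varrho_{T_1})+T_1\mathrm{S}(T_1)&\leq E(\varrho_{T_2})+T_1\mathrm{S}(T_2),\\
  E(\varrho_{T_2})+T_2\mathrm{S}(T_2)&\leq E(\varrho_{T_1})+T_2\mathrm{S}(T_1).
\end{align*}
Adding and simplifying yields $(T_2-T_1)(\mathrm{S}(T_1)-\mathrm{S}(T_2))\geq 0$, so $\mathrm{S}$ is nonincreasing; plugging this back into the first inequality gives $\mathrm{E}(T_2)\geq\mathrm{E}(T_1)$, so $\mathrm{E}$ is nondecreasing. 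No derivative computation is required for this part.

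The same pair of inequalities, rewritten as bounds on $F_{T_2}(\varrho_{T_2})-F_{T_1}(\varrho_{T_1})$, produces the sandwich
$$
(T_2-T_1)\,\mathrm{S}(T_2)\;\leq\;F_{T_2}(\varrho_{T_2})-F_{T_1}(\varrho_{T_1})\;\leq\;(T_2-T_1)\,\mathrm{S}(T_1).
$$
Once continuity of $T\mapsto \mathrm{S}(T)$ is in hand, I would partition $[T_1,T_2]$, apply this inequality on each subinterval, and telescope. The resulting lower and upper Riemann sums for $\mathrm{S}$ both converge to $\int_{T_1}^{T_2}\mathrm{S}(\tau)\,d\tau$ as the mesh tends to zero, delivering the identity \eqref{eqF}. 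This bypasses the need to differentiate $F_T(\varrho_T)$ pointwise, which would otherwise require the regularization machinery of Step~3.

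The remaining task is thus the continuity of $\mathrm{E}(T)$ and $\mathrm{S}(T)$. When $\beta_-$ is finite, $\varrho_T$ has finite rank by Step~0, $\mu_T$ is continuous on $(T_c,+\infty)$ by Step~1, and for $T\leq T_c$ one has explicitly $\varrho_T=\bar n\ket{\phi_0}\bra{\phi_0}$, so $\mathrm{E}$ and $\mathrm{S}$ are constant on $(0,T_c]$ and continuity across $T_c$ is immediate from the explicit formula $\xi(-\beta_+)=\bar n$. When $\beta_-=-\infty$, the series
$$
\mathrm{E}(T)=\sum_{j\in\Nm}\lambda_j\,\xi_T(\lambda_j+\mu_T),\qquad \mathrm{S}(T)=\sum_{j\in\Nm}\beta\bigl(\xi_T(\lambda_j+\mu_T)\bigr),
$$
are handled by dominated convergence: the continuity of $\mu_T$ from Step~1 guarantees that on any compact subinterval of $(0,+\infty)$ the potentials $\mu_T$ remain bounded, so the polynomial decay \eqref{contxi} combined with the summability \eqref{sumeig} of $\lambda_j^{-\gamma/(1-\gamma)}$ produces a summable majorant along the lines of the argument leading to \eqref{contI}.

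The main obstacle is producing this uniform summable majorant in the infinite-rank case: $\mu_T$ is only implicitly defined through $Z_T(\mu_T)=\bar n$, so one must first leverage its continuity to bound it on compact $T$-intervals, and only then invoke \eqref{contxi}--\eqref{sumeig} to control the tails of the series. Once continuity of $\mathrm{S}$ (and similarly $\mathrm{E}$) is established, the variational sandwich yields both the integral identity \eqref{eqF} and the weak monotonicity of Proposition \ref{monot} in one stroke, with the strict monotonicity and the limiting values as $T\to T_c$ or $T\to+\infty$ left to subsequent steps of the proof of Assumption~\ref{A4}.
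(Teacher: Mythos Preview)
Your proof is correct and takes a genuinely different route from the paper. The paper proceeds by explicit differentiation: it regularizes $\varrho_T$ to a finite-rank $\varrho_{T,m}$ (Step~3), computes $\partial_T\mathrm{E}_m(T)$ via the implicit function theorem for $\mu_{T,m}$, applies Cauchy--Schwarz to the resulting trace expression to obtain $\partial_T\mathrm{E}_m\geq 0$, derives the pointwise identity $\partial_T F_T(\varrho_{T,m})=\mathrm{S}_m(T)$, and then passes to the limit $m\to\infty$ via dominated convergence and Lemma~\ref{lem:EntropyProp}. Your approach instead extracts both the monotonicity and the sandwich on $F$ directly from the minimality of $\varrho_T$ in $\calS(\bar n)$, and then recovers \eqref{eqF} by telescoping over a partition; no regularization or derivative calculus is needed. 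The continuity arguments you sketch are essentially those of the paper.

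Your route is more elementary and robust: it applies to any one-parameter family of minimizers of $E+TS$ on a fixed constraint set, independent of the explicit form of $\varrho_T$ or of the regularity of $\beta''$. The paper's differentiation, however, yields finer byproducts that are used later: the relation $\partial_T\mathrm{S}_m=-T^{-1}\partial_T\mathrm{E}_m$ and, more importantly, equation~\eqref{eqgam}, which shows that $\gamma_T=(\mu_T+\lambda_0)/T$ is nondecreasing. This monotonicity of $\gamma_T$ is not part of Proposition~\ref{monot} but is invoked repeatedly in Steps~5 and~6 when establishing the limits of $\mathrm{E}(T)$ as $T\to T_c$ and $T\to+\infty$; if you adopt the variational route here, you will need a separate argument for it there (a direct comparison in $Z_T(\mu_T)=\bar n$ suffices, but it has to be done). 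One minor remark: since your sandwich shows $\mathrm{S}$ is monotone, it is automatically Riemann integrable, and the two sums bracketing $F_{T_2}(\varrho_{T_2})-F_{T_1}(\varrho_{T_1})$ are precisely the lower and upper Darboux sums; so \eqref{eqF} already follows from the sandwich alone, before continuity of $\mathrm{S}$ is established.
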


\begin{proof} When $\beta_-$ is infinite, we work with the regularized minimizer $\varrho_{T,m}$ and then pass to the limit $m \to \infty$. The calculations are similar when $\beta_-$ is finite and are not detailed. Differentiating $\mathrm{E}_m(T)$ and using \fref{dmu}, we find, denoting $\varrho_{T,m}' = -|\varrho_{T,m}'|$ since $\xi'<0$,
   \begin{align*}
\partial_T \mathrm{E}_{m}(T) &= \sum_{j=0}^m\lambda_j\xi'_T(\lambda_j+\mu)\left( T\partial_T\mu_{T,m} - \lambda_j - \mu_{T,m}\right)/T^2
\\ &= \frac1{T^2 \Tr(\varrho_{T,m}')} \left(\left[\Tr\left( H\varrho_{T,m}'\right)\right]^2 - \Tr\left(H^2\varrho_{T,m}' \right)\Tr(\varrho_{T,m}')\right)
\\ &= \frac1{T^2 \Tr(|\varrho_{T,m}'|)} \left(\Tr\left(H^2|\varrho_{T,m}'| \right)\Tr(|\varrho_{T,m}'|) - \Tr\left(H |\varrho_{T,m}'|\right)^2 \right).
   \end{align*}
   Note that $\Tr\left(H^2|\varrho_{T,a_0,\varepsilon}'| \right)$ is well-defined since $|\varrho_{T,m}'|$ has finite rank. We now use the Cauchy-Schwarz inequality to deduce that
\begin{equation*}
\left[\Tr\left(H |\varrho_{T,m}'|\right)\right]^2 \leq \Tr\left(H^2|\varrho_{T,m}'| \right)\Tr(|\varrho_{T,m}'|),
\end{equation*}
which leads to the inequality
\begin{equation*}
\partial_T \mathrm{E}_{m}(T)\geq 0.
\end{equation*}
It follows that, for any $T_1\geq T_2>0$,
\begin{equation} \label{monreg}
\mathrm{E}_{m}(T_1)\geq\mathrm{E}_{m}(T_2).
\end{equation}
We now turn to the entropy. We remark first that
$$
\partial_T Z_T(\mu_T)= \sum_{j=0}^m \xi'_T(\lambda_j+\mu_{T,m})\left( T\partial_T\mu_{T,m} - \lambda_j - \mu_{T,m}\right)/T^2=0
$$
since $Z_T(\mu_T)=\bar n$. Then,
\begin{align} \nonumber
\partial_T \mathrm{S}_{m}(T) &= \sum_{j=0}^m \xi_T'(\lambda_j+\mu_{T,m})\left( T\partial_T\mu_{T,m} - \lambda_j - \mu_{T,m}\right) \beta'(\xi_T(\lambda_j+\mu))/T^2
  \\ &= - \sum_{j=0}^m \xi'_T(\lambda_j+\mu_{T,m})\left( T\partial_T\mu_{T,m} - \lambda_j - \mu_{T,m}\right)(\lambda_j+\mu_{T,m})/T^3\nonumber\\
  \nonumber&= - \sum_{j=0}^m \xi'_T(\lambda_j+\mu_{T,m})\left( T\partial_T\mu_{T,m} - \lambda_j-\mu_{T,m} \right)\lambda_j/T^3\\
  &= -\frac{1}{T}\partial_T \mathrm{E}_{m}(T). \label{monS}
\end{align}
This yields the monotonicity of the regularized entropy. With the above relations, it follows directly that
\be \label{derivF}
\partial_T F_T(\varrho_{T,m})= \mathrm{S}_{m}(T).
\ee
We now pass to the limit $m \to \infty$ for each $T$ fixed. We have seen in Step 3 that $\mu_{T,m} \in [f(\eta),\mu_T]$  for $f(\eta)=\mu_0+\eta$ when $\beta_+$ is finite, and $f(\eta)=-\eta^{-1}$ when $\beta_+$ is infinite. Hence, $\mu_{T,m}$ is uniformly  bounded in $m$. Denote by $\mu_\star$ the limit obtained by extraction of a subsequence. Since $\xi$ is continuously decreasing, we have
\be \label{contm}
\un_{j \leq m} \xi_T(\lambda_j+\mu_{T,m}) \leq \xi_T(\lambda_j+f(\eta)),
\ee
and using the latter for dominated convergence shows that
$$
\bar n=\lim_{k \to \infty} Z_{T,m_k}(\mu_{T,m_k})=Z_{T}(\mu_{\star}).
$$
Since for each $T>0$ there is a unique solution to the equation $Z_{T}(\mu)=\bar n$, it follows that $\mu_\star=\mu_T$, and also that the entire sequence $\mu_{T,m}$ converges to $\mu_T$. With \fref{contm}, we also obtain
\be \label{limE}
\lim_{m \to \infty} \mathrm{E}_{m}(T)=\mathrm{E}(T),
\ee
which, together with \fref{monreg}, yields the monotonicity of the energy.

For the continuity of $\mathrm{E}(T)$, we have seen in Step 1 that $T\mapsto \mu_T$ is continuous as an application of the implicit function theorem for strictly monotone functions. Note that we obtain as well from \fref{eqgam} that there exists a continuous nondecreasing function $\gamma_T$ such that
\begin{equation}\label{eq:defgamma_T}
\mu_T=-\lambda_0+T \gamma_T.
\end{equation}
Taking a sequence $\{T_n\}_{n \in \Nm}$ in $[T_-,T_+]$, we have the estimate, since $\xi_T(x)$ is an increasing function of $T$ for each $x$, 
\be \label{contm2}
\xi_{T_n}(\lambda_j+\mu_{T_n}) \leq \xi_{T_-}(\lambda_j+\bar \mu), \qquad \bar \mu=\max_{T \in [T_-,T_+]} \mu_T,
\ee
which allows us to pass to the limit in $\mathrm{E}(T_n)$ by dominated convergence. This yields the continuity of $\mathrm{E}(T)$.

We now treat the entropy, and we will use Item 2 of Lemma \ref{lem:EntropyProp} to pass to the limit in $\mathrm{S}_{m}(T)$ for $T$ fixed. We know from \fref{limE} that $\{\varrho_{T,m}\}_{m \in \Nm}$ is bounded in $\calE^+$, as well as $n[\varrho_{T,m}] V$ in $L^1(\Rm^d)$. It remains to show that $\varrho_{T,m}$ converges to $\varrho_T$ strongly in $\calJ_1$. For this, write
\bee
\varrho_T-\varrho_{T,m}&=&\sum_{j=m+1}^\infty \xi_T(\lambda_j+\mu_T) \ket{\phi_j} \bra{\phi_j}+\sum_{j=0}^m (\xi_T(\lambda_j+\mu_T)-\xi_T(\lambda_j+\mu_{T,m})) \ket{\phi_j} \bra{\phi_j}\\
&:=&\sigma_1+\sigma_2.
\eee
We have
$$
\|\sigma_1\|_{\calJ_1}=\sum_{j=m+1}^\infty \xi_T(\lambda_j+\mu_T),
$$
which goes to zero as $m \to \infty$ since $\varrho_T \in \calJ_1$. Besides,
$$
\|\sigma_2\|_{\calJ_1}=\sum_{j=0}^m|\xi_T(\lambda_j+\mu_T)-\xi_T(\lambda_j+\mu_{T,m})|,
$$
which also goes to zero thanks to \fref{contm} and dominated convergence. Item 2 of Lemma \ref{lem:EntropyProp} then gives
$$
\lim_{m \to \infty} \mathrm{S}_{m}(T)=\mathrm{S}(T),
$$
which, together with \fref{monS} after integration in $T$, yields the monotonicity of the entropy.

The continuity of $\mathrm{S}(T)$ follows from the same lines: we consider a sequence $\{\varrho_{T_n}\}_{n \in \Nm}$ for $T_n \in [T_-,T_+]$, which is bounded in $\calE^+$ with $n[\varrho_{T_n}] V$ bounded in $L^1(\Rm^d)$ as a consequence of \fref{contm2}. The fact that $\varrho_{T_n}$ converges to $\varrho_T$ in $\calJ_1$ is established in the same manner as above, with the additional ingredient consisting of the continuity of $T\mapsto \mu_T$. Item 2 of Lemma \ref{lem:EntropyProp} then yields the continuity of $\mathrm{S}(T)$ for $T>0$.

It remains to derive \fref{eqF} to conclude the proof. Integrating \fref{derivF}, we can pass to the limit in the $F_{T_i}(\varrho_{T_i,m})$ terms, $i=1,2$, using what was just done for $\mathrm{E}_m(T)$ and $\mathrm{S}_m(T)$, and we just need to treat the integral term. For this, we remark first that Item 1 of Lemma \ref{lem:EntropyProp} yields
$$
|\mathrm{S}_{m}(T)| \leq C+C E(\varrho_{T,m})^\gamma,
$$
and proceeding as in \fref{contI}, we find the estimate
 $$
  E(\varrho_{T,m})\leq \bar{n} \sum_{j=0}^{N_2(T)-1} (1+\lambda_j)+ C'_{\bar x,\gamma}T^{\frac{1}{1-\gamma}},
  $$
  where $N_2(T)$ is a finite integer. This allows us to use dominated convergence to pass to the limit in the integral term and obtain \fref{eqF}. This ends the proof.
\end{proof}

\bigskip

We now prove that the monotonicity of the previous Proposition is actually strict. We proceed by contradiction and suppose that $\mathrm{E}(T_1)=\mathrm{E}(T_2)$ for $T_1 \neq T_2$, with e.g $T_2>T_1>0$. Then, using \fref{eqF}, we have
\begin{equation*}
T_2\mathrm{S}(T_2) -T_1\mathrm{S}(T_1)= \int_{T_1}^{T_2}\mathrm{S}(\tau)d\tau.
\end{equation*}
Since $\mathrm{S}$ is nonincreasing, we have $\mathrm{S}(T_1) \geq \mathrm{S}(T_2)$, and the above equality gives
$$
T_2\mathrm{S}(T_2) -T_1\mathrm{S}(T_1) \geq (T_2-T_1)\mathrm{S}(T_2).
$$
This is equivalent to $\mathrm{S}(T_2)\geq \mathrm{S}(T_1)$, and as a consequence $\mathrm{S}(T_2)=\mathrm{S}(T_1)$. It follows that
\begin{equation*}
F_{T_1}(\varrho_{T_1}) = T_1 \mathrm{S}(T_1) + \mathrm{E}(T_1) = T_1 \mathrm{S}(T_2) + \mathrm{E}(T_2) = F_{T_1}(\varrho_{T_2}),
\end{equation*}
and, by the uniqueness of the minimizer, we deduce that $\varrho_{T_1} = \varrho_{T_2}$. This gives, for any $j\in\mathbb{N}$, since $\xi$ is one-to-one,
\begin{equation*}
\xi_{T_1}(\lambda_j+\mu_{T_1}) = \xi_{T_2}(\lambda_j+\mu_{T_2}) \Rightarrow \lambda_j\left(\frac{1}{T_1} - \frac{1}{T_2}\right) = \frac{\mu_{T_2}}{T_2} - \frac{\mu_{T_1}}{T_1}.
\end{equation*}
Since this equality must be true for any $j\in\mathbb{N}$, we deduce that $T_1 = T_2$ which contradicts our assumption. Therefore, $\mathrm{E}(T)$ is strictly increasing.

It remains to prove that $\mathrm{S}(T)$ is strictly decreasing. From \fref{eqF} and the fact that $\mathrm{S}(T)$ is nonincreasing, we have, for $T_2>T_1>0$,
\bee
0<\mathrm{E}(T_2) -\mathrm{E}(T_1)&=&T_1\mathrm{S}(T_1) -T_2\mathrm{S}(T_2) +\int_{T_1}^{T_2}\mathrm{S}(\tau)d\tau\\
& \leq& T_1\mathrm{S}(T_1)-T_2\mathrm{S}(T_2)+\mathrm{S}(T_1)(T_2-T_1)=T_2(\mathrm{S}(T_1)-\mathrm{S}(T_2)).
\eee
This finally gives $\mathrm{S}(T_1)>\mathrm{S}(T_2)$.

\paragraph{Step 5: The lower limit for the energy.} When $\beta_-=-\infty$, let $T_c=0$, and when $\beta_-$ is finite, define $T_c$ as in \fref{Tc}, with $T_c=0$ when $\beta_+=+\infty$. We prove that
$$
\lim_{T\to T_c} \mathrm{E}(T)=\lambda_0 \bar n.
$$
%The limit as $T \to T_c$ is direct to establish, while the limit as $T \to \infty$ is quite more involved. \medskip
We remark that, using \fref{ineqk} and $u[\varrho_T]=0$, for any $T>0$,
\begin{align}
 \mathrm{E}(T) &\geq \|\nabla \sqrt{n[\varrho_T]}\|^2_{L^2}+\|  n[\varrho_T] V\|_{L^1} = \langle H\sqrt{n[\varrho_T]}, \sqrt{n[\varrho_T]} \rangle\nonumber
\\ &\geq \lambda_0 \bar n \label{eq:ineqenergylmbd0}
\end{align}

\noindent\textit{The case $\beta_-$ finite and $\beta_+$ finite:} when $T\leq T_c$, we have already established that the minimizer is given by \eqref{eq:rankonemin} and, thus, we have directly $\mathrm{E}(T_c)=\lambda_0 \bar n$. 

\noindent\textit{The case $\beta_-$ infinite and $\beta_+$ finite or infinite:}  when $\beta_+$ is finite, since $\xi$ is decreasing,
$$
\mathrm{E}(T) \leq \sum_{j=0}^\infty \lambda_j \xi_T(\lambda_j+\mu_0)=\lambda_0 \bar n+\sum_{j=1}^\infty \lambda_j \xi((\lambda_j-\lambda_0)/T-\beta_+).
$$
Since $\xi(x) \to 0$ as $x \to + \infty$, it follows from dominated convergence (with dominating function e.g. $\lambda_j\xi(\lambda_j-\lambda_0-\beta_+)$) that the last term on the right converges to zero as $T \to 0$. The result then follows from \eqref{eq:ineqenergylmbd0}. Consider now the case where $\beta_+$ is infinite. Recall from the proof of Proposition \ref{monot} that there exists a continuous nondecreasing function $\gamma_T$ such that \eqref{eq:defgamma_T} holds, that is
$$\gamma_T = \frac{\mu_T+\lambda_0}T.$$ The function $\gamma_T$ therefore admits a limit as $T \to 0$. We will see that this limit has to be $-\infty$. Suppose the limit is finite and equal to $\gamma_\star$. Then, since $\xi$ is decreasing and $\gamma_T$ is nondecreasing,
\be \label{contn}
 \bar n=\sum_{j=0}^{\infty} \xi_T(\lambda_j+\mu_T) \leq \xi(\gamma_\star)+\sum_{j=1}^{\infty} \xi((\lambda_j-\lambda_0)/T+\gamma_\star).
\ee
The first term on the right is strictly less than $\bar n$ since $\xi(-\infty)=\bar n$ and $\gamma_\star$ is finite. The second term goes to 0 as $T \to 0$ by dominated convergence (with dominating function $\xi(\lambda_j-\lambda_0+\gamma_\star)$) since $\xi(x) \to 0$ as $x \to + \infty$. This yields a contradiction and therefore the limit of $\gamma_T$ as $T \to 0$ is $-\infty$. We now show that
\be \label{limlim} \lim_{T\to 0}
\frac{\lambda_1-\lambda_0}{T}+\gamma_T=+\infty.
\ee
The limit cannot by guessed directly since the first term goes to $+\infty$ and the second to $-\infty$. We proceed by contradiction and suppose that the limit is $L$, with $L<\infty$ (but possibly $L=-\infty$). We have, since $\xi \geq 0$,
$$
 \xi(\gamma_T)+\xi((\lambda_1-\lambda_0)/T+\gamma_T)\leq \bar n=\sum_{j=0}^{\infty} \xi_T(\lambda_j+\mu_T).
$$
Sending $T\to 0$, we find $\bar n+ \xi(L) \leq \bar n$. When $x \neq +\infty$, it follows that $\xi(x)>0$, and therefore that there is contradiction. Hence, \fref{limlim} holds, and there exists $T_1$ such that 
$$
\gamma_T \geq 1-\frac{\lambda_1-\lambda_0}{T}, \qquad \forall T \leq T_1.
$$ 
We then write, since $\xi$ is decreasing,
$$
\mathrm{E}(T) \leq \lambda_0 \xi(\gamma_T)+\lambda_1\xi((\lambda_1-\lambda_0)/T+\gamma_T)+\sum_{j=2}^\infty \lambda_j \xi((\lambda_j-\lambda_1)/T+1).
$$
The first term converges to $\lambda_0 \bar n$ since $\gamma_T \to -\infty$, the second and last one to zero because of \fref{limlim}, $\xi(+\infty)=0$, and dominated convergence with dominating function $\lambda_j \xi(\lambda_j-\lambda_1+1)$. This yields the expected limit thanks to \eqref{eq:ineqenergylmbd0}.

\noindent\textit{The case $\beta_-$ finite and $\beta_+$ infinite:} we have by construction $\gamma_T \leq -\beta_-$, with $\gamma_T$ nondecreasing. If $\gamma_T$ converges to a finite limit, then recalling \eqref{eq:defN_T}, we can see that $N_T(\mu_T) \to 0$ and we have
\begin{equation*}
 \bar n=\sum_{j=0}^{N_T(\mu_T)} \xi_T(\lambda_j+\mu_T) \leq \xi(\gamma_\star)+\sum_{j=1}^{N_T(\mu_T)} \xi((\lambda_j-\lambda_0)/T+\gamma_\star),
\end{equation*}
which leads to a contradiction by letting $T\to 0$ since $\xi(\gamma_{\star}) \leq \xi(-\infty) = \bar n$. Hence $\lim_{T \to 0} \gamma_T=-\infty$. We now show that 
\be \label{limlim2} \lim_{T\to 0}
\frac{\lambda_1-\lambda_0}{T}+\gamma_T=- \beta_-.
\ee
If this is not the case, we have for $T$ sufficiently small that $\lambda_1-\lambda_0+T \gamma_T < T\beta_-$ so that $N_T(\mu_T) \geq 1$. Following the same lines as \fref{limlim}, we find a contradiction using the fact that $\xi(x)>0$ if $x \neq \beta_-$. The limit \fref{limlim2} yields that $N_T(\mu_T) \to 1$. Then, since $\xi$ is nonincreasing
\bee
\mathrm{E}(T) &\leq& \lambda_0 \xi(\gamma_T)+\lambda_1\xi((\lambda_1-\lambda_0)/T+\gamma_T)+\sum_{j=2}^{N_T(\mu_T)} \lambda_j \xi((\lambda_j-\lambda_0)/T+\gamma_T)\\
&\leq &\lambda_0 \xi(\gamma_T)+\lambda_1\xi((\lambda_1-\lambda_0)/T+\gamma_T)+\sum_{j=2}^{N_T(\mu_T)}\xi(\gamma_T).
\eee
The first term converges to $\lambda_0 \bar n$ since $\gamma_T \to -\infty$, the second and last one to zero because of \fref{limlim2}, $\xi(\beta_-)=0$ and the fact that $N_T(\mu_T) \to 1$. This ends the proof thanks to \eqref{eq:ineqenergylmbd0}. 
%therefore $\varrho_T$ reduce to $\bar n \ket{\phi_0} \bra{\phi_0}$ in the limit $T \to 0$, giving the result. 

\paragraph{Step 6: The upper limit for the energy.} The upper limit as $T \to \infty$ is more subtle, and we use quite different arguments for the cases $\beta_-$ finite or infinite. We prove that 
\begin{equation}\label{eq:EnergyUpperLimit}
\lim_{T\to+\infty} E(T) = +\infty.
\end{equation}

\noindent\textit{The case $\beta_-$ infinite:} the proof combines relation \fref{eqF}, which allows us to quantify how the free energy changes with temperature, with the study of the behavior of $\mu_T/T$ as $T \to \infty$.  First of all, the convexity of $\beta$ yields $0=\beta(0) \geq \beta(x)- x \beta'(x)$ which, by using $\beta'(\xi_T(x))= -x/T$, gives for any $j\in\mathbb{N}$
$$
- (\lambda_j+\mu_T) \xi_T(\lambda_j + \mu_T)  \geq T \beta(\xi_T(\lambda_j+\mu_T)),
$$
and, by summation,
\begin{align*}
- \mu_T\bar n & = \sum_{j=0}^{+\infty}\xi_T(\lambda_j + \mu_T)\geq \sum_{j=0}^{+\infty} \lambda_j\xi_T(\lambda_j + \mu_T) +  T \beta(\xi_T(\lambda_j+\mu_T)) =  F_T(\varrho_T).
\end{align*}
As a consequence, we deduce the inequality
\be \label{blow1}
-  \bar n \mu_T  \geq F_T(\varrho_T)=E(\varrho_T) + TS(\varrho_T) \geq T \mathrm{S}(T).
\ee
Besides, for $T>T_1>0$, we deduce from \fref{eqF} and the decay of $\mathrm{S(T)}$ that
$$
\mathrm{E}(T) \geq F_{T_1}(\varrho_{T_1})-T\mathrm{S}(T)+\int_{T_1}^{T}\mathrm{S}(\tau) d\tau \geq F_{T_1}(\varrho_{T_1})-T_1 \mathrm{S}(T).
$$
Together with \fref{blow1}, this gives
\be \label{blow2}
\mathrm{E}(T) \geq F_{T_1}(\varrho_{T_1})+ T_1 \bar n \, \frac{\mu_T}{T}.
\ee
This is the key estimate. Indeed, we already know that $\mu_{T}/T=-\lambda_0/T+\gamma_T$ where $\gamma_T$ is nondecreasing, and as a consequence $ \mu_{T} / T$ is nondecreasing as well and therefore admits a limit as $T \to \infty$. Suppose first that this limit is finite and equal to $M$. Since $\xi$ is decreasing, we have
%\be \label{estxi}
$$
\xi_T(\lambda_j+\mu_T) \geq \xi(\lambda_j/T+M),
$$
%\ee
so that 
$$
Z_T(\mu_T)=\bar n \geq \sum_{j=0}^\infty  \xi(\lambda_j/T+M).
$$
Fatou's lemma then yields
\be \label{cont}
\bar n \geq \sum_{j=0}^\infty \liminf_{T\to \infty}\xi(\lambda_j/T+M) = \sum_{j=0}^\infty \xi(M) = +\infty,
\ee
giving a contradiction. Hence, $\mu_{T} / T \to \infty$ as $T \to \infty$. Then, fixing e.g. $T_1=1$ in \fref{blow2} shows that $\mathrm{E}(T) \to \infty$ as $T \to \infty$.

\noindent\textit{The case $\beta_-$ finite:} this case is actually more difficult than the previous one since we will see that $\mu_T/T$ converges to a finite number, and therefore \fref{blow2} is not enough to get the result. We will then have to resort to more technical tools and to a fine analysis of the high energy eigenvalues of $H$. We recall that  $N_T(\mu_T)$ is defined as the largest integer such that
\be \label{defal}
\lambda_{N_T(\mu_T)} \leq T (-\beta_--\mu_T /T) = : \alpha(T),
\ee
and that $\mu_T \leq \mu_M(T)= T \beta_--\lambda_0$. Our first task will be to show the following result.
\begin{lemma}
The following limits hold
\be \label{limalpha}
\lim_{T\to \infty} \alpha(T)=+\infty\quad\textrm{and}\quad \lim_{T \to \infty} \alpha(T)/T=0.
\ee
\end{lemma}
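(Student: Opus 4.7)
The plan is to work from the inequality $\gamma_T\le-\beta_-$ and the monotonicity of $\gamma_T$, together with the normalization identity $\bar n=\sum_{j=0}^{N_T(\mu_T)}\xi_T(\lambda_j+\mu_T)$. Since $\alpha(T)/T=-\beta_--\gamma_T+\lambda_0/T$, proving $\alpha(T)/T\to 0$ is the same as showing $\gamma_T\to-\beta_-$. The nondecreasing function $\gamma_T$ (see \eqref{eq:defgamma_T}) is bounded above by $-\beta_-$, so it admits a limit $\gamma_\infty\in(-\infty,-\beta_-]$ as $T\to\infty$.

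First I would rule out $\gamma_\infty<-\beta_-$ by contradiction. In that case, for $T$ large, $-\beta_--\mu_T/T$ stays bounded below by some $c>0$, so $\alpha(T)\ge cT/2$, and since $\lambda_j\to\infty$, we have $\lambda_J\le\alpha(T)$ for every fixed $J$ once $T$ is large enough; equivalently $N_T(\mu_T)\ge J$ eventually. Passing to the limit in each individual term, $\xi_T(\lambda_j+\mu_T)=\xi(\lambda_j/T+\mu_T/T)\to\xi(\gamma_\infty)>0$ since $\xi$ is continuous and $\gamma_\infty<-\beta_-$. Therefore, using the finite lower bound
\begin{equation*}
\bar n=\sum_{j=0}^{N_T(\mu_T)}\xi_T(\lambda_j+\mu_T)\ge \sum_{j=0}^{J}\xi_T(\lambda_j+\mu_T),
\end{equation*}
and letting $T\to\infty$ followed by $J\to\infty$, we would obtain $\bar n\ge(J+1)\xi(\gamma_\infty)\to+\infty$, a contradiction. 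This forces $\gamma_\infty=-\beta_-$ and hence $\alpha(T)/T\to 0$.

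For the divergence $\alpha(T)\to+\infty$, I would again argue by contradiction: suppose there is a sequence $T_k\to\infty$ along which $\alpha(T_k)\le M<\infty$. Because $\lambda_j\to\infty$, there is an integer $N$ with $\lambda_{N+1}>M$, and the defining property $\lambda_{N_{T_k}(\mu_{T_k})}\le\alpha(T_k)\le M$ yields $N_{T_k}(\mu_{T_k})\le N$ for all $k$. Now combine with the first part: since $\mu_{T_k}/T_k\to-\beta_-$, for each fixed $j\le N$ we have $(\lambda_j+\mu_{T_k})/T_k\to-\beta_-$, so by continuity of $\xi$ at $-\beta_-$ with $\xi(-\beta_-)=0$, each term $\xi_{T_k}(\lambda_j+\mu_{T_k})\to 0$. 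Therefore
\begin{equation*}
\bar n=\sum_{j=0}^{N_{T_k}(\mu_{T_k})}\xi_{T_k}(\lambda_j+\mu_{T_k})\le\sum_{j=0}^{N}\xi_{T_k}(\lambda_j+\mu_{T_k})\underset{k\to\infty}\longrightarrow 0,
\end{equation*}
which is incompatible with $\bar n>0$. Hence $\liminf_{T\to\infty}\alpha(T)=+\infty$, i.e.\ $\alpha(T)\to+\infty$.

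The main conceptual obstacle is that $\alpha(T)$ is not obviously monotone in $T$, so the two limits cannot be obtained by a simple monotone convergence argument; the key mechanism is instead that the normalization constraint $\bar n$ must be matched by a finite sum whose individual terms collapse to $\xi(-\beta_-)=0$, which forces the rank $N_T(\mu_T)$ to grow and, jointly with the upper bound $\gamma_T\le-\beta_-$, pins down both the rate $\alpha(T)/T\to 0$ and the divergence $\alpha(T)\to+\infty$. The only subtle point in execution is ensuring continuity/lower semicontinuity when extracting the Fatou-type estimate in the first step, which follows directly from the continuity and strict monotonicity of $\xi$ on $(-\beta_+,-\beta_-)$.
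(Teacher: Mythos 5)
Your proof is correct and follows essentially the same route as the paper: both limits are extracted from the normalization identity $\bar n=\sum_{j=0}^{N_T(\mu_T)}\xi_T(\lambda_j+\mu_T)$, which is forced to diverge if $\gamma_T\not\to-\beta_-$ (since then infinitely many terms stay bounded below by $\xi(\gamma_\infty)>0$) and to vanish if $\alpha(T)$ stayed bounded (finitely many terms, each tending to $\xi(-\beta_-)=0$). Your subsequence argument for the second limit is in fact slightly more careful than the paper's, which asserts that $\alpha(T)$ itself admits a limit even though $\alpha(T)=T(-\beta_--\gamma_T)+\lambda_0$ is not obviously monotone; otherwise the two proofs coincide.
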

%For this, we show that 
%\begin{equation*}
%\mu_T/T\underset{T\to+\infty}{\sim}\mu_M(T)/T=-\beta_--\lambda_0/T.
%\end{equation*}
\begin{proof}
\noindent\textit{Step 1:} we begin with $\lim_{T \to \infty} \alpha(T)/T=0$. The limit of $\mu_T /T$ exists since it is nondecreasing as seen in the case $\beta_-$ infinite.  Suppose that $\mu_T /T$ converges to a constant $M<-\beta_-$, and therefore we have $\xi(M)>0$. Then $\alpha(T) \to +\infty$ according to its definition, and since $\xi$ is decreasing, 
$$
\bar n = \sum_{j=0}^{N_T(\mu_T)} \xi_T(\lambda_j+\mu_T) \geq  \sum_{j=0}^{N_T(\mu_T)} \xi(\lambda_j/T+M).
$$
Since $\alpha(T) \to +\infty$, we have $N_T(\mu_T) \to +\infty$ when $T \to +\infty$, and this leads to a contradiction according to Fatou's lemma:
$$
\bar n \geq \sum_{j=0}^\infty \liminf_{T\to \infty} \un_{j \leq N_T(\mu_T)} \xi(\lambda_j/T+M) = \sum_{j=0}^\infty \xi(M) =  +\infty.
$$
Hence, $\mu_T/T$ converges to $-\beta_-$ and $\lim_{T\to+\infty}\alpha(T)/T = 0$ as announced. 

\noindent\textit{Step 2:} we address now the limit $\lim_{T\to+\infty}\alpha : =\alpha_\infty$. It is undetermined since we have no information about the rate of convergence of $\mu_T /T$. Note that this limit exists as $\mu_T/T$ increases monotically to $-\beta_-$. There are then two possibilities: since $\alpha(T) \geq 0$, either $\alpha_\infty$ is infinite, or it is finite. Suppose the latter case holds. We then necessarily have $\alpha_\infty \geq \lambda_0$ from \fref{defal}, which shows that the limit of $N_T(\mu_T)$ is not empty and equal to some finite integer $N_\infty$ that depends on $\alpha_\infty$. Moreover, we have just seen that for $j$ fixed,
$$
\lim_{T\to \infty} \left(\frac{\lambda_j}{T}+\frac{\mu_T}{T}\right)=-\beta_-.
$$
Hence,
$$
\lim_{T\to \infty} \xi_T(\lambda_j+\mu_T) =\xi(-\beta_-)=0,
$$
which shows that
$$
\bar n = \lim_{T\to \infty} Z_T(\mu_T)=\sum_{j=0}^{N_\infty} \xi(-\beta_-)=0.
$$
There is a contradiction here, and as consequence $\alpha_\infty$ has to be infinite, and therefore $N_T(\mu_T)$ tends the infinity as $T\to \infty$.
\end{proof}

Owing to \fref{limalpha}, our goal is now to characterize the limit of $\mathrm{E}(T)$. This cannot be done by directly passing to the limit in the definition of the energy since $\xi_T(\lambda_j+\mu_T) \to \xi(-\beta_-)=0$ while the number of terms in the sum grows to infinity. We have the following result.

\begin{lemma}
There exists a constant $C>0$ such that
\begin{equation*}
\lim_{T\to+\infty} \frac{E(T)}{\alpha(T)} \geq C.
\end{equation*}
\end{lemma}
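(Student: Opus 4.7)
The plan is to leverage assumption \fref{assbet}, which controls $\beta'$ near $0$, to convert the quantities $\xi_T(\lambda_j+\mu_T)$ into concrete powers of $(\alpha(T)-\lambda_j)/T$, and then to couple this with semiclassical Riesz-mean asymptotics for $H=-\Delta+V$ with $V(x)=1+|x|^\theta$. Since $(\lambda_j+\mu_T)/T=-\beta_--(\alpha(T)-\lambda_j)/T$ lies in $[-\beta_--\alpha(T)/T,-\beta_-]$, and $\alpha(T)/T\to 0$ by the previous lemma, the arguments of $\xi$ all sit in a small left-neighborhood of $-\beta_-$ for $T$ large. Inverting \fref{assbet} then yields the two-sided bound
\begin{equation*}
\left(\frac{\alpha(T)-\lambda_j}{c_+ T}\right)^{1/r}\le \xi_T(\lambda_j+\mu_T)\le \left(\frac{\alpha(T)-\lambda_j}{c_- T}\right)^{1/r},\qquad 0\le j\le N_T(\mu_T),
\end{equation*}
valid for all sufficiently large $T$.

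I would then introduce the Riesz means $R_s(\alpha):=\sum_{j}(\alpha-\lambda_j)_+^s$ of $H$ and invoke the Weyl / semiclassical asymptotic for the confining potential $V(x)=1+|x|^\theta$, yielding $R_s(\alpha)\sim C(s)\alpha^{s+\kappa}$ as $\alpha\to+\infty$, where $\kappa:=d/2+d/\theta$ and $C(s)=C_{d,\theta}\int_{\mathbb{R}^d}(1-|y|^\theta)_+^{s+d/2}dy>0$ is \emph{strictly} decreasing in $s$ (since $(1-|y|^\theta)_+<1$ on a set of positive measure). Applying the upper bound to the trace identity $\sum_j \xi_T(\lambda_j+\mu_T)=\bar n$ gives
\begin{equation*}
\bar n\le (c_- T)^{-1/r} R_{1/r}(\alpha(T))\sim (c_-T)^{-1/r}C(1/r)\alpha(T)^{1/r+\kappa},
\end{equation*}
which forces a lower bound $\alpha(T)^{1/r+\kappa}\ge c_\star T^{1/r}$ for some $c_\star>0$ and $T$ large.

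For the energy, using the elementary identity $\lambda_j(\alpha-\lambda_j)_+^{1/r}=\alpha(\alpha-\lambda_j)_+^{1/r}-(\alpha-\lambda_j)_+^{1+1/r}$, the lower $\xi_T$ bound above yields
\begin{equation*}
E(T)\ge \frac{1}{(c_+T)^{1/r}}\bigl[\alpha(T)R_{1/r}(\alpha(T))-R_{1+1/r}(\alpha(T))\bigr]\sim \frac{C(1/r)-C(1+1/r)}{(c_+T)^{1/r}}\,\alpha(T)^{1+1/r+\kappa},
\end{equation*}
where the bracket $C(1/r)-C(1+1/r)>0$ by the strict monotonicity of $C(\cdot)$. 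Dividing by $\alpha(T)$ and substituting the lower bound $\alpha(T)^{1/r+\kappa}\ge c_\star T^{1/r}$ produces
\begin{equation*}
\frac{E(T)}{\alpha(T)}\ge \frac{C(1/r)-C(1+1/r)}{c_+^{1/r}\,C(1/r)}\,c_-^{1/r}\bar n\,(1+o(1))=:C>0,
\end{equation*}
which is the desired conclusion.

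The main obstacle is justifying the Weyl asymptotic $R_s(\alpha)\sim C(s)\alpha^{s+\kappa}$ for $H_0+V$ with $V=1+|x|^\theta$ in the precise form needed here: the assumption $(\gamma/(1-\gamma)-d/2)\theta>d$ is exactly $\kappa<\gamma/(1-\gamma)$, which guarantees integrability in the Lieb–Thirring / Dolbeault–Loss inequality \fref{sumeig} used earlier and, more importantly, ensures that the two leading-order Riesz-mean coefficients $C(1/r)$ and $C(1+1/r)$ are both finite. The upper direction follows from the standard semiclassical Lieb–Thirring inequality applied to $H_0+V$, while the matching lower direction can be obtained either by coherent-state trial densities or by a direct semiclassical computation exploiting the scaling $x\mapsto \alpha^{1/\theta}x$, which turns $(\alpha-V(x))_+^{s+d/2}$ into $\alpha^{s+d/2+d/\theta}(1-|y|^\theta)_+^{s+d/2}$. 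Once this asymptotic is in hand, the rest of the argument is an algebraic interplay between the trace constraint and the energy formula.
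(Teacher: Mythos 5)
Your proposal follows essentially the same route as the paper: invert the H\"older condition \fref{assbet} to bound $\xi_T(\lambda_j+\mu_T)$ above and below by powers of $(\alpha(T)-\lambda_j)/T$, recognize the resulting sums in the trace constraint and in the energy as Riesz means of $H$, apply the Weyl asymptotics for $V=1+|x|^\theta$ to relate $T$ and $\alpha(T)$, and exploit the strict inequality between the semiclassical coefficients at orders $s=1/r$ and $s+1$. The one imprecision is your formula $C(s)=C_{d,\theta}\int(1-|y|^\theta)_+^{s+d/2}dy$ with an $s$-independent prefactor: the semiclassical constant is $\Gamma(s+1)/\big((4\pi)^{d/2}\Gamma(s+1+\tfrac d2)\big)$ and does depend on $s$, so strict decrease of the integral factor alone does not by itself yield $C(1/r)>C(1+1/r)$; the paper instead computes the ratio $\kappa_{s,d}=C(s)/C(s+1)=\frac{1+s+\frac d2+\frac d\theta}{s+1}$ explicitly and checks that it exceeds $1$, which confirms your conclusion (both factors happen to be decreasing in $s$).
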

\begin{proof}
We remark first that for $y>0$ sufficiently small, we have from \fref{assbet},
\be \label{hold}
\xi_- y^{s} \leq \xi(-\beta_--y) \leq \xi_+ y^{s}, \qquad s=1/r>0,
\ee
with $\xi_\pm=c_\mp^{-s}$. By using \eqref{defal}, we rewrite $\xi_T(\lambda_j+\mu_T)$ as $\xi(-\beta_- - (\alpha(T)-\lambda_j)/T)$. Since we know that $\alpha(T)/T \to 0$ and that
$$
0 \leq (\alpha(T)-\lambda_j)/T \leq (\alpha(T)-\lambda_0)/T,
$$
we can choose $T$ sufficiently large so that $(\alpha(T)-\lambda_j)/T$ is arbitrary small for all $j \leq N_T(\mu_T)$. Then \fref{hold} yields
$$
\xi_- T^{-s} \sum_{j=0}^{N_T(\mu_T)} (\alpha(T)-\lambda_i)^s \leq \bar n \leq \xi_+ T^{-s} \sum_{j=0}^{N_T(\mu_T)} (\alpha(T)-\lambda_i)^s.
$$
The core to the proof is to estimate
\begin{equation*}
\sum_{j=0}^{N_T(\mu_T)} (\alpha(T)-\lambda_i)^s = \Tr\big( |\alpha(T)-H|^s_+ \big),
\end{equation*}
as $T \to \infty$. Above, $|x|_+$ denotes the positive part of $x$. Such quantities are called Riesz means in the literature (see e.g. \cite{HelfferRobert}). Their asymptotic behavior is well-known, and is obtained in terms of the quantity
$$
W_s(E):=\int_{\Rm^d} |E-V(x)|_+^{s+\frac{d}{2}} dx,
$$
leading to the following Weyl asymptotics, or limiting Lieb-Thirring equality,
\be \label{LT}
\lim_{E \to \infty} \frac{\Tr\big( |E-H|^s_+ \big)}{W_s(E)}= C_{s,d} : = \frac{\Gamma(s+1)}{(4 \pi)^{d/2}\Gamma(s+1+\frac{d}{2})},
\ee
where $\Gamma$ is the gamma function. The above relation is obtained by introducing the number of eigenvalues $\lambda_j[H]$ of $H=H_0+V$ less than $E$, defined by
$$
N(E,V)=\#\{ j \in \Nm: \; \lambda_j[H_0+V] \leq E\}.
$$
For the homogeneous potential $V_0(x)=|x|^\theta$, $\theta>0$, the following asymptotic formula holds for $N(E,V_0)$ (see \cite[Chapter 9 and 11]{RozBook})
$$ \ds
\lim_{E \to \infty }\frac{N(E,V_0)}{\ds \int_{\Rm^d} |E-V_0(x)|_+^{\frac{d}{2}} \, dx}= \frac{1}{(4 \pi)^{d/2}\Gamma(1+\frac{d}{2})}.
$$
 With our choice of potential $V(x)=1+V_0(x)$, a similar formula naturally holds by replacing $E$ by $E-1$. The result \fref{LT} then follows from the classical relation (see e.g. \cite{liebthirring})
$$
\Tr\big( |E-H|^s_+ \big)=s \int_0^{+\infty} y^{s-1} N(E-y,V_0) dy.
$$
We have now everything needed to conclude. 
By using \fref{LT}, we find that
$$
\lim_{T \to \infty }\frac{\bar n T^s }{\Tr\big( |\alpha(T)-H|^s_+ \big)}=\lim_{T \to \infty }\frac{\bar n T^s W_s(\alpha(T)) }{\Tr\big( |\alpha(T)-H|^s_+ \big) W_s(\alpha(T))}=\lim_{T \to \infty }\frac{\bar n T^s }{C_{s,d} W_s(\alpha(T))},
$$
and, with \fref{hold}, we find
$$
\xi_- \Tr\big( |\alpha(T)-H|^s_+ \big) \leq  \bar nT^s  \leq \xi_+ \Tr\big( |\alpha(T)-H|^s_+ \big),
$$
which leads to
$$
\frac{C_{s,d} \xi_-}{ \bar n} \leq \lim_{T \to \infty }T^s \left(W_s(\alpha(T))\right)^{-1} \leq \frac{C_{s,d} \xi_+}{ \bar n}
$$
This allows us to relate $T$ and $\alpha(T)$. Indeed, exploiting that $V_0(x)=|x|^\theta$, we have
$$
W_s(\alpha(T))=[\alpha(T)+1]^{s+(1+\frac{2}{\theta})\frac{d}{2}} \, W_s(1),
$$
which gives
\be\label{fi1}
\frac{ \bar n}{W_s(1) C_{s,d} \xi_+} \leq \lim_{T \to \infty }\frac{\alpha(T)^{s+(1+\frac{2}{\theta})\frac{d}{2}}}{T^s}\leq \frac{ \bar n}{W_s(1) C_{s,d} \xi_-}.
\ee
We now turn to the energy and use similar arguments. We have, for $T$ sufficiently large, 
\begin{equation}\label{eq:lowerTenergy}
G(T) : = \xi_- T^{-s} \sum_{j=0}^{N_T(\mu_T)} \lambda_j (\alpha(T)-\lambda_j)^s \leq \mathrm{E}(T),
\end{equation}
and rewrite the term on the left as
\begin{align*}
\sum_{j=0}^{N_T(\mu_T)} \lambda_j (\alpha(T)-\lambda_j)^s&=\alpha(T)\sum_{j=0}^{N_T(\mu_T)}(\alpha(T)-\lambda_j)^s-\sum_{j=0}^{N_T(\mu_T)} (\alpha(T)-\lambda_j)^{1+s}
\\ &= \alpha(T)  \Tr\big( |\alpha(T)-H|^s_+ \big) - \Tr\big( |\alpha(T)-H|^{s+1}_+ \big).
\end{align*}
Proceeding as above, we find
$$
\lim_{T\to+\infty}\frac{\alpha(T) \Tr\big( |\alpha(T)-H|^s_+ \big)}{[\alpha(T)]^{s+1+(1+\frac{2}{\theta})\frac{d}{2}}} =   C_{s,d}\, W_s(1),
$$
and
$$
\lim_{T\to+\infty}\frac{\Tr\big( |\alpha(T)-H|^{s+1}_+ \big)}{[\alpha(T)]^{s+1+(1+\frac{2}{\theta})\frac{d}{2}}} =   C_{s+1,d}\, W_{s+1}(1),
$$
In particular, this yields
\begin{equation*}
\lim_{T\to+\infty}\frac{\alpha(T) \Tr\big( |\alpha(T)-H|^s_+ \big)} {\Tr\big( |\alpha(T)-H|^{s+1}_+ \big)}=   \frac{C_{s,d}\, W_s(1)}{C_{s+1,d}\, W_{s+1}(1)} : = \kappa_{s,d},
\end{equation*}
where $\kappa_{s,d}>1$. Indeed, a simple calculation shows that
\begin{align*}
W_s(1) &= \int_{\mathbb{R}^d} |1- V_0(x)|_+^{s+d/2}dx = \omega_{d-1} \int_0^{+\infty} |1 - r^{\theta}|_+^{s+d/2} r^{d-1} dr
\\ &=\omega_{d-1} \int_0^1 (1-r^{\theta})^{s+d/2}r^{d-1}dr = \omega_{d-1} \frac{\Gamma\left(1+s+d/2\right)\Gamma\left(d/\theta\right)}{\theta\Gamma\left(1+s+(1+2/\theta)d/2\right)},
\end{align*}
where $\omega_{d-1}$ is the surface area of the $d-1$ sphere of radius $1$. It follows that
\begin{equation*}
\kappa_{s,d} = \frac{s+1+\frac{d}2}{s+1}\frac{1+s+\frac{d}{2}+\frac{d}{\theta}}{s+1+\frac{d}{2}}=\frac{1+s+\frac{d}{2}+\frac{d}{\theta}}{s+1}>1.
\end{equation*}
Finally, we find from \eqref{eq:lowerTenergy} and \eqref{fi1} that
\begin{align*}
\lim_{T\to+\infty} \frac{\mathrm{E}(T)}{\alpha(T)} &\geq \lim_{T\to+\infty} \frac{G(T)}{\alpha(T)} =\xi_-(\kappa_{s,d} -1) \lim_{T\to+\infty} \frac{\Tr\big( |\alpha(T)-H|^{s+1}_+ \big)}{T^s\alpha(T)}
\\ &\geq \xi_-(\kappa_{s,d} -1)C_{s+1,d}\, W_{s+1}(1)\lim_{T\to+\infty} \frac{\alpha(T)^{s+(1+\frac2{\theta})\frac{d}2}}{T^s}\geq \frac{\kappa_{s,d} -1}{\kappa_{s,d} }\frac{\xi_- \bar n}{\xi_+},
\end{align*}
which concludes the proof since $\kappa_{s,d}>1$.
\end{proof}

This ends the proof of \eqref{eq:EnergyUpperLimit} since $\alpha(T) \to \infty$ as $T\to \infty$. 
\begin{remark}
Note that we also have an upper bound which, thanks to  \fref{fi1}, shows that 
\begin{equation*}
\mathrm{E}(T)\underset{T\to+\infty}{\sim}\alpha(T) \underset{T\to+\infty}{\sim} T^{\frac{1}{1+(1+\frac{2}{\theta})\frac{d}{2s}}}.
\end{equation*}
\end{remark}

\bibliographystyle{plain}
\bibliography{bibliography.bib}

\end{document}